\documentclass[12pt]{article}

\RequirePackage{amsthm,amsmath,amsfonts,amssymb}
\RequirePackage[colorlinks,citecolor=blue,urlcolor=blue]{hyperref}
\RequirePackage{graphicx}

\usepackage{float}
\usepackage{enumitem}
\usepackage{url}
\usepackage{bbm} 
\usepackage[nottoc]{tocbibind}
\usepackage{booktabs,caption,subcaption, natbib} 
\usepackage[flushleft]{threeparttable}
\usepackage{multirow}

\usepackage[pdftex,dvipsnames]{xcolor}
\usepackage{comment} 
\usepackage[margin=1in]{geometry}
\usepackage{accents}
\usepackage{xcolor}
\usepackage{pgfplots}
\usepackage{tikz}
\pgfplotsset{compat = 1.18}
\usepackage[nodisplayskipstretch]{setspace}

\newtheorem{theorem}{Theorem}
\newtheorem{corollary}{Corollary}[theorem]

\pgfmathdeclarefunction{normal}{2}{
\pgfmathparse{
    1/(sqrt(2*pi*#2))*exp(-1/2*((x-#1)^2)/#2)}%
}

\DeclareMathAccent{\wtilde}{\mathord}{largesymbols}{"65}

\newcommand{\bA}{\boldsymbol{A}}

\newcommand{\bB}{\boldsymbol{B}}
\newcommand{\Bias}{\mbox{Bias}}

\newcommand{\bC}{\boldsymbol{C}}

\newcommand{\bD}{\boldsymbol{D}}

\newcommand{\bG}{\boldsymbol{G}}

\newcommand{\bI}{\boldsymbol{I}}
\newcommand{\bK}{\boldsymbol{K}}

\newcommand{\bM}{\boldsymbol{M}}

\newcommand{\bone}{\boldsymbol{1}}

\newcommand{\bQ}{\boldsymbol{Q}}

\newcommand{\bR}{\boldsymbol{R}}

\newcommand{\bu}{\boldsymbol{u}}

\newcommand{\bx}{\boldsymbol{x}}
\newcommand{\bX}{\boldsymbol{X}}

\newcommand{\by}{\boldsymbol{y}}
\newcommand{\bY}{\boldsymbol{Y}}

\newcommand{\bZ}{\boldsymbol{Z}}
\newcommand{\bzero}{\boldsymbol{0}}

\newcommand{\bbeta}{\boldsymbol{\beta}}

\newcommand{\bepsilon}{\boldsymbol{\epsilon}}

\newcommand{\bgamma}{\boldsymbol{\gamma}}

\newcommand{\blambda}{\boldsymbol{\lambda}}

\newcommand{\bmu}{\boldsymbol{\mu}}

\newcommand{\bpi}{\boldsymbol{\pi}}
\newcommand{\bPi}{\boldsymbol{\Pi}}

\newcommand{\bsigma}{\boldsymbol{\sigma}}
\newcommand{\bSigma}{\boldsymbol{\Sigma}}

\newcommand{\btheta}{\boldsymbol{\theta}}

\newcommand{\hbbeta}{\boldsymbol{\widehat{\beta}}}
\newcommand{\tbbeta}{\boldsymbol{\tilde{\beta}}}

\newcommand{\hbtheta}{\boldsymbol{\widehat{\theta}}}
\newcommand{\tbgamma}{\boldsymbol{\tilde{\gamma}}}

\newcommand{\hc}{\widehat{c}}
\newcommand{\scaledXTX}{\bX^{\top}\bSigma^{-1}\bX}
\newcommand{\scaledXTMX}{\bX^{\top}\bM\bSigma^{-1}\bM\bX}

\newcommand{\scaledXTY}{\bX^{\top}\bSigma^{-1}\bY}
\newcommand{\scaledXTMY}{\bX^{\top}\bM\bSigma^{-1}\bM\bY}

\newcommand{\scaledXTXj}{\frac{1}{\sigma_j^2}\bX_j^{\top}\bX_j}

\newcommand{\parDeriv}[2]{\frac{\partial #1}{\partial #2}}

\newcommand{\tr}{\mathsf{tr}}
\newcommand{\EV}{\mathsf{E}}
\newcommand{\Var}{\mathsf{Var}}
\newcommand{\Cov}{\mathsf{Cov}}
\newcommand{\MSE}{\mathsf{MSE}}
\newcommand{\UMSE}{\mathsf{UMSE}}
\newcommand{\BMSE}{\mathsf{BMSE}}

\newcommand{\mle}{\textsf{MLE}}
\newcommand{\ham}{\textsf{HAM}}
\newcommand{\fe}{\textsf{FE}}
\newcommand{\ridge}{\textsf{R}}
\newcommand{\sub}{\textsf{S}}

\title{\textbf{Redefining shared information: a heterogeneity-adaptive framework for meta-analysis}}

\author{Elizabeth M. Davis and Emily C. Hector\\
Department of Biostatistics, University of Michigan}
\date{}

\begin{document}

\maketitle

\begin{abstract} 
Meta-analytic methods tend to take all-or-nothing approaches to study-level heterogeneity, assuming all studies are heterogeneous or homogeneous, leading to inefficiency and/or bias in estimation and inference. 
In this paper, we develop a heterogeneity-adaptive meta-analysis in linear models that adapts to the amount of \emph{information} shared between datasets. The primary mechanism for the information-sharing is a shrinkage of dataset-specific distributions towards a new ``centroid'' distribution through a Kullback-Leibler divergence penalty. The Kullback-Leibler divergence is uniquely geometrically suited for measuring relative information between datasets, and leads to relatively simple closed form estimators with intuitive interpretations. We establish our estimator's desirable inferential properties without assuming homogeneity of dataset parameters. Among other results, we show that our estimator has a provably smaller mean squared error than the dataset-specific maximum likelihood estimators, and establish asymptotically valid inference procedures. A comprehensive set of simulations highlights our estimator's versatility, and an analysis of data from the eICU Collaborative Research Database illustrates its performance in a real-world setting.
\end{abstract}

\noindent
{\it Keywords:} Data fusion, Information geometry, Linear model, Ridge regression, Stein shrinkage.

\section{Introduction}\label{sec:intro}

Meta-analyses estimate a single, shared parameter using summary statistics from different studies. Parameter heterogeneity across studies is fundamentally undesirable; in this case, a single unifying parameter may not exist or be interpretable. Prior work has mostly focused on limiting or controlling for heterogeneity. Meta-analyses begin with systematic literature reviews intended to exclude studies that may be measuring different parameters \citep{pigott_methodological_2020}. More recent work accomplishes this exclusion empirically, by down-weighting studies that have a different parameter from a shared parameter of interest \citep{xie_confidence_2011, wang_robust_2023, noma_robust_2024}. Additional heterogeneity may be incorporated by formally modeling the distribution of study-specific effects, as in random-effects meta-analysis. In random-effects meta-analysis, study parameters are conceptualized as randomly drawn from a population, and the meta-estimator is a function of the variance in the study parameters \citep{dersimonian_meta-analysis_1986}. In the absence of a common, shared parameter, these approaches do not provide meaningful inference.

We propose an alternative treatment of heterogeneity in meta-analysis using the linear model, one which shifts the focus from estimating a single, homogeneous parameter to improving estimation of individual study parameters by allowing adaptive information sharing across studies. Our approach requires neither the existence of a shared parameter nor a specified underlying generative model for the study parameters. The primary mechanism for the information-sharing is a shrinkage of study-specific distributions towards a new ``centroid'' distribution through a Kullback-Leibler divergence (KLD) penalty \citep{kullback_information_1951}. This centroid distribution, while not meaningfully interpretable in the presence of heterogeneity, serves a crucial role in linking the individual studies because its estimation introduces additional flexibility into the information-borrowing. Indeed, our formulation jointly estimates parameters of both the unknown study-specific distributions \emph{and} the centroid distribution. By equipping each study with its own shrinkage parameter and selecting the shrinkage parameters that minimize the mean squared error of the study-specific estimators, our framework borrows information adaptively and efficiently according to heterogeneity in the observed data.

In contrast to more frequently used measures of heterogeneity such as the Euclidean distance between parameters, we use the KLD of the study distributions from the centroid distribution. This difference is important because it accounts not only for differences between the parameters but \emph{also} for other important features of the data that induce heterogeneity, such as the error and covariate variances. The KLD is the appropriate divergence on the information space, and we argue that it is a more geometrically appropriate quantification of the similarity between study distributions than the more common Euclidean distance. We show that our use of the KLD to quantify the similarity between studies yields relatively simple parameter estimates with desirable and intuitive interpretations. In conjunction with the flexibility of our estimated centroid, this has important consequences for the statistical advantages of our estimator, such as asymptotic bias control and smaller mean squared errors. Notably, we show that our estimator outperforms the study-specific maximum likelihood estimators, in a mean squared error sense,  while also yielding asymptotically unbiased inference on the individual study parameters.

This paper is organized as follows. Section \ref{sec:literature} chronicles the evolution of meta-analysis and its responses to heterogeneity through to recent developments. Section \ref{sec:estimator} describes the model, presents a closed form solution for our estimator and develops an intuition for its behavior under different shrinkage parameter settings.
In Section \ref{sec:theory}, we investigate the theoretical properties of our estimator. We show that, similar to a Stein shrinkage result, it is \emph{always} better in a mean squared error sense to borrow information across studies, even in the presence of substantial heterogeneity. Under mild conditions, we also show that our estimator is consistent and we derive asymptotically valid inference procedures. Section \ref{sec:implementation} describes the data-driven selection of the amount of information borrowed, and the theoretical ``gap'' between the optimal information borrowed and its data-driven counterpart. Section \ref{sec:sim} examines the performance of our estimator in simulation studies. An analysis of data from the eICU Collaborative Research Database in Section \ref{sec:data} illustrates our estimator's behavior in a real-life setting. Lengthy derivations, proofs and additional numerical results are deferred to the supplement.

\section{Historical evolution of meta-analysis} \label{sec:literature}

Meta-analysis became a field for concentrated methodological development in the 1970s. Existing methods for synthesizing studies, including vote counting, proved insufficient to handle the growing body of research findings in education and psychology \citep{shadish_meta-analytic_2015}. \citet{glass1976}, coining the term \textit{meta-analysis}, advocated for the development of systematic practices to synthesize findings from primary studies and proposed averaging standardized effect sizes to estimate an underlying parameter of interest. Earlier work by \citet{cochran_combination_1954} in the context of agricultural experiments similarly proposed taking a weighted average of estimates, using the inverse of a study estimate's standard error as the weight. \citet{becker_synthesis_2007} expanded this procedure to multiple linear regression parameters. While this \textit{fixed-effect} meta-estimator is the best linear unbiased estimator when all study parameters are equal, the most prevalent critique is that it may include studies measuring different parameters, leading to misinterpretation of the meta-estimate \citep{eysenck_exercise_1978, rice_re-evaluation_2018, hector2024DataIntegration, gurevitch_meta-analysis_2018}.

The random-effects meta-analysis model was an early response to concerns of parameter heterogeneity and became one of the primary approaches to meta-analysis \citep{hedges_random_1983, dersimonian_meta-analysis_1986}. This approach views the study-specific parameters as sampled from a distribution centered around a shared parameter of interest. Then, the heterogeneity of the study parameters is due solely to this sampling mechanism and is incorporated into estimation of the shared parameter through random study effects. 
One limitation to this approach is that it often assumes that the study parameters are normally distributed about the shared parameter, an assumption that often does not hold in practice \citep{liu_normality_2023}. Bayesian approaches consider other underlying distributions \citep{lee_flexible_2008, noma_meta-analysis_2022} but the issue remains of identifying a suitable distribution for the random effects. Frequentist approaches have focused on consistent meta-estimation in the presence of outliers and individual studies that violate normality. For example, \citet{noma_robust_2024} estimate a shared effect by minimizing the density power divergence, which can be less sensitive to outliers than fixed- or random-effects methods. The fusion-extraction method proposed by \citet{wang_robust_2023} assumes a critical mass of included studies estimate a single parameter of interest; inconsistent study-specific estimators for that parameter are down-weighted. Similarly, \citet{xie_confidence_2011} proposes a confidence distribution-based method that leverages the central limit theorem to allow inference on a shared parameter when the majority of studies are drawn from the population of interest. In the multivariate context, \citet{liu_multivariate_2015} develop a method to combine confidence densities from different studies, but this technique requires that heterogeneity can be effectively ignored by transforming the heterogeneous parameter to a homogeneous one through known link functions.

Each of these methods focuses on preserving what we term \textit{general inference}: estimators target parameters for the population of studies from which the selected studies are assumed to be randomly drawn. General inference approaches require an underlying model linking the study parameters to an overarching common parameter of interest. In contrast, a newer approach to handling heterogeneity relies on what we term \textit{limited inference}. \textit{Limited inference} does not assume a distributional model for the study-specific parameters, and consequently, inference is limited to the populations represented by the included studies. For example, a limited inference approach proposed by \citet{xie_confidence_2011} allows for inference on the median of the study-specific parameters.  \citet{claggett_meta-analysis_2014} extends this method by connecting confidence distributions with bootstrap sampling to conduct inference on the ordered set of study-specific parameters. 

Existing meta-analytic methods face a trade-off: either they require strong assumptions about the structure of the heterogeneity, or their inference is generalizable only to the study populations. Recent work in transfer learning from \citet{hector_turning_2024} suggests a mechanism to calibrate the generality of inference according to the amount of heterogeneity by allowing partial borrowing based on a data-driven quantification of heterogeneity. They propose an ``information-based regularization'' where a target study estimate borrows differentially from a source estimate depending on the magnitude of the KLD from the source likelihood. This differential information borrowing does not require prespecifying a model that links the parameters of the two studies. By introducing a controlled amount of bias, \citet{hector_turning_2024} show that the optimal estimator (in a mean squared error sense) should always incorporate some information from the external study, even in the presence of unstructured heterogeneity. 

On the one hand, this result is promising because it suggests that some information borrowing may be useful in the meta-analysis context, even in the presence of heterogeneity. On the other hand, the existence of a similar result for meta-analysis is not obvious. The transfer learning goal is a fundamentally asymmetric inference focusing on a target study, while meta-analysis considers inference on all studies. Our introduction of a ``centroid'' distribution becomes crucial for linking the individual studies because it adapts to this asymmetry by effectively introducing a ``source'' distribution. The main difference between our centroid and a source distribution, however, is that the centroid is not fixed but rather needs to be estimated. This fundamental difference leads to both additional complications and intuitive and desirable interpretations. For example, as we will show in the next section, in some settings this estimated centroid yields the fixed-effect meta-estimator.

\section{The heterogeneity-adaptive estimator}\label{sec:estimator}

\subsection{Estimation with common covariates}\label{subsec:est_all}

Consider a scenario with $k$ independent studies, indexed $j = 1,\dots, k$, where each study collects $n_j$ observations consisting of an outcome vector $\bY_j\in \mathbb{R}^{n_j}$ and a matrix of covariates $\bX_j\in\mathbb{R}^{n_j\times p}$. We assume the covariates are associated with the outcomes in each study through the model $\bY_j \sim \mathcal{N}(\bX_j\bbeta_j, \sigma_j^2\bI_{n_j})$, with $\bI_{n_j}$ the $n_j\times n_j$ identity matrix. 
Let $\bX=\text{block-diag}\{\bX_1, \ldots, \bX_k\}$ be the block-diagonal assembly of covariate matrices $\bX_j$ with dimension $\sum_{j=1}^k n_j \times pk$ and $\bY=(\bY_1, \ldots, \bY_k)$ be the outcome vectors appended in the same order. Then, the joint model for all $k$ studies is
\begin{align*}
    \bY = \bX\bbeta + \bepsilon, \quad \bepsilon\sim \mathcal{N}(\bzero, \bSigma),
\end{align*}
where $\bSigma=\text{block-diag}\{ \sigma^2_1\bI_{n_1}, \ldots, \sigma^2_k \bI_{n_k}\}$ is the block-diagonal matrix of dataset-specific variances and $\bbeta=(\bbeta_1, \ldots, \bbeta_k) \in \mathbb{R}^{pk}$. Estimating $\bbeta$ using maximum likelihood estimation returns a stacked vector of the maximum likelihood estimators (MLEs) for the individual datasets, $\tbbeta = (\tbbeta_1,\dots, \tbbeta_k)'$, without any information shared between them; modeling in this way treats $\bbeta_j$ and $\bbeta_{j^\prime}$ for any studies $j$ and $j^\prime$ as though there is complete heterogeneity. In contrast, if we believe it is reasonable to assume perfect homogeneity so that $\bbeta_j = \bbeta_{j^\prime}$ for all $j, j^\prime\in \{1,\dots,k\}$, the MLE is the uniformly minimum variance unbiased estimator used in fixed-effects meta-analysis,
\begin{align} \label{eqn:fe}
\tbbeta_{\fe} &= (\scaledXTX)^{-1}\scaledXTY = (\scaledXTX)^{-1}\scaledXTX\tbbeta.
\end{align}

Clearly, there is a homogeneity/heterogeneity continuum between the assumptions that $\bbeta_j \neq \bbeta_{j^\prime}$ for all $j \neq j^\prime$ and $\bbeta_j = \bbeta_{j^\prime}$ for all $j \neq j^\prime$ that is not captured by $\tbbeta$ or $\tbbeta_{\fe}$. Further, the ends of the spectrum, i.e. the full homogeneity and full heterogeneity assumptions, treat all studies equally by assuming the same level of heterogeneity (no or full heterogeneity) for all parameters. In contrast, we propose an estimator for $\bbeta$ along the homogeneity/heterogeneity continuum that is fully adaptive in the sense that each study may be more or less heterogeneous than the others.

First, we encourage information-sharing between studies by shrinking estimates for $\bbeta_j$ from outcome distributions that are ``far'' from the other studies. We begin by defining a ``centroid'' outcome distribution for study $j$, $\mathcal{N}(\bX_j\btheta, \sigma_j^2\bI)$, where $\btheta\in \mathbb{R}^p$. As an algebraic simplification, the centroid distribution's variance is allowed to vary with dataset $j$. To measure the difference between using $\mathcal{N}(\bX_j\btheta, \sigma_j^2\bI_{n_j})$ and $\mathcal{N}(\bX_j\bbeta_j, \sigma_j^2\bI_{n_j})$ to model $\bY_j$, we seek a topologically appropriate measure that, in contrast to the Euclidean distance between parameters, accounts for other features of the distributions, such as the error and covariate variances. To compare the distributions, we use the Kullback-Leibler divergence (KLD) \citep{amari_information_2020, kullback_information_1951}. The KLD is the unique Bregman divergence on the manifold of Gaussian distributions, and it measures the relative entropy of a distribution $\mathcal{N}(\bX_j\bbeta_j, \sigma_j^2\bI_{n_j})$ with respect to another distribution $\mathcal{N}(\bX_j\btheta, \sigma_j^2\bI_{n_j})$. It also measures the excess surprise from using $\mathcal{N}(\bX_j\btheta, \sigma_j^2\bI_{n_j})$ over $\mathcal{N}(\bX_j\bbeta_j, \sigma_j^2\bI_{n_j})$, when $\mathcal{N}(\bX_j\bbeta_j, \sigma_j^2\bI_{n_j})$ is the true distribution. 
Since a manifold is locally equivalent to a Euclidean space, small divergences between the distributions are equivalent to small distances between the parameters themselves. This deformation only holds locally, however, so that large divergences between $\mathcal{N}(\bX_j\btheta, \sigma_j^2\bI_{n_j})$ and $\mathcal{N}(\bX_j\bbeta_j, \sigma_j^2\bI_{n_j})$ do not necessarily equate to large distances between $\btheta$ and $\bbeta_j$. 

To estimate $\bbeta$, we maximize the joint log-likelihood subject to the constraint that all study-specific distributions be close in relative entropy to the centroid distribution, or equivalently
\begin{align}\label{eqn:obj_func}
   O(\bY, \bX; \bbeta, \btheta, \bsigma^2) =  
    \sum_{j=1}^k \Bigl\{-\frac{1}{2\sigma^2_j}\|\bY_j-\bX_j\bbeta_j \|^2 
    - \Bigl(\frac{\pi_j}{1-\pi_j}\Bigr)\Bigl(\frac{1}{2\sigma_j^2}\Bigr)\|\bX_j\bbeta_j - \bX_j\btheta\|^2\Bigr\},
\end{align}
where the shrinkage parameter for each study is $\pi_j/(1-\pi_j)$, $\pi_j\in [0,1]$. At $\pi_j=1$, the penalty is infinite and the estimate for $\bbeta_j$ is required to equal the estimate for $\btheta$. For simplicity of exposition, we treat $\sigma_j^2$ as known in all derivations; in practice, we estimate it using its MLE from study $j$. Define $\bpi = (\pi_1, \dots, \pi_k)^\prime$, the vector of shrinkage parameters. We give the closed-form estimators that maximize $O(\bY, \bX; \bbeta, \btheta, \bsigma^2)$ below, with the non-trivial derivations relegated to Section \ref{appsec:derivation} of the supplement. The centroid that maximizes this objective function is 
\begin{align}\label{eqn:centroid}   
    \hbtheta(\bpi) &= \Bigl(\sum_{j=1}^k \frac{\pi_j}{\sigma_j^2}\bX_j^\prime\bX_j\Bigr)^{-1}\sum_{j=1}^k \frac{\pi_j}{\sigma_j^2}\bX_j^\prime\bY_j = \Bigl(\sum_{j=1}^k \frac{\pi_j}{\sigma_j^2}\bX_j^\prime\bX_j\Bigr)^{-1}\sum_{j=1}^k \frac{\pi_j}{\sigma_j^2}\bX_j^\prime\bX_j\tbbeta_j,
\end{align}
and study $j$'s estimator for $\bbeta_j$, $\hbbeta_j(\bpi)$, can be written as a convex combination of the MLE, $\tbbeta_j$, and the centroid estimator:
\begin{align}\label{eqn:hbbeta_j}
    \hbbeta_j(\bpi) &= (1-\pi_j)\tbbeta_j + \pi_j\hbtheta(\bpi).
\end{align} 
The full vector of $\hbbeta(\bpi) = \{\hbbeta_1(\bpi)^\prime, \dots, \hbbeta_k(\bpi)^\prime\}^\prime$ is 
\begin{align}\label{eqn:hbbeta_reexpr}
    \hbbeta(\bpi) &=(\bI_{pk} - \bPi)\tbbeta + \bPi\bK\hbtheta(\bpi)  
    = \tbbeta + \bPi\{\bK\bA(\bpi) - \bI_{pk}\}\tbbeta, 
\end{align}
where $\bK=\bone_k \otimes \bI_p$ is a matrix which stacks the vector to its right with $\otimes$ the Kronecker product, $\bPi=\text{block-diag}\{\pi_j \bI_p\} \in \mathbb{R}^{pk \times pk}$, and $\bA(\bpi)=(\bK^\prime\bPi\scaledXTX\bK)^{-1}\bK^\prime\bPi\scaledXTX$.  

Both $\hbbeta(\bpi)$ and $\hbtheta(\bpi)$ are extremely flexible because they shift with the choice of shrinkage parameter $\bpi$. The dependence of $\hbtheta(\bpi)$ on the shrinkage parameters $\bpi$ makes the centroid capable of representing the space of heterogeneity patterns. This includes representation of some recognizable meta-estimators. When $\pi_j = \pi_{j^\prime}$ for all $j\neq j^\prime$, $\hbtheta(\bpi)=\tbbeta_{\fe}$ the fixed-effect meta-estimator in equation \eqref{eqn:fe}. If $\bpi$ is zero for $k-x$ studies and non-zero and equal for the remaining $x$ studies ($x \in \mathbb{Z}, 0\leq x \leq k$), then $\hbtheta(\bpi)$ is the fixed-effect meta-estimator for the studies with non-zero $\pi_j$. If $\bpi$ contains zeroes for all studies except the $j$\textsuperscript{th}, then $\hbtheta(\bpi) = \tbbeta_{j}$. Panel A in Figure \ref{fig:combined_ex} shows the possible values of $\hbtheta(\bpi)$ given the maximum likelihood estimates $\tbbeta_{j}$ and information matrices for four studies with $p=2$ covariates. 

The estimates $\hbbeta_j(\bpi)$ for the study-specific parameters also depend on $\bpi$. Panels B and C of Figure \ref{fig:combined_ex} show the set of feasible estimates $\hbbeta(\bpi)$ for the same data as displayed in panel A, with examples for changes in $\bpi$. Each corner is the MLE for a particular study. In panel B, we see that a move from $\bpi = (0.2, 0.2, 0.2, 0.2)$ to $\bpi = (0.8, 0.8, 0.8, 0.8)$ pulls all estimates for $\hbbeta_j(\bpi)$ towards the fixed-effect meta-estimator $\tbbeta_{\fe}$. In panel C, which displays a move from $\bpi = (0.5, 0.5, 0.5, 0)$ to $\bpi = (0.5, 0.5, 0.5, 0.3)$, an increase in the contribution from study 4 shifts the estimates for studies 1, 2, and 3 towards $\tbbeta_4$. We aim to select $\bpi$ to minimize the mean squared error (MSE) of $\hbbeta(\bpi)$, which we discuss further in Section \ref{sec:theory}.

\begin{figure}[ht]
    \centering
    \includegraphics[width=0.95\linewidth]{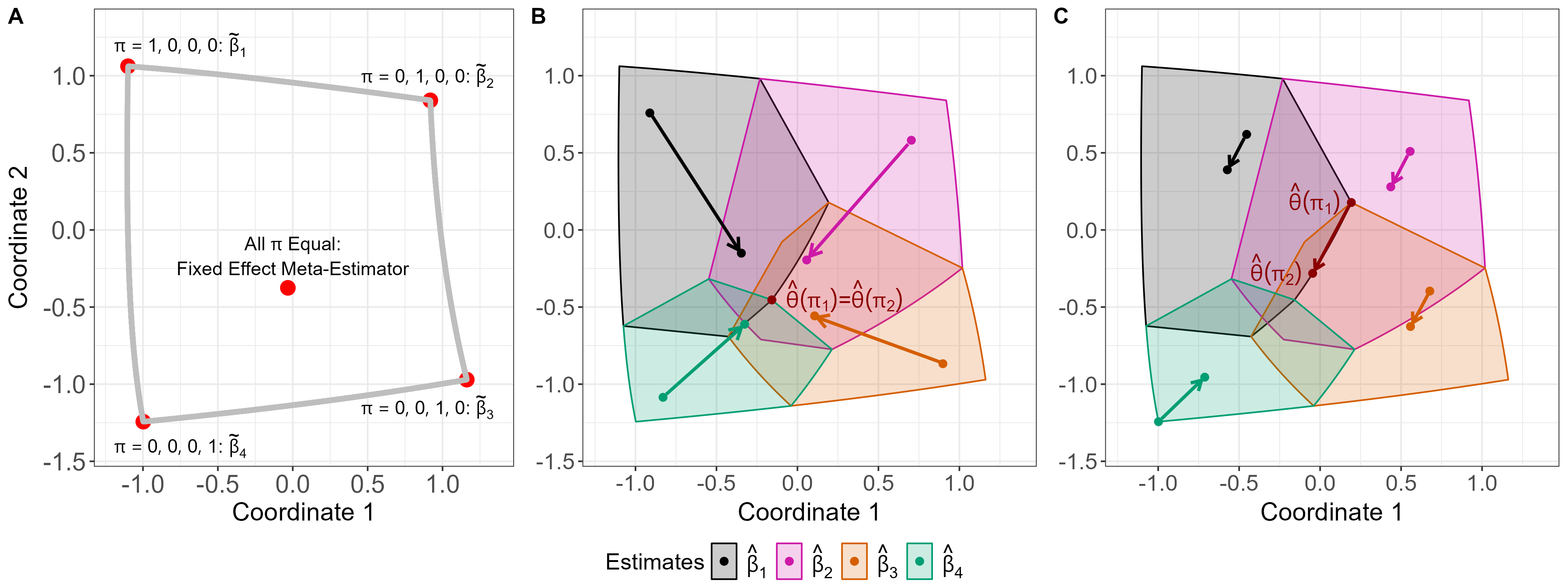}
    \caption{Panel A displays possible values for $\hbtheta(\bpi)$ for four datasets. Any coordinate pair within the outlined shape can be obtained for an appropriate $\bpi$ vector. Panels B and C display possible values for each $\hbbeta_j(\bpi)$ for $k=4$. For a given dataset, denoted by color, any coordinate pair within the polygon may be obtained for an appropriate $\bpi$ vector. In example A, arrows display a shift from $\bpi = (0.2, 0.2, 0.2, 0.2)$ to $\bpi = (0.8, 0.8, 0.8, 0.8)$. In example B, arrows display a shift from $\bpi = (0.5, 0.0, 0.5, 0.5)$ to $\bpi = (0.5, 0.5, 0.5, 0.3)$.}
    \label{fig:combined_ex}
\end{figure}

Our objective function is equivalent to a penalized seemingly unrelated regression \citep{zellner1962}. Prior work on Stein shrinkage in seemingly unrelated regression by \citet{mehrabani_improved_2020} takes a weighted average of the fixed-effect meta-estimator $\tbbeta_{\fe}$ and the individual study MLEs $\tbbeta_{j}$. The new insight afforded by our formulation shows that the weighted average proposed in \citet{mehrabani_improved_2020} maximizes our objective function in equation \eqref{eqn:obj_func} when  $\pi_j\equiv \pi$. Our formulation, however, not only allows each study to have its own weight, as shown in equation \eqref{eqn:hbbeta_j}, but also incorporates alternative patterns of heterogeneity beyond the fixed-effect meta-estimator. Further, by selecting our weights to minimize the MSE, we show in Section \ref{sec:theory} that our framework enables asymptotically valid inference, a property absent in \citet{mehrabani_improved_2020}.

\subsection{Estimation with overlapping covariates}\label{subsec:est_subset}

In Section \ref{subsec:est_all}, we assumed that the same covariates are measured across the $k$ datasets, which may not always be the case in practice. Consider instead the setting where the $j$th study has $q_j \geq p$ covariates, $p$ of which are common across studies. The remaining $q_j - p$ covariates may be measuring different features. This captures a common structure of univariate meta-analyses: each study may control for different covariates, and the standardized effect sizes for a main effect of interest are combined. 

For study $j$, denote the $n_j \times p$ matrix for the effects of interest as $\bX_j$ and denote the $n_j \times (q_j - p)$ matrix of control covariates as $\bZ_j$. Outcomes for study $j$ are modeled as 
\begin{align}\label{eqn:partitioned_model}
    \bY_j = \bX_j\bbeta_j + \bZ_j\bgamma_j + \epsilon_j,\quad \epsilon_j \sim N(\bzero,\sigma_j^2\bI_{n_j}).
\end{align}

Our goal is to estimate $\bbeta=(\bbeta_1, \ldots, \bbeta_k)$ in the framework proposed in Section \ref{subsec:est_all}.  Assume study $j$ reports the maximum likelihood estimators $\tbbeta_j$ and $\tbgamma_j$ for $\bbeta_j$ and $\bgamma_j$, respectively, and the covariance matrix for the estimators,
\begin{align}\label{eqn:full_covariance_j}
\sigma^2_j\begin{bmatrix}
      \bX_j^\top \bX_j & \bX_j^\top \bZ_j\\
      \bZ_j^\top \bX_j & \bZ_j^\top \bZ_j
  \end{bmatrix}^{-1}  .
\end{align}
To reformulate equation \eqref{eqn:partitioned_model} with only the covariates in common across datasets, we first project the outcome model onto the null space of $\bZ_j$. Let $\bM_j = \bI_{n_j} - \bZ_j(\bZ_j^{\top}\bZ_j)^{-1}\bZ_j^{\top}$ when $q_j > p$, and $\bM_j=\bI_{n_j}$ when $q_j=p$. In this projected outcome model, additional covariates $\bZ_j$ are removed:
\begin{align}\label{eqn:residual_model}
    \bM_j\bY_j = \bM_j\bX_j\bbeta_j +  \bM_j\epsilon_j,\quad \bM_j\epsilon_j \sim N(\bzero,\sigma_j^2\bM_j).
\end{align}
Given the outcome model in equation \eqref{eqn:residual_model}, the objective function in equation \eqref{eqn:obj_func} becomes 
\begin{align}\label{eqn:obj_func2}
    &\sum_{j=1}^k \Bigl\{-\frac{1}{2\sigma^2_j}(\bM_j\bY_j-\bM_j\bX_j\bbeta_j)^{\top}\bM_j^{+}(\bM_j\bY_j-\bM_j\bX_j\bbeta_j) \nonumber\\
    &\quad 
    - \Bigl(\frac{\pi_j}{1-\pi_j}\Bigr)\Bigl(\frac{1}{2\sigma_j^2}\Bigr)(\bM_j\bX_j\bbeta_j - \bM_j\bX_j\btheta)^{\top}\bM_j^{+}(\bM_j\bX_j\bbeta_j - \bM_j\bX_j\btheta)\Bigr\},
\end{align}
where $\bM_j^{+}$ is the generalized inverse of $\bM_j$; as $\bM_j$ is idempotent, it only has an inverse when $\bM_j=\bI_{n_j}$. In the objective function in equation \eqref{eqn:obj_func2}, the KLD penalty shrinks $\bbeta_j$ towards the centroid for the distribution of $\bM_j\bY_j$. We derive maximizers $\hbtheta_s(\bpi)$ and $\hbbeta_s(\bpi)$ of equation \eqref{eqn:obj_func2} in Section \ref{appsec:derivation} of the supplement, and give their closed forms here: 
\begin{align*}
    \hbtheta_{s}(\bpi) &= \Bigl(\sum_{j=1}^k \frac{\pi_j}{\sigma_j^2 } \bX_j'\bM_j\bX_j\Bigr)^{-1} \sum_{j=1}^k \frac{\pi_j}{\sigma_j^2}\bX_j^{\prime}\bM_j\bX_j\tbbeta_j, \quad
    \hbbeta_s(\bpi) =   (\bI - \bPi)\tbbeta  +  \bPi\bK \hbtheta_s(\bpi),
\end{align*}
where $\bM=\text{block-diag}\{\bM_1, \ldots, \bM_k\}$. Provided we have access to the full covariance matrix in equation \eqref{eqn:full_covariance_j}, we can compute $\bX_j^{\top}\bM_j\bX_j$ following
\begin{align*}
    \bX_j^\top\bM_j\bX_j &= \bX_j^\top \{\bI - \bZ_j(\bZ_j^{\top}\bZ_j)^{-1}\bZ_j^{\top}\}\bX_j= \bX_j^\top\bX_j - \bX_j^\top\bZ_j(\bZ_j^{\top}\bZ_j)^{-1}\bZ_j^{\top}\bX_j.
\end{align*} 
When covariates $\bZ_j$ are orthogonal to $\bX_j$ for all $j=1, \ldots, k$, the above estimators simplify to those from Section \ref{subsec:est_all}: $\hbtheta_{s}(\bpi)=\hbtheta(\bpi)$ and $\hbbeta_s(\bpi)=\hbbeta(\bpi)$.

\section{Theoretical support}\label{sec:theory} 

For simplicity of exposition, we focus on deriving theoretical support for the estimators in Section \ref{subsec:est_all}; similar results can be shown for those in Section \ref{subsec:est_subset} and are omitted. Our goal is to improve estimation across all studies on average by selecting the shrinkage parameter vector $\bpi$ that minimizes the mean squared error of $\hbbeta(\bpi)$. The MSE-minimizing $\bpi$ balances both variance and bias of our estimator and, hopefully, leads to an improvement over the MLE, $\tbbeta$. We begin by demonstrating that such an improvement exists using the simplified setting where $\pi_j\equiv \pi$. We subsequently extend our results to general $\bpi$ vectors by developing an alternative parameterization that yields an interpretable closed-form solution for the optimal scale of borrowing between datasets. Finally, we show that our estimator is consistent and asymptotically normal at the optimal scaling, enabling asymptotically valid inference.

Our objective is to select $\bpi$ to minimize the MSE of $\hbbeta(\bpi)$, derived in Section \ref{appsec:derivation_mse} of the supplement and given by
\begin{align}\label{eqn:mse}
    \MSE\{\hbbeta(\bpi)\} &= \|\bPi\{\bK\bA(\bpi) - \bI_{pk}\}\bbeta\|^2  +  \tr(\Var [\bPi\{\bK\bA(\bpi) - \bI_{pk}\}\tbbeta]) \nonumber\\
    &\quad + 2\tr(\Cov[\tbbeta, \bPi\{\bK\bA(\bpi) - \bI_{pk}\}\tbbeta]) + \tr\{\Var(\tbbeta)\},
\end{align}
where $\tr(\cdot)$ denotes the trace of a matrix. 
A preliminary question is whether minimizing this MSE leads to a choice of $\bpi$ that necessarily offers an improvement, in mean squared error, over using the study-specific MLEs. To see that it does, we consider a simpler case where all $\pi_j\equiv\pi$. If an improvement in MSE exists for this simpler case, then selecting $\bpi$ to minimize equation \eqref{eqn:mse} over the full space $[0,1]^k$ must yield an estimator $\hbbeta(\bpi)$ with smaller MSE than the MLE. Theorem \ref{existence_ma} states this result formally: there exists a vector $\bpi$, $\pi_j$ not all zero, such that the MSE of $\hbbeta(\bpi)$ is strictly less than the MSE of $\tbbeta$, the vector of study-specific MLEs.
 
\begin{theorem}
    \label{existence_ma} 
    There exists a $\bpi\neq \bzero$ such that $\MSE\{\hbbeta(\bpi)\}< \MSE(\tbbeta).$ In the case where $\pi_j\equiv\pi$, the MSE-minimizing $\pi$ is 
    \begin{align*}
        \pi^{\star} &= \frac{-\tr\{(\bK\bA - \bI_{pk})(\scaledXTX)^{-1}\}}{\tr\{(\bK\bA - \bI_{pk})(\scaledXTX)^{-1}(\bK\bA - \bI_{pk})^\prime\} + \|(\bK\bA - \bI_{pk})\bbeta\|^2}, 
    \end{align*}
    where $\bA = (\bK^\prime \scaledXTX \bK)^{-1} \bK^\prime \scaledXTX$.
\end{theorem}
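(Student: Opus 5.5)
The plan is to restrict to the one-parameter family $\pi_j\equiv\pi$ and show that the MSE is then a convex quadratic in $\pi$ whose derivative at $\pi=0$ is strictly negative. Everything rests on one observation. When all $\pi_j$ are equal, $\bPi=\pi\bI_{pk}$, and $\pi$ cancels inside $\bA(\bpi)=(\bK^\prime\bPi\scaledXTX\bK)^{-1}\bK^\prime\bPi\scaledXTX$. Hence $\bA(\bpi)=\bA$ does not depend on $\pi$ for $\pi\in(0,1]$, consistent with $\hbtheta(\bpi)=\tbbeta_{\fe}$. Writing $\bB=\bK\bA-\bI_{pk}$, equation \eqref{eqn:hbbeta_reexpr} becomes $\hbbeta(\pi)=\tbbeta+\pi\bB\tbbeta$, which is linear in $\pi$.

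Next I substitute into equation \eqref{eqn:mse}, using $\EV(\tbbeta)=\bbeta$ and $\Var(\tbbeta)=(\scaledXTX)^{-1}$, since the studies are independent and each has covariance $\sigma_j^2(\bX_j^\prime\bX_j)^{-1}$. This gives
\begin{align*}
\MSE\{\hbbeta(\pi)\}=\MSE(\tbbeta)+\pi^2\Bigl[\tr\{\bB(\scaledXTX)^{-1}\bB^\prime\}+\|\bB\bbeta\|^2\Bigr]+2\pi\,\tr\{\bB(\scaledXTX)^{-1}\}.
\end{align*}
This quadratic is valid for $\pi\in(0,1]$. At $\pi=0$ the estimator is exactly $\tbbeta$, so the formula also agrees there. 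Setting the derivative to zero yields the stated $\pi^\star$. The denominator is positive as long as $\bB\neq\bzero$, which holds whenever $k\ge2$, because $\bK\bA$ has rank at most $p<pk$.

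The main step is to show that the linear coefficient $\tr\{\bB(\scaledXTX)^{-1}\}$ is strictly negative, so that $\pi^\star>0$ and the MSE drops below $\MSE(\tbbeta)$ for small $\pi>0$. Let $\bC=\scaledXTX$. I will use
\begin{align*}
\tr\{\bK\bA\bC^{-1}\}=\tr\{\bK(\bK^\prime\bC\bK)^{-1}\bK^\prime\}=\tr\{(\bK^\prime\bC\bK)^{-1}\bK^\prime\bK\}.
\end{align*}
Since $\bK^\prime\bK=k\bI_p$ and $\bK^\prime\bC\bK=\sum_j\bX_j^\prime\bX_j/\sigma_j^2$, this equals $k\,\tr\{(\sum_j\bC_j)^{-1}\}$, where $\bC_j=\bX_j^\prime\bX_j/\sigma_j^2$. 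Meanwhile $\tr\{\bC^{-1}\}=\sum_j\tr(\bC_j^{-1})$. So I need the strict inequality $k\,\tr\{(\sum_j\bC_j)^{-1}\}<\sum_j\tr(\bC_j^{-1})$. Because the map $\bC\mapsto\tr(\bC^{-1})$ is operator convex on positive definite matrices, Jensen's inequality gives $\tr\{(\bar{\bC})^{-1}\}\le k^{-1}\sum_j\tr(\bC_j^{-1})$ with $\bar{\bC}=k^{-1}\sum_j\bC_j$. Rewriting, $k\,\tr\{(\sum_j\bC_j)^{-1}\}=k^{-1}\tr(\bar{\bC}^{-1})\le k^{-2}\sum_j\tr(\bC_j^{-1})$, which is strictly less than $\sum_j\tr(\bC_j^{-1})$ for $k\ge2$.

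It remains to check that $\pi^\star\le1$ or, failing that, to handle the boundary. Even without $\pi^\star\le1$, the sign argument alone shows that any sufficiently small $\pi>0$ strictly improves on the MLE, which proves existence. If $\pi^\star>1$ the constrained minimizer is $\pi=1$, so I expect the theorem to implicitly treat $\pi^\star$ as the unconstrained minimizer. Subtlety about the boundary $\pi=0$, where $\bA(\bpi)$ is undefined, is avoided by continuity.
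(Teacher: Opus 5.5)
Your route is the paper's. You restrict to $\pi_j\equiv\pi$, observe that $\bA$ no longer depends on $\pi$, expand the MSE as a quadratic in $\pi$, and read off $\pi^\star$ from the first-order condition. You then get $\pi^\star>0$ from $k\,\tr\{(\sum_j\bC_j)^{-1}\}<\sum_j\tr(\bC_j^{-1})$ with $\bC_j=\bX_j^{\top}\bX_j/\sigma_j^2$. The paper proves the underlying inequality $\sum_j\bC_j^{-1}\succeq k^2(\sum_j\bC_j)^{-1}$ via a positive semidefinite block matrix and a Schur complement; you cite convexity instead. Strictly, it is $\bC\mapsto\bC^{-1}$ that is operator convex; its trace is merely convex, which is all you use.

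Your rewriting step is off by a factor of $k$. Since $(\sum_j\bC_j)^{-1}=k^{-1}\bar{\bC}^{-1}$, one has $k\,\tr\{(\sum_j\bC_j)^{-1}\}=\tr(\bar{\bC}^{-1})$, not $k^{-1}\tr(\bar{\bC}^{-1})$. The bound by $k^{-2}\sum_j\tr(\bC_j^{-1})$ that you state is therefore false (take all $\bC_j$ equal). The correct chain $\tr(\bar{\bC}^{-1})\le k^{-1}\sum_j\tr(\bC_j^{-1})<\sum_j\tr(\bC_j^{-1})$ still gives the conclusion for $k\ge 2$.

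You also leave $\pi^\star\le 1$ open. It is needed for $\pi^\star$ to be the minimizer over the admissible range $[0,1]$, and the paper's proof does not address it either. It holds automatically. With your $\bC=\scaledXTX$, $\bK\bA$ is the $\bC$-orthogonal projection onto the column space of $\bK$. Hence $\bK\bA\bC^{-1}=\bK(\bK^{\prime}\bC\bK)^{-1}\bK^{\prime}$ is symmetric and equals $\bK\bA\bC^{-1}(\bK\bA)^{\prime}$. Expanding then gives $\bB\bC^{-1}\bB^{\prime}=\bC^{-1}-\bK(\bK^{\prime}\bC\bK)^{-1}\bK^{\prime}=-\bB\bC^{-1}$.

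So with $N=-\tr(\bB\bC^{-1})=\tr(\bB\bC^{-1}\bB^{\prime})$ you get $\pi^\star=N/(N+\|\bB\bbeta\|^2)\in(0,1]$. It equals $1$ exactly when all $\bbeta_j$ coincide. As a bonus, $N=\tr(\bB\bC^{-1}\bB^{\prime})>0$ follows directly from your rank argument that $\bB\neq\bzero$, so the Jensen step can be dropped altogether.
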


As discussed in Section \ref{sec:estimator}, when $\pi_j\equiv \pi$, $\hbtheta(\bpi)$ is no longer a function of $\bpi$ and is equal to the fixed-effect meta-estimator displayed in equation \eqref{eqn:fe}. Consequently, $\pi$ can be interpreted as the proportion of $\hbbeta_j(\pi)$ coming from the fixed-effect meta-estimator:
\begin{align*}
\hbbeta_j(\pi) =(1-\pi)\tbbeta + \pi\hbtheta = \tbbeta + \pi(\bK\bA - \bI_{pk})\tbbeta.
\end{align*}
When $\pi = 0$, $\hbbeta_j(0) = \tbbeta$ and the estimator is unbiased. As $\pi$ increases to 1, estimation of $\hbbeta_j(\pi)$ leverages more information from the other studies, and the variance of $\hbbeta_j(\pi)$ decreases while the bias increases. The proof of Theorem \ref{existence_ma} shows further that the bias does not increase too rapidly relative to the variance's decrease, implying that a range of $\pi$ exists for which the MSE of $\hbbeta(\bpi)$ is less than that of the MLE. In fact, any $\pi \in (0, 2\pi^{\star})$ yields an estimator $\hbbeta(\pi)$ with smaller MSE than the MLE. Mathematical details are given in Section \ref{appsec:imp_single_tuning} of the supplement. In our implementation, we use the closed form of $\pi^{\star}$ in Theorem \ref{existence_ma} as a starting value when searching for the optimal $\bpi$ vector by plugging $\tbbeta$ in for the unknown $\bbeta$.

Theorem \ref{existence_ma} shows that there exists a $\bpi$ such that our estimator offers a smaller MSE than the MLE, but it does not provide guidance on how to choose the optimal $\bpi$. We next develop a reparameterization of our estimator that decouples the role of $\bpi$ in selecting both the centroid and the amount of shrinkage towards the centroid. While technically involved, the reparameterization delivers a closed-form solution for the optimal amount of information borrowing. With this solution, we will be able to show that our estimator is consistent and asymptotically normal under mild conditions. 

We motivate the reparameterization in Figure \ref{fig:Two-Population Case} using an example with $k=2$ studies. This figure shows that each pair of $\bpi$ values along a ray from the origin gives the same $\hbtheta(\bpi)$. Mathematically, this is due to the fact that, through a simple rescaling,
\begin{align*}
\hbtheta(\bpi) &= \Bigl(\sum_{j=1}^k \pi_j\scaledXTXj)^{-1}\sum_{j=1}^k \pi_j\scaledXTXj\tbbeta = \Bigl(\sum_{j=1}^k a\pi_j\scaledXTXj)^{-1}\sum_{j=1}^k a\pi_j\scaledXTXj\tbbeta \\
&= \hbtheta(a\bpi),\quad a\in\mathbb{R}^+,\quad a\pi_j\leq 1.
\end{align*}
As we move along a ray away from the origin, $\hbbeta(\bpi)$ moves away from $\tbbeta$ towards $\hbtheta(\bpi)$ but $\hbtheta(\bpi)$ remains fixed. 
\begin{figure}[h]
    \centering
    \begin{tikzpicture}
\begin{axis}[
    xmin = 0, xmax = 1,
    ymin = 0, ymax = 1,
    xlabel = \(\pi_1\),
    ylabel = \(\pi_2\)]
\addplot[
    color = black,
    domain = 0:1
    ]
{x};
\addplot[
    color = blue,
    domain = 0:1
    ]
{.1*x};
\addplot[
    color = red,
    domain = 0:1
    ]
{2*x};
\addplot[
    mark = square,
    color = black]
    coordinates {(.25, .25)
        (.5, .5)
        (.75, .75)};
\addplot[mark = square,
    color = red]
    coordinates {(.125, .25)
        (.25, .5)
        (.375, .75)};
\addplot[mark = square,
    color = blue]
    coordinates {(.25, .025)
        (.5, .05)
        (.75, .075)};
\end{axis}
\end{tikzpicture}
    \caption{Each ray represents a line of equal $\hbtheta (\bpi)$. As $\pi_1$ and $\pi_2$ get further from the origin, $\hbbeta(\bpi)$ is more heavily weighted towards the combined estimator, borrowing more information.}
        \label{fig:Two-Population Case}
\end{figure}
Thus, the determination of our estimator can be thought of as a two-step process: first the centroid $\hbtheta(\bpi)$ is selected by choosing a ray, and then a global scaling factor dictates how much each $\hbbeta_j(\bpi)$ shrinks toward the centroid along the ray.
This observation suggests an alternative formulation of our estimator: represent $\bpi$ as the product of a constrained vector $\bpi_r=(\pi_{r,1},\ldots,\pi_{r,k})$ characterizing the centroid-selecting ray (as visualized in Figure \ref{fig:Two-Population Case}), and a scaling factor $c\in[0,1]$, which determines the amount of borrowing. For general $k$, we use the $k$-dimensional equivalent to the constraint in Figure \ref{fig:Two-Population Case} and require that the constrained vector $\bpi_r$ has all $\pi_{r, j}\in[0,1]$ and at least one $\pi_{r, j} = 1$. This restricts $\bpi_r$ to the surface of a unit hypercube, including only surfaces where at least one $\pi_{r, j} = 1$. 
We thus define the reparameterized estimator
\begin{align}\label{eqn:hbbeta_cscale}
    &\hbbeta(c, \bpi_r) = (\bI_{pk} - c\bPi_r)\tbbeta +  c\bPi_r \bK\hbtheta(\bpi_r) = [\bI_{pk} + c\bPi_r\{\bK\bA(\bpi_r) - \bI_{pk}\}]\tbbeta, 
\end{align}
where $\bPi_r=\text{block-diag}\{\pi_{r,1} \bI_p, \ldots, \pi_{r,k} \bI_p\} \in \mathbb{R}^{pk \times pk}$. The restriction on $\bpi_r$ is necessary as it identifies the estimator $\hbbeta(c, \bpi)$, but the choice of constraint is a matter of convenience and, regardless of the constraint, $c= \max_j \pi_j$. We also note that there are $k$ values of $\bpi_r$ that give identical estimates for the full vector $\hbbeta(c, \bpi_r)$; $\hbbeta(c, \bpi) \equiv \tbbeta$ for any $\bpi_r$ with a single element $\pi_{r,j} = 1$ and all other $\pi_{r,j^\prime} = 0$, $j \neq j^\prime, j,j^\prime \in \{1, \ldots, k\}$. These values for $\bpi_r$ are mathematically equivalent to analyzing each study separately, and by Theorem \ref{existence_ma}, are not an optimal choice of $\bpi_r$.

With this reparameterization in hand, we can rewrite the MSE of our estimator as a quadratic function of the scaling parameter $c$: 
\begin{align}\label{eqn:mse_with_c}
    \MSE\{\hbbeta(c, \bpi_r)\} &= c^2~\|\bPi_r\{\bK\bA(\bpi_r) - \bI_{pk}\}\bbeta\|^2  +  c^2~\tr(\Var [\bPi_r\{\bK\bA(\bpi_r) - \bI_{pk}\}\tbbeta]) \nonumber\\
    &\quad + 2c~\tr(\Cov[\tbbeta, \bPi_r\{\bK\bA(\bpi_r) - \bI_{pk}\}\tbbeta]) + \tr\{\Var(\tbbeta)\}
\end{align}
We show in Theorem \ref{optimal_scaling} that there is a closed-form solution for the scaling factor $c$ that minimizes the MSE given $\bpi_r$, the data, and the parameter $\bbeta$.

\begin{theorem} \label{optimal_scaling}
For any constrained $\bpi_r$ such that $\hbbeta(c, \bpi_r) \neq \tbbeta$, there exists a set $\{c^{\dagger} : c^{\dagger}>0\}$ such that $\MSE\{\hbbeta(c^{\dagger}, \bpi_r)\}< \MSE(\tbbeta)$. Given $\bpi_r$, the optimal $c$ is
\begin{align}\label{eqn:c_star}
    c^\star(\bpi_r) &= \min\Bigl[\frac{-\tr(\Cov[\tbbeta, \bPi_r\{\bK\bA(\bpi_r) - \bI_{pk}\}\tbbeta])}{\tr(\Var[\bPi_r\{\bK\bA(\bpi_r) - \bI_{pk}\}\tbbeta]) + \|\bPi_r\{\bK\bA(\bpi_r) - \bI_{pk}\}\bbeta\|^2}, 1\Bigr].
\end{align}
\end{theorem}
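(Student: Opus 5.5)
The plan is to treat equation \eqref{eqn:mse_with_c} as a univariate quadratic in $c$ with $\bpi_r$ fixed. Write $\bD = \bPi_r\{\bK\bA(\bpi_r) - \bI_{pk}\}$, a deterministic matrix (the $\sigma_j^2$ and $\bX_j$ are treated as fixed and known), so that $\hbbeta(c,\bpi_r) = \tbbeta + c\bD\tbbeta$. Then
\begin{align*}
\MSE\{\hbbeta(c,\bpi_r)\} - \MSE(\tbbeta) = c^2 a + 2cb,
\end{align*}
where $a = \tr\{\bD(\scaledXTX)^{-1}\bD^\prime\} + \|\bD\bbeta\|^2$ and $b = \tr\{\bD(\scaledXTX)^{-1}\}$. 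These expressions use $\Var(\tbbeta) = (\scaledXTX)^{-1}$ and $\Cov(\tbbeta, \bD\tbbeta) = (\scaledXTX)^{-1}\bD^\prime$.

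The hypothesis $\hbbeta(c,\bpi_r)\neq\tbbeta$ for $c>0$ means $\bD\neq \bzero$. Then $a>0$, since $(\scaledXTX)^{-1}$ is positive definite. The difference $c^2a+2cb$ is strictly negative exactly for $c\in(0,-2b/a)$, so the first claim reduces to showing $b<0$. The quadratic is minimized at $-b/a$. Restricting to the constraint $c\in[0,1]$ then gives $c^\star(\bpi_r) = \min(-b/a, 1)$ by convexity, which is the displayed formula \eqref{eqn:c_star}. Intersecting $(0,-2b/a)$ with $(0,1]$ gives the nonempty set $\{c^\dagger\}$.

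The main obstacle is establishing $b = \tr\{\bPi_r(\bK\bA(\bpi_r)-\bI_{pk})(\scaledXTX)^{-1}\}<0$. Let $\bW = \scaledXTX = \text{block-diag}\{\bW_j\}$ with $\bW_j = \sigma_j^{-2}\bX_j^\prime\bX_j$, and write $\bS = \sum_j \pi_{r,j}\bW_j$. A blockwise computation gives $\bK\bA(\bpi_r)\bW^{-1}$ with $(j,l)$ block $\bS^{-1}\pi_{r,l}$. Hence
\begin{align*}
b = \sum_j \pi_{r,j}\bigl\{\pi_{r,j}\tr(\bS^{-1}) - \tr(\bW_j^{-1})\bigr\}.
\end{align*}
To sign this, I would use $\pi_{r,j}\bW_j \preceq \bS$. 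For $\pi_{r,j}>0$ this gives $\bS^{-1}\preceq \pi_{r,j}^{-1}\bW_j^{-1}$, so $\pi_{r,j}^2\tr(\bS^{-1}) \le \pi_{r,j}\tr(\bW_j^{-1})$ and each summand is nonpositive.

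Strictness is the delicate part. Equality in the $j$th term requires $\bS = \pi_{r,j}\bW_j$, i.e. every other study with positive weight contributes zero, because the $\bW_l$ are positive definite. So if two or more $\pi_{r,j}$ are positive, at least one term is strictly negative and $b<0$. If only one is positive, then $\bK\bA = \bone_k\otimes \bI_p$ applied to that single estimate and $\bD\tbbeta = \bzero$ (the study's own MLE is unchanged). This is the excluded case $\hbbeta=\tbbeta$, and the exclusion is exactly what the hypothesis provides. With $b<0$ and $a>0$, the existence claim and the closed form both follow, consistent with the $\pi_j\equiv\pi$ case of Theorem \ref{existence_ma}.
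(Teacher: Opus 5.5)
Your proof is correct. Its outer structure matches the paper's: the MSE difference is the convex quadratic $c^2a+2cb$ in $c$, and everything reduces to showing $b<0$.

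The genuine difference is how you sign $b$. The paper writes $b=\tr\{\bK^\prime\bPi_r^2\bK(\bK^\prime\bPi_r\scaledXTX\bK)^{-1}\}-\tr\{\bK^\prime\bPi_r(\scaledXTX)^{-1}\bK\}$. In your notation, it then proves the aggregate inequality $\sum_j\pi_{r,j}\boldsymbol{W}_j^{-1}\succeq(\sum_j\pi_{r,j})^2\boldsymbol{S}^{-1}$. It does so by showing a $2\times 2$ block matrix is positive semidefinite, using a Schur complement, an eigen-decomposition of $\scaledXTX$ and Kronecker products, the same device as in Theorem \ref{existence_ma}. Finally it appeals to $(\sum_j\pi_{r,j})^2>\sum_j\pi_{r,j}^2$.

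You instead argue study by study, using only that inversion reverses the Loewner order: $\pi_{r,j}\boldsymbol{W}_j\preceq\boldsymbol{S}\Rightarrow\boldsymbol{S}^{-1}\preceq\pi_{r,j}^{-1}\boldsymbol{W}_j^{-1}$. This buys three things.
\begin{itemize}
\item It is more elementary.
\item Your trace-equality analysis gives the exact characterization: $b=0$ iff at most one $\pi_{r,j}>0$ iff your $\bD=\bzero$. This matches the hypothesis $\hbbeta(c,\bpi_r)\neq\tbbeta$ precisely. The paper's stated condition (``at least two $\pi_j\notin\{0,1\}$'') does not quite match; it would exclude, e.g., $\bpi_r=(1,1,0)$.
\item You explicitly justify the truncation at $1$ by convexity on $[0,1]$, which the paper's proof leaves unaddressed.
\end{itemize}

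What the paper's route buys in return is a quantitative gap, $-b\ge 2\sum_{i<l}\pi_{r,i}\pi_{r,l}\tr(\boldsymbol{S}^{-1})$. Your termwise bound, as written, yields strictness only qualitatively.
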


The proof is given in Section \ref{appsec:optimal_scaling} of the supplement. 
The result of Theorem \ref{optimal_scaling} consists of two parts. The first part confers on our estimator similar properties to those of James-Stein shrinkage estimators \citep{james_estimation_1961} and the more recent information-based shrinkage estimator proposed by \citet{hector_turning_2024}: as the optimal scaling is strictly positive, the optimal version of our estimator has a smaller MSE than the MLE. The second part is more surprising: this result holds regardless of which $\bpi_r$ defines the centroid. In words, it is always beneficial in a MSE-sense, relative to the MLE, to borrow information across studies, regardless of the interpretation, value and, indeed, existence of a homogeneous parameter!
 
In Corollary \ref{improvement_bpi} below, we identify the complete set of scaling factors $c$ over which $\hbbeta(c, \bpi_r)$ offers an improvement in MSE over the MLE. 

\begin{corollary}\label{improvement_bpi}
 For a given $\bpi_r$, $\MSE\{\hbbeta(c, \bpi_r)\}< \MSE(\tbbeta)$ if and only if $c < 2 c^\star(\bpi_r)$.
\end{corollary}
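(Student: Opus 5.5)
The plan is to work directly from the quadratic representation of the MSE in equation \eqref{eqn:mse_with_c}. Fix a constrained $\bpi_r$ with $\hbbeta(c,\bpi_r)\neq\tbbeta$, and write $\bD=\bPi_r\{\bK\bA(\bpi_r)-\bI_{pk}\}$. Define
\begin{align*}
a &= \tr\{\Var(\bD\tbbeta)\} + \|\bD\bbeta\|^2, \qquad b = \tr\{\Cov(\tbbeta,\bD\tbbeta)\},
\end{align*}
so that $\MSE\{\hbbeta(c,\bpi_r)\} = a c^2 + 2bc + \tr\{\Var(\tbbeta)\}$. Since $\hbbeta(0,\bpi_r)=\tbbeta$, the comparison with the MLE reduces to the sign of
\begin{align*}
\MSE\{\hbbeta(c,\bpi_r)\}-\MSE(\tbbeta) = c\,(ac+2b).
\end{align*}

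The next step is to establish the signs of $a$ and $b$. For $a$: $\Var(\bD\tbbeta)=\bD(\scaledXTX)^{-1}\bD^\prime$ is positive semidefinite, and its trace is strictly positive whenever $\bD\neq\bzero$. The condition $\hbbeta(c,\bpi_r)\neq\tbbeta$ is exactly the condition $\bD\neq\bzero$ on the data, so $a>0$. For $b$: I will invoke the fact, used in the proof of Theorem \ref{optimal_scaling}, that $b=\tr\{\bD(\scaledXTX)^{-1}\}<0$. This is what makes the unconstrained minimizer $-b/a$ strictly positive, and it is the content of the existence claim in that theorem. With $a>0$ and $c>0$, the difference $c(ac+2b)$ is negative if and only if $ac+2b<0$, that is, if and only if $c<-2b/a$.

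It remains to reconcile $-2b/a$ with $2c^\star(\bpi_r)$, where $c^\star$ carries the truncation at $1$ from equation \eqref{eqn:c_star}. If $-b/a\le 1$, then $c^\star(\bpi_r)=-b/a$ and the two conditions coincide. If $-b/a>1$, then $c^\star(\bpi_r)=1$, so $2c^\star=2$. Every admissible $c\in(0,1]$ then satisfies both $c<2$ and $c\le 1<-b/a<-2b/a$, so both sides of the equivalence hold throughout the admissible range. I will state explicitly that $c$ ranges over $(0,1]$. At $c=0$ the two MSEs are equal, so the strict inequality necessarily refers to $c>0$.

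I expect the main obstacle to be the sign of $b$, if it must be justified rather than imported from Theorem \ref{optimal_scaling}. The route I would try first is to write $\bW=\scaledXTX$, which is block diagonal with blocks $\bW_j=\sigma_j^{-2}\bX_j^\prime\bX_j$. Then
\begin{align*}
\bK\bA(\bpi_r)\bW^{-1} = \bK\Bigl(\sum_j \pi_{r,j}\bW_j\Bigr)^{-1}\bK^\prime,
\end{align*}
and so
\begin{align*}
b = \tr\Bigl\{\Bigl(\sum_j\pi_{r,j}\bW_j\Bigr)^{-1}\sum_j \pi_{r,j}\bW_j^{0}\Bigr\} - \sum_j \pi_{r,j}\tr(\bW_j^{-1}),
\end{align*}
where the first term is just $\tr\{(\sum_j\pi_{r,j}\bW_j)^{-1}\}\sum_j\pi_{r,j}$ with $\bW_j^{0}=\bI_p$. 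Negativity should then follow from a matrix-convexity (operator Jensen) argument comparing the inverse of a $\pi_r$-weighted sum with the weighted sum of inverses. Strictness would come from $\bD\neq\bzero$, which requires at least two nonzero $\pi_{r,j}$.
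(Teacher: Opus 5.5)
Your proof is correct and follows the paper's route. You write $\MSE(\tbbeta)-\MSE\{\hbbeta(c,\bpi_r)\}$ using the quadratic in equation \eqref{eqn:mse_with_c}, factor out $c$, and solve the resulting linear inequality. Your version is more careful than the paper's in three places:
\begin{itemize}
\item the paper divides by $c$ without comment, but the strict inequality fails at $c=0$;
\item the paper mixes $\le$ with $<$;
\item the paper uses the untruncated ratio, although $c^\star(\bpi_r)$ in equation \eqref{eqn:c_star} is capped at $1$.
\end{itemize}
Your case split on $-b/a\le 1$ versus $-b/a>1$, with $c\in(0,1]$, is the right way to handle the cap.

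There are two minor points.

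First, the sign of $b$ is not needed for this corollary. With $a>0$ and $c>0$, both sides reduce to $c<-2b/a$ whatever the sign of $b$; if $b\ge 0$, both are false for every $c>0$. So $b<0$ only matters for the improvement set being nonempty, which is the content of Theorem \ref{optimal_scaling}. What you call the main obstacle is not an obstacle here.

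Second, your fallback computation of $b$ drops a factor of $\bPi_r$. Write $\boldsymbol{W}_j=\sigma_j^{-2}\bX_j^\prime\bX_j$ and $\boldsymbol{S}=\sum_j\pi_{r,j}\boldsymbol{W}_j$. Then $\bK\bA(\bpi_r)(\scaledXTX)^{-1}=\bK\boldsymbol{S}^{-1}\bK^\prime\bPi_r$, hence
\begin{align*}
b=\Bigl(\sum_j\pi_{r,j}^2\Bigr)\tr(\boldsymbol{S}^{-1})-\sum_j\pi_{r,j}\tr(\boldsymbol{W}_j^{-1}).
\end{align*}
The first term carries $\sum_j\pi_{r,j}^2$, not $\sum_j\pi_{r,j}$. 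Your operator-convexity idea then works on the corrected expression. Convexity of the inverse gives $(\sum_j\pi_{r,j})^2\boldsymbol{S}^{-1}\preceq\sum_j\pi_{r,j}\boldsymbol{W}_j^{-1}$. Strictness follows from $\sum_j\pi_{r,j}^2<(\sum_j\pi_{r,j})^2$ whenever two entries of $\bpi_r$ are nonzero. This is exactly the chain in the paper's proof of Theorem \ref{optimal_scaling}.
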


A proof of Corollary \ref{improvement_bpi} is given in Section \ref{improvement_proof} of the supplement. For each $\bpi_r$, our estimator outperforms the MLE for scaling factors $c$ up to twice the value of $c^{\star}(\bpi_r)$. This complete characterization of the improvement set gives us hope that a reasonable estimate for $c^{\star}(\bpi_r)$ can also improve the MSE. Finally, we show in Theorem \ref{consistency} and Corollary \ref{asymp_distr} below that selecting the scaling factor $c^{\star}(\bpi_r)$ that minimizes the MSE yields an estimator that is consistent and asymptotically normal, enabling asymptotically valid inference.

\begin{theorem}\label{consistency}
    For studies $j\in \{1,\dots, k\}$ assume $n_j = n a_j$, where $a_j\in \mathbb{R}^+$. Assume also that, for each study $j$, 
    \begin{align*}
    \Bigl(\frac{1}{n_j \sigma_j^2}\bX_j^\prime\bX_j \Bigr)^{-1} \to \bQ_j^{-1}
    \quad \text{and}\quad 
    \frac{1}{n_j \sigma_j^2}\bX_j^\prime\bepsilon_j \overset{p}{\to} \bzero,
    \end{align*}
    where $\bQ_j^{-1}$ is a positive definite matrix. Then, for any $\bpi_r$ and any $c^{\star}(\bpi_r)$ as defined in Theorem \ref{optimal_scaling}, $\hbbeta(c^{\star}, \bpi_r)$ is a consistent estimator of $\bbeta$ as $n\to\infty$.
\end{theorem}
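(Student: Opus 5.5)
Since $\hbbeta(c^\star,\bpi_r)=\tbbeta+c^\star\bPi_r\{\bK\bA(\bpi_r)-\bI_{pk}\}\tbbeta$ by equation \eqref{eqn:hbbeta_cscale}, the plan is to show two things: the MLE $\tbbeta$ is consistent, and the shrinkage term $c^\star\bPi_r\{\bK\bA(\bpi_r)-\bI_{pk}\}\tbbeta$ is $o_p(1)$. We need $c^\star\to 0$, because $\bPi_r\{\bK\bA(\bpi_r)-\bI_{pk}\}\tbbeta$ converges to the generally nonzero limit $\bPi_r\{\bK\bA_\infty(\bpi_r)-\bI_{pk}\}\bbeta$. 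Here $\bA_\infty(\bpi_r)$ is the limit of $\bA(\bpi_r)$. The heterogeneous case is the one that requires $c^\star\to0$. When that limit vanishes, for example under homogeneity, boundedness $c^\star\in[0,1]$ is enough on its own.

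\textbf{Step 1 (MLE).} Write $\tbbeta_j-\bbeta_j=(\frac{1}{n_j\sigma_j^2}\bX_j'\bX_j)^{-1}\frac{1}{n_j\sigma_j^2}\bX_j'\bepsilon_j$. By the two assumptions and Slutsky's theorem, this is $\bQ_j^{-1}\cdot o_p(1)=o_p(1)$, so $\tbbeta\overset{p}{\to}\bbeta$.

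\textbf{Step 2 (limit of $\bA$).} Because $n_j=na_j$, we have $\frac1n\scaledXTX\to\text{block-diag}\{a_j\bQ_j\}$. The factor $1/n$ cancels between the two factors of $\bA(\bpi_r)=(\bK'\bPi_r\scaledXTX\bK)^{-1}\bK'\bPi_r\scaledXTX$. Hence $\bA(\bpi_r)\to\bA_\infty(\bpi_r)=(\sum_j\pi_{r,j}a_j\bQ_j)^{-1}(\pi_{r,1}a_1\bQ_1,\ldots,\pi_{r,k}a_k\bQ_k)$. The inverse exists since at least one $\pi_{r,j}=1$ and each $\bQ_j$ is positive definite. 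Writing $\bD=\bPi_r\{\bK\bA(\bpi_r)-\bI_{pk}\}$, this gives $\bD\to\bD_\infty$, a fixed bounded matrix, and $\bD\tbbeta\overset{p}{\to}\bD_\infty\bbeta$.

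\textbf{Step 3 (rate of $c^\star$; main step).} Since $\Var(\tbbeta)=(\scaledXTX)^{-1}$, the terms in \eqref{eqn:c_star} are as follows:
\begin{align*}
\text{numerator} &= -\tr\{(\scaledXTX)^{-1}\bD'\} = O(1/n),\\
\tr\{\Var(\bD\tbbeta)\} &= \tr\{\bD(\scaledXTX)^{-1}\bD'\} = O(1/n),\\
\text{bias term} &= \|\bD\bbeta\|^2 \to \|\bD_\infty\bbeta\|^2.
\end{align*}
If $\|\bD_\infty\bbeta\|>0$, then $c^\star=O(1/n)\to0$, and the shrinkage term is $O(1/n)\cdot O_p(1)\to0$. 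If $\bD_\infty\bbeta=\bzero$, then $c^\star\le1$ gives $\|c^\star\bD\tbbeta\|\le\|\bD(\tbbeta-\bbeta)\|+\|\bD\bbeta\|$. The first term is $o_p(1)$ by Step 1, and the second tends to $\|\bD_\infty\bbeta\|=0$. The case split is the delicate point, because $c^\star$ need not vanish when the bias vanishes. Handling it through the boundedness of $c^\star$ avoids any need to control a ratio of two vanishing quantities. This also covers intermediate regimes where $\bD\bbeta\to\bzero$ slowly.

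\textbf{Step 4.} Combining the steps, $\hbbeta(c^\star,\bpi_r)-\bbeta=(\tbbeta-\bbeta)+c^\star\bD\tbbeta=o_p(1)$. This holds for every admissible $\bpi_r$, which proves Theorem \ref{consistency}.
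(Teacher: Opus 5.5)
Your proof is correct and matches the paper's argument in every essential step. You use the same limit $\bA(\bpi_r)\to\bA_\infty(\bpi_r)$, the same observation that the numerator and variance-trace terms of $c^\star(\bpi_r)$ are $O(1/n)$ (so $c^\star\to 0$ whenever your limiting bias $\bD_\infty\bbeta$ is nonzero), and the same reliance on $c^\star\le 1$ when that bias vanishes. The differences are only in packaging: the paper shows that the bias and total variance of $\hbbeta(c^\star,\bpi_r)$ both tend to zero (an $L^2$ argument using $\Var(\tbbeta)=(\scaledXTX)^{-1}$), while you argue in probability via Slutsky from the stated assumption on $\bX_j'\bepsilon_j$, and your two-way split on $\bD_\infty\bbeta$ is a tidier version of the paper's three cases.
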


\begin{theorem}\label{asymp_distr}
    Assume the conditions in Theorem \ref{consistency} hold. Then 
    \begin{align*}
        &(\nabla_{\bbeta} \EV\{\hbbeta(c^{\star}, \bpi_r)\}(\scaledXTX)^{-1}\nabla_{\bbeta} \EV\{\hbbeta(c^{\star}, \bpi_r)\}')^{-1/2}
        \{\hbbeta(c^{\star}, \bpi_r) - \bbeta\} \overset{d}{\to} \mathcal{N}( \bzero, \bI_{pk}),
        \end{align*}
        where \[\nabla_{\bbeta} \EV\{\hbbeta(c^{\star}, \bpi_r)\} = \bI_{pk} + c^{\star}(\bpi_r)\bPi_r\{\bK\bA(\bpi_r) -\bI_{pk}\}.\]
\end{theorem}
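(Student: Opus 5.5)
The plan is to exploit the fact that, for fixed $\bpi_r$, the estimator $\hbbeta(c^{\star},\bpi_r)$ is a \emph{deterministic} linear transformation of the stacked MLE. Write $\bB_n = \bI_{pk} + c^{\star}(\bpi_r)\bPi_r\{\bK\bA(\bpi_r)-\bI_{pk}\}$ and $\bsigma_n = (\scaledXTX)^{-1}$. The quantity $c^{\star}(\bpi_r)$ in \eqref{eqn:c_star} depends only on $\bX$, $\bSigma$, $\bpi_r$ and the true $\bbeta$, and the same holds for $\bA(\bpi_r)$. Hence $\bB_n$ is nonrandom. Since $\hbbeta(c^{\star},\bpi_r)=\bB_n\tbbeta$ and $\EV(\tbbeta)=\bbeta$, we get $\nabla_{\bbeta}\EV\{\hbbeta(c^{\star},\bpi_r)\}=\bB_n$, as claimed in the statement.

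Under the Gaussian model, $\tbbeta\sim\mathcal{N}(\bbeta,\bsigma_n)$ exactly. Therefore
\[
\bB_n\tbbeta\sim\mathcal{N}(\bB_n\bbeta,\ \bB_n\bsigma_n\bB_n^\prime).
\]
If one prefers not to use exact normality, the same conclusion holds asymptotically by a Lindeberg CLT for $(n_j\sigma_j^2)^{-1/2}\bX_j^\prime\bepsilon_j$ together with $\bQ_j$. I then split
\[
(\bB_n\bsigma_n\bB_n^\prime)^{-1/2}\{\hbbeta(c^{\star},\bpi_r)-\bbeta\}
= (\bB_n\bsigma_n\bB_n^\prime)^{-1/2}\bB_n(\tbbeta-\bbeta) + (\bB_n\bsigma_n\bB_n^\prime)^{-1/2}(\bB_n-\bI_{pk})\bbeta .
\]
The first term is exactly (or asymptotically) $\mathcal{N}(\bzero,\bI_{pk})$ once $\bB_n$ is invertible, which holds for large $n$ because $\bB_n\to\bI_{pk}$ (shown below). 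It remains to show that the second, bias term tends to zero. The conclusion then follows by Slutsky's theorem.

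The main obstacle is controlling the bias term $c^{\star}(\bpi_r)\,(\bB_n\bsigma_n\bB_n^\prime)^{-1/2}\bPi_r\{\bK\bA(\bpi_r)-\bI_{pk}\}\bbeta$. I would use the rate of $c^{\star}$, reusing the computations from the proof of Theorem \ref{consistency}. Under the assumptions $n_j=na_j$ and $(n_j\sigma_j^2)^{-1}\bX_j^\prime\bX_j\to\bQ_j$, the matrix $\bA(\bpi_r)$ converges to a fixed matrix, namely $(\sum_j\pi_{r,j}a_j\bQ_j)^{-1}$ times the blocks $\pi_{r,j}a_j\bQ_j$. Meanwhile $\bsigma_n=O(n^{-1})$, so both $\tr(\Var[\cdot])$ and $-\tr(\Cov[\cdot])$ in \eqref{eqn:c_star} are $O(n^{-1})$. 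Set $\bD=\bPi_r\{\bK\bA(\bpi_r)-\bI_{pk}\}$. There are two cases.
\begin{enumerate}[label=(\roman*)]
\item If $\|\bD\bbeta\|\to 0$, in particular when $\bD\bbeta=\bzero$ exactly (e.g.\ homogeneity along the chosen ray), the bias term is identically zero, or is handled by the same ratio argument as in case (ii).
\item If $\|\bD\bbeta\|$ is bounded away from zero, the denominator of $c^{\star}$ is of order one, so $c^{\star}=O(n^{-1})$. The bias is then $O(n^{-1})$, while $(\bB_n\bsigma_n\bB_n^\prime)^{-1/2}=O(n^{1/2})$, so their product is $O(n^{-1/2})\to 0$.
\end{enumerate}
More cleanly, both cases are covered by one bound. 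Since $c^{\star}\le -\tr(\Cov)/\|\bD\bbeta\|^2$, we have $c^{\star}\|\bD\bbeta\|\le -\tr(\Cov)/\|\bD\bbeta\|$. Also $c^{\star}\le 1$ gives $c^{\star}\|\bD\bbeta\|\le\|\bD\bbeta\|$. Taking the minimum, $c^{\star}\|\bD\bbeta\|\le \min\{\|\bD\bbeta\|,\ -\tr(\Cov)/\|\bD\bbeta\|\}\le \{-\tr(\Cov)\}^{1/2}=O(n^{-1/2})$. This alone only gives an $O(1)$ bound after scaling by $n^{1/2}$. So I expect to need the refined case analysis above, in which $\bbeta$ and hence $\|\bD\bbeta\|$ are fixed as $n$ grows. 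I anticipate this to be the delicate step.

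Finally, $c^{\star}\to 0$ in case (ii), and in case (i) the $\bD\bbeta$-dependent quantities vanish. In both cases $\bB_n\to\bI_{pk}$, or at least $\bB_n$ has eigenvalues bounded away from zero since $c^{\star}\le 1$ and $\bB_n$ is a convex-type combination. This justifies the invertibility of $\bB_n\bsigma_n\bB_n^\prime$ and the $O(n^{1/2})$ bound on its inverse square root used above, completing the argument.
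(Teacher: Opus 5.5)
Your route is the one the paper implicitly uses: in the proof of Theorem \ref{consistency}, $\hbbeta(c,\bpi_r)=\bB_n\tbbeta$ is exactly Gaussian, and the bias is then controlled through $c^\star$. You are also right that the paper's ``bias $\to 0$'' is not enough for a $\sqrt{n}$-standardized statement. Your rate $c^\star=O(n^{-1})$, valid when $\|\bD_n\bbeta\|$ stays away from zero, is what is needed there.

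\textbf{Case (i) is not closed.} The fallback ``handled by the same ratio argument as in case (ii)'' fails, because that argument needs the denominator of $c^\star$ to be of order one. Your own inequality shows that if $\|\bD_n\bbeta\|\asymp n^{-1/2}$, the standardized bias is $O(1)$, not $o(1)$. Nor is $\|\bD_n\bbeta\|$ ``fixed'' in $n$, since $\bD_n$ depends on the Gram matrices. The missing observation is an exact dichotomy:
\begin{itemize}
\item The $j$th block of $\bD_n\bbeta$ is $\pi_{r,j}\{\btheta_n(\bpi_r)-\bbeta_j\}$.
\item Here $\btheta_n(\bpi_r)=(\sum_i\pi_{r,i}a_i\bG_i)^{-1}\sum_i\pi_{r,i}a_i\bG_i\bbeta_i$ is a matrix-weighted average whose weights sum to $\bI_p$.
\item If the limit $\bD_\infty\bbeta$ vanishes, then all $\bbeta_j$ with $\pi_{r,j}>0$ equal a common vector.
\item Then $\btheta_n(\bpi_r)$ equals that vector for every $n$, so $\bD_n\bbeta=\bzero$ identically.
\end{itemize}
So either the bias is exactly zero, or $\|\bD_n\bbeta\|$ is bounded away from zero and your case (ii) applies.

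\textbf{The final invertibility step is false.} In the zero-bias case, $c^\star$ is a ratio of two $O(n^{-1})$ traces and need not vanish. Hence $\bB_n\not\to\bI_{pk}$, and ``$c^\star\le 1$ plus a convex-type combination'' does not keep the eigenvalues of $\bB_n$ away from zero. Take homogeneity with $\bpi_r=\bone_k$:
\begin{itemize}
\item We have $\bK\bA\bsigma_n=\bsigma_n(\bK\bA)^\prime=\bK\bA\bsigma_n(\bK\bA)^\prime=\bK(\bK^\prime\scaledXTX\bK)^{-1}\bK^\prime$.
\item So the numerator and denominator of $c^\star$ both equal $\tr\{(\bI_{pk}-\bK\bA)\bsigma_n\}$, giving $c^\star=1$.
\item Then $\bB_n=\bK\bA$ has rank $p<pk$, so $\bB_n\bsigma_n\bB_n^\prime$ is singular and the normalization in the statement does not exist.
\end{itemize}
More generally, $\bB_n$ is singular exactly when $c^\star=1$ and at least two entries of $\bpi_r$ equal $1$, since the corresponding block rows are both $\bA$.

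This is a defect of the statement itself, which the paper's argument does not address, so you cannot prove it as written. You need an added nondegeneracy hypothesis, for example $c^\star<1$ or at most one $\pi_{r,j}=1$; either makes $\bB_n$ nonsingular. Under such a hypothesis the zero-bias case is immediate: the standardized statistic is exactly $\mathcal{N}(\bzero,\bI_{pk})$ for every $n$, with no uniform eigenvalue bound needed. In the other case, $c^\star=O(n^{-1})$ gives $\bB_n\to\bI_{pk}$, and your $O(n^{-1/2})$ bound on the standardized bias finishes the proof.
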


We outline the argument in the proof of Theorem \ref{consistency} and give a complete proof in Section \ref{appsec:consistency} of the supplement. First, for any fixed $\bpi_r$, the total variance of $\hbbeta\{c^{\star}(\bpi_r), \bpi_r\}$ goes to zero as $n\to\infty$. To attain consistency, we need only for the bias of $\hbbeta\{c^{\star}(\bpi_r), \bpi_r\}$, given by $c^{\star}(\bpi_r) \bPi_r\{\bK\btheta(\bpi_r) - \bbeta\}$, to also go to zero. We can consider three cases for this condition. In the first, if all study-specific parameters are equal ($\bbeta_j = \bbeta_{j^\prime}$), then the bias will be zero at any sample size and for any $\bpi_r$. This is because in the scenario with full homogeneity, there is no drawback to combining the studies. In the second case, suppose $\|\bPi_r\{\bK\btheta(\bpi_r) - \bbeta\}\|^2$ does not go to zero as $n\to\infty$: there is heterogeneity among all studies that does not vanish as the sample size grows. In this case, we show that $c^{\star}(\bpi_r)$ goes to zero as $n\to\infty$. This is desirable behavior because, intuitively, the optimal scaling factor and therefore the amount of borrowing shrinks as the sample size increases when there is heterogeneity between all studies. In the final case, $\|\bPi_r\{\bK\btheta(\bpi_r) - \bbeta\}\|^2$ goes to zero as $n\to\infty$. We need not draw a conclusion about $c^{\star}(\bpi_r)$ in this scenario, because this condition implies that the bias will also go to zero. This event only occurs if either $\pi_{r,j} = 0$ or if $\btheta(\bpi_r) = \bbeta_j$ for each study $j$; that is, either our estimator leaves study $j$ out of the centroid or $\bbeta_j$ is equal for all studies included within the centroid. 

It follows from Corollary \ref{asymp_distr} that an asymptotically valid $100(1 - \alpha)$\% confidence interval for the $\ell$\textsuperscript{th} parameter, $\beta_{\ell}$, of $\bbeta$ is
\begin{align}\label{eqn:asymp_CI}
\Bigl(&\widehat{\beta}_{\ell}(c^{\star}, \bpi) \pm z_{1-\alpha/2}\sqrt{[\nabla_{\bbeta} \EV\{\hbbeta(c^{\star}, \bpi)\}(\scaledXTX)^{-1} \nabla_{\bbeta} \EV\{\hbbeta(c^{\star}, \bpi)\}']_{\ell \ell}}\Bigr). 
\end{align} 
When working with a subset of the covariates as described in Section \ref{subsec:est_subset}, the asymptotically valid $100(1-\alpha)\%$ confidence interval for $\bbeta_{\ell}$ is nearly identical to that given in equation \eqref{eqn:asymp_CI}. The only adjustment needed is to use $\scaledXTMX$ in place of $\scaledXTX$ in the calculation of the estimate and its variance.

Put together, the results of this section confer on our estimator several surprising and beneficial theoretical properties. First, selecting a vector of shrinkage parameters with the goal of minimizing the MSE yields an estimator with asymptotically valid inference that outperforms the MLE in finite samples. Even more surprising, consistency and asymptotic normality hold for any $\bpi_r$ with scaling $c^{\star}(\bpi_r)$; that is, we do not require the MSE-minimizing choice for the centroid $\btheta(\bpi_r)$ to achieve these properties, so long as the appropriate scaling is selected. This increases our likelihood of success in estimating the $k$ shrinkage parameters. Rather than needing to correctly select all $k$ parameters to minimize the MSE, we can select $k-1$ freely and are only required to choose the $k^{\textrm{th}}$ correctly to guarantee asymptotic inference.

\section{Data-driven shrinkage}\label{sec:implementation}
 
Our goal is to estimate $\bbeta$ using the estimator $\hbbeta(c, \bpi_r)$ with $c$ and $\bpi_r$ chosen so as to minimize the MSE displayed in equation \eqref{eqn:mse_with_c}, but this MSE depends on $\bbeta$, which is not available in practice. A reasonable approach would be to select $c$ and $\bpi_r$ by minimizing an unbiased estimator of the MSE. In this section, we first show that this leads to over-borrowing, and we then propose an alternative objective function for data-driven selection of the shrinkage parameters that mitigates this risk.

The MSE of $\hbbeta(c, \bpi_r)$, displayed in equation \eqref{eqn:mse_with_c}, depends on $\bbeta$ only through the term $c^2~ \|\bPi_r\{\bK\bA(\bpi_r) - \bI_{pk}\}\bbeta\|^2$. Estimating this term by na\"ively plugging in the MLE for $\bbeta$ yields a positively biased estimator for the MSE,
\begin{align}\label{eqn:biased_mse}
 \BMSE\{\hbbeta(c, \bpi_r)\} &=  c^2~ \|\bPi_r\{\bK\bA(\bpi_r) - \bI_{pk}\}\tbbeta\|^2    +  c^2 ~\tr(\Var [\bPi_r\{\bK\bA(\bpi_r) - \bI_{pk}\}\tbbeta]) \nonumber\\
     &\quad + 2c ~\tr(\Cov[\tbbeta, \bPi_r\{\bK\bA(\bpi_r) - \bI_{pk}\}\tbbeta])  + \tr\{\Var(\tbbeta)\}.
\end{align}
Indeed, we show in Section \ref{appsec:expected_BMSE} of the supplement that the expected value of $\BMSE$ is $\MSE\{\hbbeta(c, \bpi_r)\} + c^2~ \tr(\Var [\bPi_r\{\bK\bA(\bpi_r) - \bI_{pk}\}\tbbeta])$. The shrinkage parameters that minimize $\BMSE\{\hbbeta(c, \bpi_r)\}$ are not those that minimize the MSE.

In the spirit of \citet{stein_estimation_1981}, who proposed an unbiased risk estimator, we derive the unbiased estimator for the MSE (UMSE),
\begin{align}\label{eqn:umse}
     \UMSE\{\hbbeta(c, \bpi_r)\} &=  c^2~ \|\bPi_r\{\bK\bA(\bpi_r) - \bI_{pk}\}\tbbeta\|^2 
     + 2c~ \tr(\Cov[\tbbeta, \bPi_r\{\bK\bA(\bpi_r) - \bI_{pk}\}\tbbeta]) \nonumber\\
     &\quad + \tr\{\Var(\tbbeta)\}.
\end{align}
The full derivation is shown in Section \ref{appsec:umse_derivation} of the supplement.
Due to its unbiasedness, minimizing UMSE as a function of $c$ and $\bpi_r$ leads to asymptotically unbiased estimators of the minimizers of the MSE under mild regularity conditions. Due to their finite-sample bias, however, these minimizers lead to over-borrowing on average and therefore sub-optimal performance of $\hbbeta(c, \bpi_r)$. To see this, by Jensen's inequality, 
\begin{align*}
    \EV(\widehat{c}^\star) &= \EV\Bigl(\frac{-\tr(\Cov[\tbbeta, \bPi_r\{\bK\bA(\bpi_r) - \bI\}\tbbeta])}{ \|\bPi_r\{\bK\bA(\bpi_r) - \bI\}\tbbeta\|^2 }\Bigr) \geq \frac{-\tr(\Cov[\tbbeta, \bPi_r\{\bK\bA(\bpi_r) - \bI\}\tbbeta])}{ \EV[\|\bPi_r\{\bK\bA(\bpi_r) - \bI\}\tbbeta\|^2] } = c^\star.
\end{align*}
where $c^\star$ and $\widehat{c}^\star$ respectively denote the values of $c$ that minimize the MSE and UMSE, omitting dependence on $\bpi_r$.

\begin{figure}[ht!]
    \centering
    \includegraphics[width=0.8\linewidth]{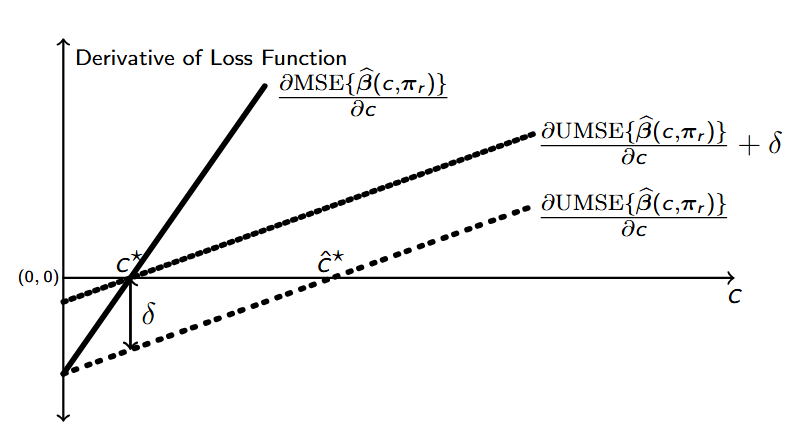}
    \caption{A demonstration of the proposed correction to the UMSE. The solid line yields the optimal scaling, $c^{\star}$, as the root. The lower dashed line over-borrows, on average. Shifting the lower line up by $\delta$ to the parallel dashed line above it gives a root at $c^{\star}$.}
    \label{fig:deriv_plot}
\end{figure}

We propose to reduce over-borrowing with a strategy adapted from \citet{firth_bias_1993} and illustrated in Figure \ref{fig:deriv_plot}. We first describe the main idea. The derivative of the UMSE has a root at $\hc^{\star}$ that is positively biased for $c^\star$. Therefore, for some $\delta>0$, $[\partial\UMSE\{\hbbeta(c, \bpi_r)\}]/(\partial c) + \delta$ has a root at $c^\star$. In words, the optimal borrowing $c^{\star}$ is obtained by shifting the UMSE derivative curve up by $\delta$. The shift needed to obtain a root at $c^\star$ is
\begin{align*}
\delta = \frac{\partial^2 }{\partial c^2}\UMSE\{\hbbeta(c, \bpi_r)\} (\hc^{\star} - c^{\star}) = 2\|\bPi_r\{\bK\bA(\bpi_r) - \bI\}\tbbeta\|^2 (\hc^{\star} - c^{\star}).
\end{align*}
As $\hc^\star - c^\star$ is unknown, we propose to estimate it with $\hc^\star - \tilde{c}$, where $\tilde{c}$ minimizes BMSE in equation \eqref{eqn:biased_mse}. The root of this corrected derivative will reduce borrowing and be closer, on average, to the optimal $c^{\star}$.
We show in Section \ref{appsec:pseudo_MSE_derivation} of the supplement that finding the root of this corrected derivative and selecting a MSE-minimizing $\bpi_r$ can be achieved by minimizing the loss function
\begin{align}\label{eqn:pseudo_mse}
    \|\bPi\{\bK\bA(\bpi) - \bI\}\tbbeta\|^2  - 2 \frac{\|\bPi\{\bK\bA(\bpi) - \bI\}\tbbeta\}\|^2\times \tr(\Cov[\tbbeta, \bPi\{\bK\bA(\bpi) - \bI\}\tbbeta])}{\tr(\Var[
\bPi\{\bK\bA(\bpi) - \bI\}\tbbeta]) +
\|\bPi\{\bK\bA(\bpi) - \bI\}\tbbeta\|^2}
\end{align} 
as a function of the unconstrained $\bpi$. 

We refer to equation \eqref{eqn:pseudo_mse} as the \textit{pseudo-MSE}, so-named to recognize that it is designed and used only for selecting the shrinkage parameters.
The asymptotic results discussed in Section \ref{sec:theory} hold under shrinkage parameter selection with the pseudo-MSE, as the minimizer of the pseudo-MSE converges to $c^{\star}$ as $n\to\infty$. 
We term our final proposed estimator the \textit{heterogeneity-adaptive meta-estimator} (HAM estimator) $\hbbeta(\bpi_{\ham})$ with the shrinkage parameters, $\bpi_{\ham} = (\pi_{j,\ham})_{j=1}^k$, minimizing equation \eqref{eqn:pseudo_mse}. A simulation study in Section \ref{appsec:mini_sim} of the supplement illustrates the performance of $\hbbeta(\bpi)$ under different shrinkage parameter selection mechanisms.

\section{Numerical support}\label{sec:sim}

\subsection{Overview}

In this section, we investigate the performance of the HAM estimator through four simulation settings. In each setting, we compare the HAM estimator $\hbbeta_{\ham} = \hbbeta(\bpi_{\ham})$ to the vector of study-specific MLEs $\tbbeta$. We also develop a ridge-like penalized regression competitor $\hbbeta_{\ridge}$ in Section \ref{appsec:ridgelike} of the supplement that uses a Euclidean-distance penalty to encourage similar estimates of $\bbeta$, similar in spirit to the fused ridge estimator \citep{lettink_two-dimensional_2023}. In each setting, we evaluate $\hbbeta_{\ham}, \tbbeta$ and $\hbbeta_{\ridge}$ in $1000$ Monte Carlo replicates. We report the empirical mean squared error (eMSE) and coverage of 95\% confidence intervals using equation \eqref{eqn:asymp_CI} for the HAM estimator and the large sample distribution of the MLE for $\tbbeta$.

\subsection{Setting 1: Varying sample size and number of parameters}\label{subsec:sim1}

We consider a low-heterogeneity setting with $k = 3$ studies, $p\in\{2, 4, 10, 20\}$ and ten unique combinations of $n_j\in\{100, 200, 300\}$. Within each Monte Carlo replicate, we generate $n_j$ observations $\bX_{j}$ and error terms $\bepsilon_j\sim \mathcal{N}(\bzero, \bI_{n_j})$: when $p = 2$, $\bX_{1j}=\bone$, $\bX_{2j}\sim \mathcal{N}(\bzero, \bI_{n_j})$; when $p = 4$, we further include $X_{3ij}\sim \text{Bern}(0.5)$, $X_{4ij} = X_{2ij} X_{3ij}$, $i=1,\dots, n_j$; when $p = 10$ and $p = 20$, the additional covariates are simulated independently from standard normal distributions.

The process for generating study-specific parameters intentionally violates a key assumption of random-effects meta-analysis: while the study parameters are generated from the same distribution, they do not come from a normal distribution. Specifically, for each $(n_1, n_2, n_3, p)$ combination, we generate $\bbeta_j = \bbeta_m + \boldsymbol{u}_j$ with $\boldsymbol{u}_j \in \mathbb{R}^p$ independent draws from $\text{Unif}(-0.25, 0.25)$ rounded to two decimal places. The values of $\bbeta_j$ are held constant across Monte Carlo replicates in each combination; values of $\bbeta_j$ are included in Table \ref{tab:setting1_parm_values} of the supplement. Table \ref{tab:sim1_emse} displays the eMSE of the three estimators $\tbbeta$, $\hbbeta_{\ham}$, and $\hbbeta_{\ridge}$. Across all combinations, the HAM estimator has a smaller eMSE than the MLE.  As sample sizes increase, the ratio of the eMSE for the HAM estimator to the eMSE for the MLE gets closer to 1; at larger sample sizes, the heterogeneity of the datasets is more easily detected and the HAM estimator reverts to the study-specific MLEs, preventing over-borrowing and enabling inference. The HAM estimator also yields a smaller eMSE than the ridge-like estimator across the majority of settings.

\begin{table}[h!]
\centering \resizebox{\linewidth}{!}{
\begin{tabular}{l|rrr|rrr|rrr|rrr}
  & \multicolumn{3}{|c|}{$p=2$} & \multicolumn{3}{|c|}{$p=4$} & \multicolumn{3}{|c}{$p=10$} & \multicolumn{3}{|c}{$p=20$}\\
\multicolumn{1}{l|}{$n_1, n_2, n_3$} & \multicolumn{1}{|c}{$\tbbeta$}  & \multicolumn{1}{c}{$\hbbeta_{\ham}$} & \multicolumn{1}{c|}{$\hbbeta_{\ridge}$} & \multicolumn{1}{|c}{$\tbbeta$} &\multicolumn{1}{c}{$\hbbeta_{\ham}$} &\multicolumn{1}{c|}{$\hbbeta_{\ridge}$} & \multicolumn{1}{|c}{$\tbbeta$} & \multicolumn{1}{c}{$\hbbeta_{\ham}$} &\multicolumn{1}{c}{$\hbbeta_{\ridge}$} & \multicolumn{1}{|c}{$\tbbeta$} & \multicolumn{1}{c}{$\hbbeta_{\ham}$} &\multicolumn{1}{c}{$\hbbeta_{\ridge}$}  \\ 
  \hline
100, 100, 100 & 6.3 & \textbf{4.6} & 4.8 & 37.4 & \textbf{27.3} & 28.4 & 60.8 & \textbf{40.2} & 41.7 & 105.5 & 77.1 & \textbf{76.6} \\ 
  100, 200, 100 & 5.2 & \textbf{3.8} & 4.2 & 31.6 & \textbf{23.8} & 24.8 & 50.4 & \textbf{33.7} & 34.6 & 85.9 & 64.4 & \textbf{64.0} \\ 
  100, 200, 200 & 4.1 & \textbf{3.2} & 3.6 & 24.7 & \textbf{18.7} & 18.9 & 38.8 & \textbf{27.2} & 28.4 & 66.3 & \textbf{51.7} & 52.2 \\ 
  100, 300, 100 & 4.8 & \textbf{3.5} & 3.9 & 29.4 & \textbf{21.7} & 22.4 & 47.4 & \textbf{31.4} & 32.0 & 80.9 & 60.0 & \textbf{59.5} \\ 
  100, 300, 200 & 3.6 & \textbf{2.8} & 3.3 & 23.2 & \textbf{17.3} & \textbf{17.3} & 35.6 & \textbf{24.6} & 25.9 & 60.8 & \textbf{47.1} & 47.6 \\ 
  100, 300, 300 & 3.4 & \textbf{2.7} & 3.2 & 20.0 & 14.7 & \textbf{14.5} & 32.7 & \textbf{22.7} & 24.4 & 55.2 & \textbf{43.0} & 43.5 \\ 
  200, 200, 200 & 3.0 & \textbf{2.5} & 2.8 & 18.5 & \textbf{15.2} & 15.9 & 28.8 & \textbf{21.7} & 22.6 & 47.0 & \textbf{40.0} & 40.1 \\ 
  200, 300, 200 & 2.7 & \textbf{2.3} & 2.6 & 16.8 & \textbf{13.8} & 14.3 & 25.5 & \textbf{19.4} & 20.2 & 41.2 & \textbf{35.4} & \textbf{35.4} \\ 
  200, 300, 300 & 2.4 & \textbf{2.1} & 2.3 & 14.7 & \textbf{12.0} & 12.2 & 22.1 & \textbf{17.4} & 18.5 & 35.6 & \textbf{31.2} & 31.6 \\ 
  300, 300, 300 & 2.0 & \textbf{1.7} & 1.9 & 12.2 & \textbf{10.4} & 10.9 & 18.6 & \textbf{15.1} & 15.8 & 30.0 & \textbf{27.0} & 27.1 \\
\end{tabular}
}
\caption{Empirical MSEs for each estimator multiplied by 100 for Setting 1. The smallest eMSE for each $(n_1, n_2, n_3, p)$ combination is bolded.}
\label{tab:sim1_emse}
\end{table}

Table \ref{tab:sim1_coverage} reports coverage rates of 95\% large-sample confidence intervals using $\hbbeta_{\ham}$. Study-level coverages grow closer to the nominal level with increasing $p$. As the number of covariates increases, our method more easily distinguishes between the study outcome distributions and appropriately decreases the amount of information sharing. Increasing the sample size for one study while holding others constant increases coverage for that study, but can decrease coverage for the others. For example, moving from study sizes $(100, 100, 100)$ to $(100, 200, 100)$ increases the coverage for study 2 but decreases the coverage for study 3 because the centroid shifts towards study 2's outcome distribution, which moves $\hbbeta_{3,\ham}$ further from $\bbeta_3$.

\begin{table}[h!]
\centering
 \resizebox{\linewidth}{!}{
\begin{tabular}{r|rrr|rrr|rrr|rrr}
   & \multicolumn{3}{|c|}{$p = 2$} &  \multicolumn{3}{|c|}{$p = 4$} &  \multicolumn{3}{|c}{$p = 10$} & \multicolumn{3}{|c}{$p = 20$}\\
 \multicolumn{1}{l|}{$n$} & $\bbeta_1$ & $\bbeta_2$ & $\bbeta_3$ & $\bbeta_1$ & $\bbeta_2$ & $\bbeta_3$ &  $\bbeta_1$ & $\bbeta_2$ & $\bbeta_3$  &  $\bbeta_1$ & $\bbeta_2$ & $\bbeta_3$  \\  \hline
100, 100, 100 & 87.9 & 92.6 & 89.7 & 93.3 & 90.4 & 89.8 & 93.0 & 91.7 & 92.4 & 92.3 & 92.2 & 92.0 \\ 
  100, 200, 100 & 88.6 & 93.1 & 88.5 & 92.4 & 91.5 & 88.5 & 92.6 & 91.9 & 91.0 & 91.9 & 93.2 & 91.5 \\ 
  100, 200, 200 & 85.2 & 93.3 & 90.6 & 92.2 & 90.3 & 89.0 & 92.5 & 91.0 & 91.6 & 92.0 & 92.8 & 92.5 \\ 
  100, 300, 100 & 86.5 & 93.8 & 88.8 & 92.8 & 92.9 & 86.8 & 91.7 & 92.9 & 90.2 & 91.8 & 93.7 & 91.5 \\ 
  100, 300, 200 & 85.4 & 94.3 & 90.5 & 91.8 & 90.2 & 87.9 & 92.6 & 91.7 & 91.2 & 91.4 & 93.6 & 92.2 \\ 
  100, 300, 300 & 84.5 & 93.2 & 90.6 & 92.9 & 90.1 & 89.3 & 91.9 & 90.6 & 91.7 & 91.2 & 93.4 & 92.8 \\ 
  200, 200, 200 & 89.2 & 92.0 & 89.3 & 92.4 & 90.1 & 90.0 & 93.0 & 91.6 & 92.2 & 92.7 & 92.5 & 92.9 \\ 
  200, 300, 200 & 86.6 & 93.2 & 87.5 & 91.5 & 90.5 & 89.8 & 92.7 & 92.5 & 91.4 & 92.5 & 93.4 & 92.5 \\ 
  200, 300, 300 & 85.9 & 92.8 & 90.0 & 92.0 & 90.9 & 89.8 & 92.5 & 91.5 & 92.0 & 92.4 & 93.4 & 92.9 \\ 
  300, 300, 300 & 89.0 & 92.3 & 89.4 & 92.2 & 91.1 & 90.5 & 93.1 & 91.6 & 92.0 & 93.2 & 93.3 & 93.1
\end{tabular}
}
\caption{Average coverage rate (\%) of 95\% confidence intervals using $\hbbeta_{\ham}$ in Setting 1.}
\label{tab:sim1_coverage}
\end{table}

\subsection{Setting 2: Varying number of studies and heterogeneity}\label{subsec:sim2}

We now examine the HAM estimator's performance with $p = 4$ parameters as the number of studies $k\in\{5, 10, 15\}$ increases under varying parameter heterogeneity conditions. In the first condition, there is no heterogeneity and $\bbeta_j \equiv \bbeta_m$, $j=1, \ldots, k$. In the second, the $\bbeta_j$ display mild heterogeneity with $\bbeta_j \sim \mathcal{N}(\bbeta_m, .1\bI_4)$. The $\bbeta_j$ in the third condition are moderately heterogeneous: $\bbeta_j \sim \mathcal{N}(\bbeta_m, .5\bI_4)$. Finally, the $\bbeta_j$ in the fourth condition have a mixture of no heterogeneity and high heterogeneity: $\bbeta_j \equiv \bbeta_m$ for $j=1,2,3$ and $\bbeta_j \sim \mathcal{N}(\bbeta_m, \bI_4)$ for $j=4, \ldots, k$. We emulate the study parameter generation process assumed for random-effects meta-analysis by drawing new values of $\bbeta_j$ with each Monte Carlo replicate. For each Monte Carlo replicate, we generate $n_j = 200$ observations $\bX_{j}$ and error terms, $\bepsilon_j\sim \mathcal{N}(\bzero, \bI_{n_j})$ following $\bX_{1j}=\bone$, $\bX_{2j}\sim \mathcal{N}(\bzero, \bI_{n_j})$, $X_{3ij}\sim \text{Bern}(.5)$ and $X_{4ij} = X_{2ij} X_{3ij}$, $i = 1,\dots, n_j$. 

Table \ref{tab:sim2_mse} displays the eMSE of $\tbbeta$, $\hbbeta_{\ham}$, and $\hbbeta_{\ridge}$ and the coverage rate of $\hbbeta_{\ham}$'s large-sample 95\% confidence interval. The coverage is closest to its nominal value in the condition with complete homogeneity and is furthest in the condition with mild heterogeneity. In the mild heterogeneity condition, the study outcome distributions are close together, and our method consequently over-borrows. Across all combinations of heterogeneity and $k$, $\hbbeta_{\ham}$ has eMSE less than or equal to that of $\tbbeta$. The HAM estimator's eMSE is slightly larger than $\hbbeta_{\ridge}$'s with no, mild or moderate heterogeneity, but is smaller than $\hbbeta_{\ridge}$'s in the mixture condition due to the added flexibility of study-specific shrinkage parameters. A smaller eMSE is only desirable, however, if the inference is calibrated, and the ridge-like estimator has no inferential guarantees.

\begin{table}[h!]
\centering
\begin{tabular}{rr|rrr|r}
 &  & \multicolumn{3}{|c|}{eMSE$\times 100$} & Coverage rate  (\%) \\ 
Heterogeneity& $k$ & $\tbbeta$ & $\hbbeta_{\ham}$ & $\hbbeta_{\ridge}$ & $\hbbeta_{\ham}$\\
  \hline
None & 5 & 29.9 & 12.6 & 8.1 & 94.5 \\ 
 & 10 & 62.1 & 21.3 & 8.4 & 94.4 \\ 
 & 15 & 92.5 & 28.6 & 7.5 & 94.8 \\ 
Mild & 5 & 30.5 & 28.3 & 27.1 & 91.1 \\ 
 & 10 & 61.3 & 56.1 & 52.2 & 90.9 \\ 
 & 15 & 91.9 & 83.7 & 77.7 & 90.9 \\ 
Moderate & 5 & 30.8 & 30.8 & 30.0 & 93.2 \\ 
 & 10 & 61.4 & 61.3 & 59.1 & 93.0 \\ 
 & 15 & 92.8 & 92.5 & 89.4 & 92.7 \\ 
Mixture & 5 & 30.6 & 24.4 & 29.2 & 92.7 \\ 
 & 10 & 61.4 & 55.8 & 59.9 & 92.9 \\ 
 & 15 & 92.0 & 86.0 & 89.6 & 93.3 \\ 
\end{tabular}
\caption{Empirical MSEs for each estimator (multiplied by 100) and coverage rate of 95\% confidence intervals using $\hbbeta_{\ham}$ in setting 2.}
\label{tab:sim2_mse}
\end{table}

Examining the selected shrinkage parameters $\bpi_{\ham}$ provides additional insight into the information-borrowing mechanism. Figure \ref{fig:sim2_pi} shows the distribution of $\bpi_{\ham}$ for each heterogeneity condition across replicates when $k=15$. In the condition with no heterogeneity, the selected $\bpi_{\ham}$ are centered around $0.5$, which is slightly more conservative than the MSE-minimizing $\pi_j$ of $1$. In the mixture condition, the $\pi_{\ham,j}$ associated with the homogeneous studies are separated from the $\pi_{\ham,j}$ associated with the heterogeneous studies, as desired. 

\begin{figure}[ht!]
    \centering
    \includegraphics[width=0.95\linewidth]{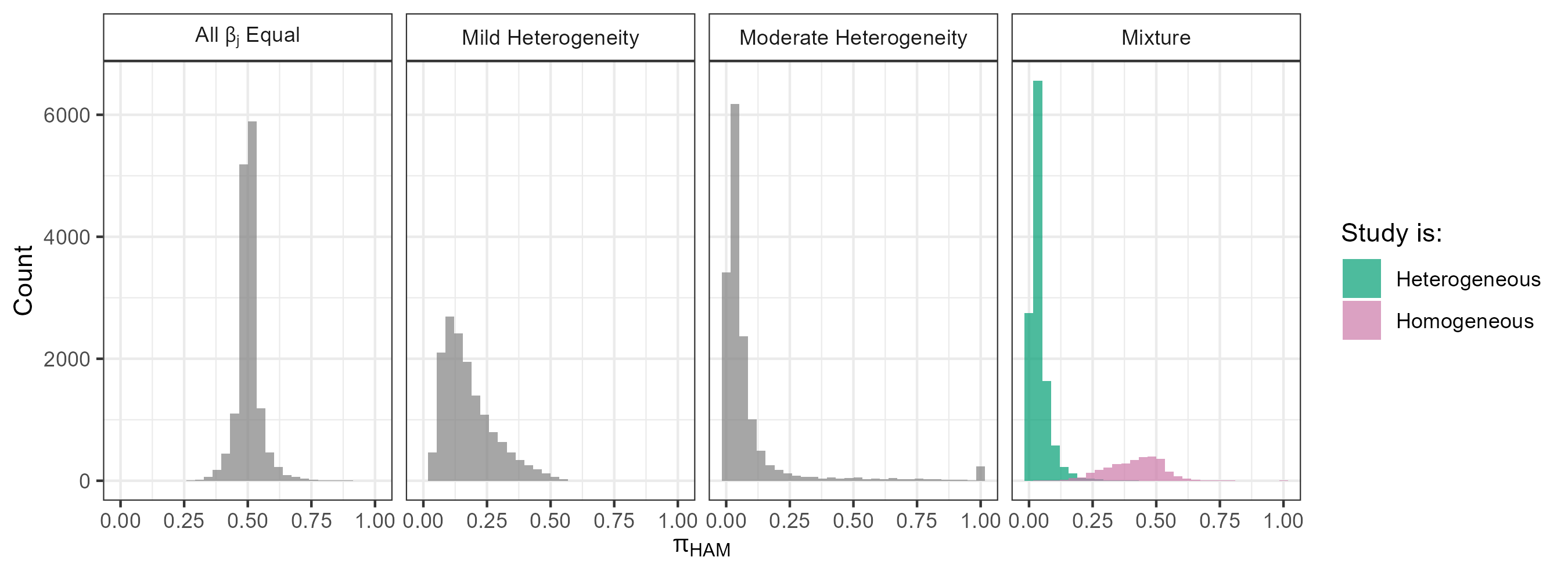}
    \caption{$\bpi_{\ham}$ values across heterogeneity conditions when $k = 15$ in setting 2.}
    \label{fig:sim2_pi}
\end{figure}

\subsection{Setting 3: Meta-analysis with overlapping covariates}\label{subsec:sim3}
Meta-analyses often combine the effect sizes for a shared covariate of interest after controlling for any relevant covariates at the study-level, as described in Section \ref{subsec:est_subset}. 
We consider $k=20$ studies ($n_j=200, j=1, \ldots, 20$) with $p=1$ and $q_j=4$ where all studies share the first covariate but the other covariates vary across studies. We consider two scenarios for the generation of $\beta_j$, the coefficient of interest. In the first scenario, $\beta_j$ is drawn from a single distribution, $\beta_j = 3.47 + u_j$ with $u_j \sim \text{Unif}(-0.25, 0.25)$. In the second, more challenging scenario, $\beta_j$ is drawn from one of two distributions, $\beta_j\sim\text{Unif}(-0.25, 0.25)$ for $j=1,\dots, 5$ and $\beta_j = 3.47 + u_j$, $u_j \sim \text{Unif}(-0.25, 0.25)$ for $j=6,\dots, 20$. Values of $\bgamma_j$ are independently drawn from $\mathcal{N}(\bzero, \bI_4)$. All parameter values are rounded to two decimal places and held constant over Monte Carlo replicates. The first, shared covariate is an intercept and the remaining covariates are simulated independently from standard normal distributions. Error terms are generated independently from standard normal distributions.

Table \ref{tab:sim3} displays the eMSE for $\tbbeta$, $\hbbeta_{\ham}$, and $\hbbeta_{\ridge}$. In both scenarios, the eMSE of $\hbbeta_{\ham}$ is smaller than the eMSE of $\tbbeta$. When $\beta_j$ are drawn from a single distribution, $\hbbeta_{\ham}$ and $\hbbeta_{\ridge}$ have similar eMSEs. As in Setting 2 (Section \ref{subsec:sim2}), $\hbbeta_{\ham}$ outperforms $\hbbeta_{\ridge}$ when $\beta_j$ is drawn from one of two distributions.

\begin{table}[ht]
\centering
\begin{tabular}{r|rrr|r}
 $\beta_j$ scenario &  \multicolumn{3}{|c|}{eMSE$\times 100$} & Coverage rate using $\hbbeta_{\ham}$ (\%)\\ 
 & $\tbbeta$ & $\hbbeta_{\ham}$ & $\hbbeta_{\ridge}$ &\\
  \hline
Single distribution & 10.4 & 8.2 & 8.1 & 83.8 \\ 
Two distributions & 10.4 & 9.1 & 10.3 & 87.3 
\end{tabular}
\caption{Empirical MSEs for each estimator (multiplied by 100) and coverage of 95\% confidence intervals using $\hbbeta_{\ham}$ in setting 3.}
\label{tab:sim3}
\end{table}

Table \ref{tab:sim3} also reports the coverage rate of $\hbbeta_{\ham}$'s large-sample 95\% confidence interval. We investigate the observed undercoverage in Figure \ref{fig:sim3_beta1}, where we plot the coverage of $\widehat{\beta}_{\ham, j}$ versus $\beta_j$; we use colour to denote which quartile the average (across the 1000 Monte Carlo replicates) $\pi_{j, \ham}$ belongs to, with the quartiles computed across all studies. When $\beta_j$ is drawn from a single distribution, studies with $\beta_j$ values closest to $3.47$, the center of the distribution, have the largest average $\pi_{j,\ham}$ and coverage nearest to the nominal level. When $\beta_j$ is drawn from one of two distributions, the coverage is closer to the nominal level for the five studies with $\beta_j\sim \text{Unif}(-0.25, 0.25)$. These studies are far enough away from the others that their selected shrinkage parameters are close to zero. In both scenarios, coverage is lower for studies where the parameter is not too far from $3.47$. As a rule of thumb, our estimator tends to over-borrow in situations where it struggles to distinguish between studies because the MLEs are close to each other. Increasing the dimension $p$ or the sample size alleviates this issue, as seen in setting 1 (Section \ref{subsec:sim1}).

\begin{figure}[ht!]
    \centering
    \includegraphics[width=0.8\linewidth]{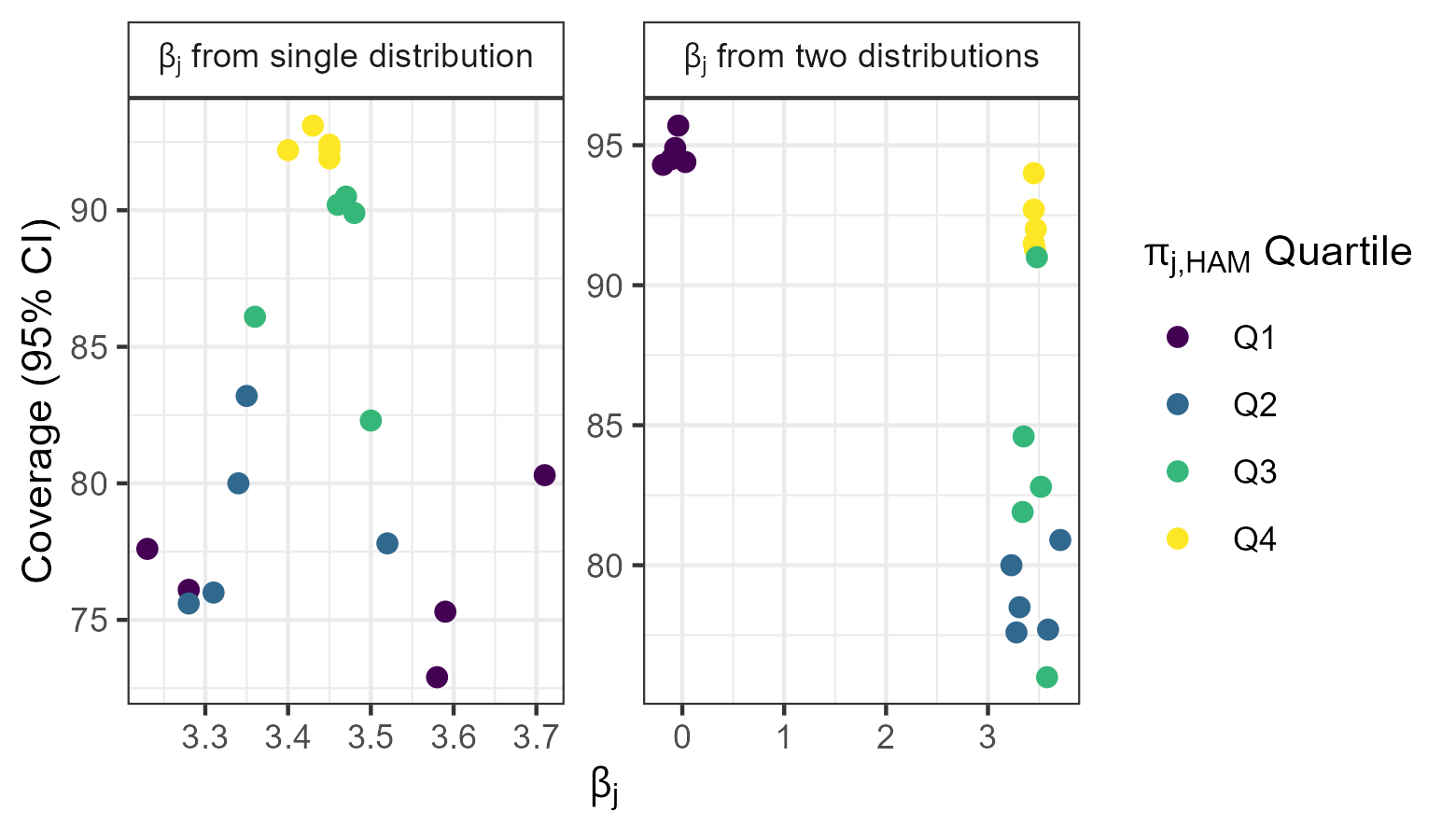}
    \caption{Coverage of $\widehat{\beta}_{\ham, j}$ versus $\beta_j$ for each study in setting 3. }
    \label{fig:sim3_beta1}
\end{figure}

\subsection{Setting 4: Varying covariate data generation process}\label{subsec:sim4}
We investigate the performance of $\hbbeta_{\ham}$ in the presence of population heterogeneity by varying the covariate data generation process. We consider $k=3$ studies with $p = 4$ covariates each, and generate $\bbeta_j \sim \text{Unif}(-0.25, 0.25)$, rounding to two decimal places and holding these values constant over the Monte Carlo replicates. The sample sizes are $n_j\equiv n \in\{20, 50, 100, 200, 500\}$ and the error terms are $\bepsilon_j\sim \mathcal{N}(\bzero, \sigma^2_j\bI_n)$. In each study, $\bX_j$ has a column of ones for the intercept with the remaining covariates independently generated from $\mathcal{N}(\mu_j, b^2\bI_{p-1})$, with $b=1$ when $\mu_j=0$ and $b=\mu_j$ otherwise. We consider four scenarios for $\bmu=(\mu_1, \mu_2, \mu_3)$ and $\bsigma^2=(\sigma_1^2, \sigma_2^2, \sigma_3^3)$: (i) $\bmu=(0, 0, 0)$, $\bsigma^2=(1, 1, 1)$; (ii) $\bmu=(10, 10, 10)$, $\bsigma^2=(100, 100, 100)$; (iii) $\bmu=(0, 1, 2)$, $\bsigma^2=(1, 1, 4)$; (iv) $\bmu=(0, 5, 10)$, $\bsigma^2=(1, 25, 100)$. Across every setting, the variation of the error term is equal to the variation in the non-intercept covariates to allow us to isolate the impact of increasing the scale of the non-intercept covariates. 

Table \ref{tab:sim4_results1} displays the eMSE and coverage rate of 95\% asymptotic confidence intervals. Scenarios (ii) and (iv), each having at least one study with $\mu_j = 10$, have lower coverage than the nominal rate even as the sample size grows large. These scenarios also show a greater reduction in eMSE relative to the MLE at large sample sizes, compared to scenarios (i) and (iii). To diagnose the coverage issue we examine the 95\% confidence intervals of $\tbbeta$ at $n = 500$ in Figure \ref{fig:sim4_mleCI}. The overlap in confidence intervals for the intercept coefficient $\beta_1$ in scenarios (ii) and (iv) is leading our estimator to over-borrow, yielding large reductions in eMSE at the expense of coverage.

\begin{table}[ht]
\centering
\resizebox{\linewidth}{!}{
\begin{tabular}{r|rrrr|rrrr|rrrr|rrrr}
  & \multicolumn{4}{|c}{scenario (i)}  & 
   \multicolumn{4}{|c}{scenario (ii)}  & 
 \multicolumn{4}{|c}{scenario (iii)}& \multicolumn{4}{|c}{scenario (iv)} \\
  & \multicolumn{3}{c}{eMSE$\times 100$} & CR & \multicolumn{3}{c}{eMSE$\times 100$} & CR & \multicolumn{3}{c}{eMSE$\times 100$} & CR & \multicolumn{3}{c}{eMSE$\times 100$} & CR \\
n & $\tbbeta$ & $\hbbeta_{\ham}$ & $\hbbeta_{\ridge}$ & $\hbbeta_{\ham}$ 
  & $\tbbeta$ & $\hbbeta_{\ham}$ & $\hbbeta_{\ridge}$ & $\hbbeta_{\ham}$ 
  & $\tbbeta$ & $\hbbeta_{\ham}$ & $\hbbeta_{\ridge}$ & $\hbbeta_{\ham}$ 
  & $\tbbeta$ & $\hbbeta_{\ham}$ & $\hbbeta_{\ridge}$ & $\hbbeta_{\ham}$ \\ 
  \hline
20 & 77.5 & 44.8 & 40.2 & 91.1 & 7756.1 & 4320.0 & 3569.1 & 91.4 & 198.7 & 100.0 & 92.0 & 90.4 & 3262.6 & 1427.1 & 997.3 & 90.3 \\ 
  50 & 25.9 & 17.7 & 17.7 & 91.7 & 2771.6 & 1631.6 & 1383.8 & 90.1 & 66.3 & 37.9 & 34.7 & 90.0 & 1120.9 & 477.6 & 355.8 & 87.6 \\ 
  100 & 12.6 & 10.0 & 10.4 & 91.1 & 1240.9 & 719.4 & 593.9 & 87.3 & 31.5 & 20.9 & 17.9 & 88.4 & 539.3 & 233.3 & 172.3 & 83.2 \\ 
  200 & 6.2 & 5.5 & 5.6 & 91.2 & 624.5 & 375.5 & 309.2 & 81.5 & 15.9 & 12.4 & 9.8 & 87.9 & 255.5 & 114.0 & 83.5 & 76.9 \\ 
  500 & 2.4 & 2.3 & 2.3 & 92.5 & 246.3 & 151.5 & 120.7 & 69.6 & 6.0 & 5.8 & 4.3 & 88.8 & 104.2 & 52.5 & 36.7 & 69.1 \\  
\end{tabular}}
\caption{Empirical MSEs, multiplied by 100, and coverage rate (CR, \%) for 95\% confidence intervals in simulation setting 4.}
\label{tab:sim4_results1}
\end{table}

\begin{figure}[ht!]
    \centering
    \includegraphics[width=0.7\linewidth]{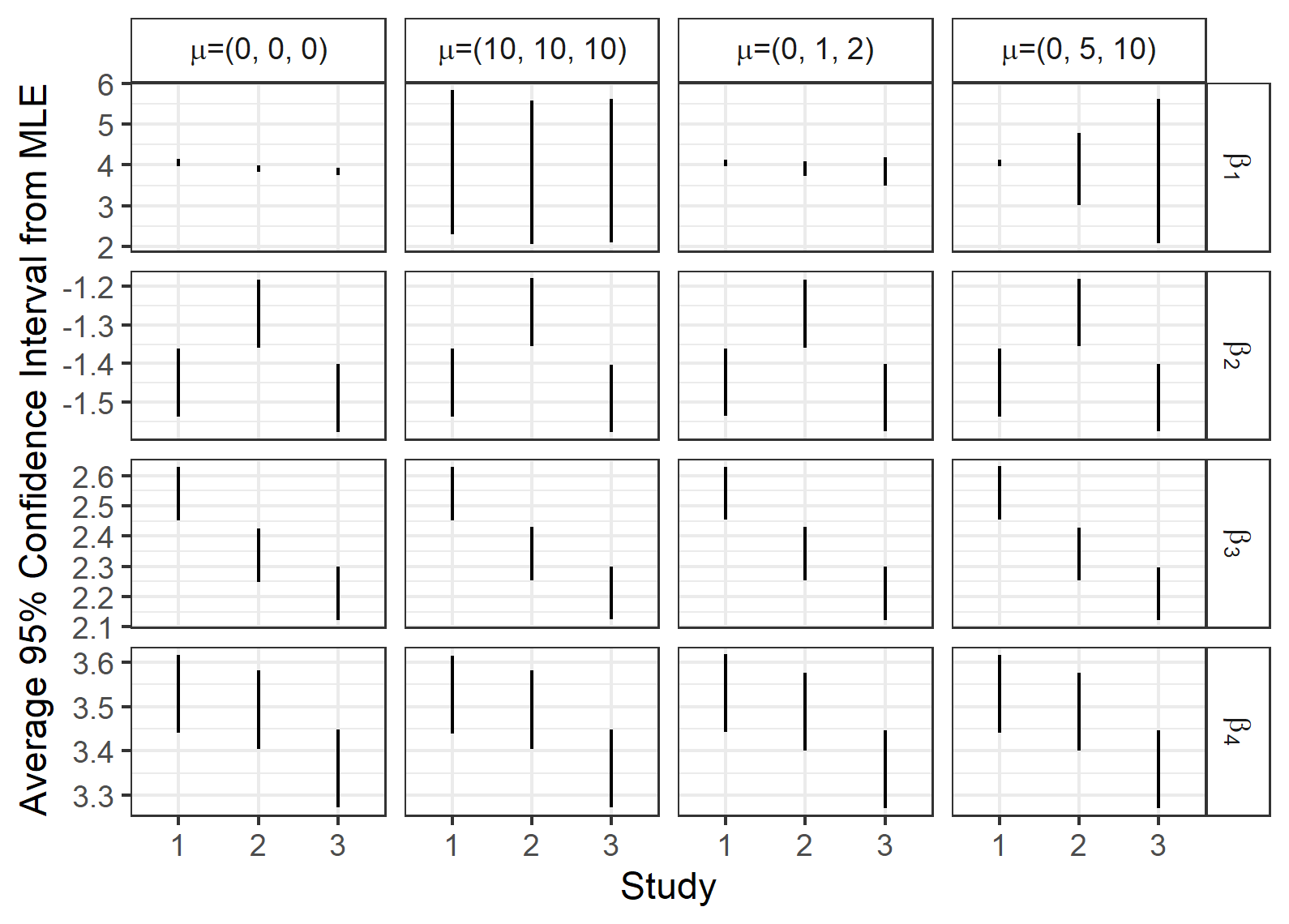}
    \caption{Average 95\% confidence intervals at $n=500$ for $\tbbeta$. In scenarios (ii) and (iv), the confidence intervals for the intercept, $\beta_1$, overlap substantially.}
    \label{fig:sim4_mleCI}
\end{figure}

We remedy the over-borrowing by computing $\hbbeta_{\ham}$ using data that have been rescaled so that each non-intercept covariate has a standard deviation of $1$. This standardization prevents the intercept covariate from becoming overly informative, and can be done without accessing the raw data when the standard deviation for each covariate is available, which is usually the case in practice. The results are shown in Table \ref{tab:sim4_results2}. Across all scenarios, the eMSE improvement on the standardized data is smaller than on the original data and the coverage approaches the nominal level faster as $n$ grows. We thus recommend this simple rescaling as a general pre-processing step to improve the finite-sample performance of our estimator.

\begin{table}[ht]
\centering
\resizebox{\linewidth}{!}{
\begin{tabular}{r|rrr|rrr|rrr|rrr}
  & \multicolumn{3}{|c}{scenario (i)}  & 
   \multicolumn{3}{|c}{scenario (ii)}  & 
 \multicolumn{3}{|c}{scenario (iii)}& \multicolumn{3}{|c}{scenario (iv)} \\
  & \multicolumn{2}{c}{eMSE$\times 100$} & CR & \multicolumn{2}{c}{eMSE$\times 100$} & CR & \multicolumn{2}{c}{eMSE$\times 100$} & CR & \multicolumn{2}{c}{eMSE$\times 100$} & CR \\
n & $\tbbeta$ & $\hbbeta_{\ham}$ & $\hbbeta_{\ham}$ 
  & $\tbbeta$ & $\hbbeta_{\ham}$ & $\hbbeta_{\ham}$ 
  & $\tbbeta$ & $\hbbeta_{\ham}$ & $\hbbeta_{\ham}$ 
  & $\tbbeta$ & $\hbbeta_{\ham}$ & $\hbbeta_{\ham}$ \\ 
  \hline
20 & 71.4 & 58.7 & 89.3 & 13073.0 & 9842.3 & 88.5 & 246.0 & 227.3 & 88.9 & 5447.7 & 5269.2 & 90.2 \\ 
  50 & 25.1 & 21.2 & 91.0 & 4697.8 & 3739.3 & 89.1 & 84.4 & 80.4 & 92.2 & 1919.6 & 1889.2 & 92.8 \\ 
  100 & 12.4 & 10.9 & 91.5 & 2150.5 & 1768.2 & 89.8 & 40.3 & 38.8 & 92.6 & 919.4 & 919.8 & 94.3 \\ 
  200 & 6.1 & 5.7 & 91.5 & 1074.3 & 914.9 & 90.1 & 20.5 & 19.8 & 92.8 & 440.0 & 441.9 & 94.5 \\ 
  500 & 2.4 & 2.3 & 92.8 & 426.2 & 390.8 & 91.0 & 7.8 & 7.7 & 92.8 & 178.2 & 178.7 & 94.9 \\ 
\end{tabular}}
\caption{Empirical MSEs, multiplied by 100, and coverage rate (CR, \%) of 95\% confidence intervals in simulation setting 4 on rescaled data.}
\label{tab:sim4_results2}
\end{table}

\section{Real data analysis} \label{sec:data}

We illustrate the practical advantages of the HAM estimator in a meta-analysis of covariates of intensive-care unit (ICU) length of stay using real data from the eICU Collaborative Research Database \citep{eicu2019}. Section \ref{appsec:eicu_access} in the supplement describes the access procedure for these data. Longer ICU stays are associated with higher healthcare costs and adverse health outcomes, including an increased risk of mortality after discharge \citep{williams_effect_2010, moitra_relationship_2016}. Anticipating and reducing the length of stay is also valuable to hospitals for resource management, especially in the presence of ICU bed shortages \citep{rapoport_length_2003, bai_operations_2018}. We investigate the association between ICU length of stay and a set of covariates measured at admission to the ICU across multiple hospitals. Fitting a random-effects meta-analysis requires unverifiable assumptions on the distribution of hospital-level parameters, but fitting hospital-level models individually does not take full advantage of similarities between hospitals. The HAM estimator provides inference for each hospital while also allowing information borrowing to improve precision.

We conduct the meta-analysis using the log transform of the ICU length of stay as the outcome. We assess the association between length of stay and a patient's age, gender, systolic and diastolic blood pressure when entering the ICU, whether a patient was admitted from the emergency department (ED) or from another source, and the Acute Physiology and Chronic Health Evaluation (APACHE) IV score. The APACHE IV score incorporates clinical and health history data collected from patients within the first day in the ICU, but it is designed with ICU mortality as the primary outcome \citep{zimmerman_acute_2006}. Prior work indicates that the APACHE IV score is a poor predictor of ICU length of stay; we include it in our model to help account for initial health status and to assess whether the association between the score and length of stay varies across hospitals \citep{verburg_comparison_2014, zangmo_validating_2023}. We consider only patients discharged alive from the ICU. Our data consist of $k=29$ hospitals from the West region of the eICU database for which there were at least 100 patients recorded as discharged alive from the ICU in 2014 and 2015. We standardize the continuous predictors as discussed in Section \ref{subsec:sim4}.

A preliminary examination shows substantial heterogeneity among hospitals ($I^2 = 79.6\%$) suggesting that conducting a traditional meta-analysis is not meaningfully interpretable. To implement our method, we find the vector $\bpi_{\ham}$ that minimizes the pseudo-MSE in equation \eqref{eqn:pseudo_mse}. As visualized in Figure \ref{fig:da_beta_apache}, the majority of hospitals have $\pi_{j,\ham}$ between 0.3 and 0.5, while four hospitals have a smaller $\pi_{j,\ham}$. The variability in $\bpi_{\ham}$ echoes the large $I^2$; we would expect to see $\pi_{j,\ham}$ close to 0.5 for all studies in the presence of homogeneity. Using $\bpi_{\ham}$, we first compare the estimated centroid $\hbtheta(\bpi_{\ham})$ to results from fixed- and random-effect meta-analyses in Table \ref{tab:combined_ci}, which displays the 95\% confidence intervals for the meta-estimates (the confidence interval using $\hbtheta(\bpi_{\ham})$ is constructed using the distribution of the MLE $\tbbeta$). While the fixed- and random-effect confidence intervals give inference on a shared parameter $\bbeta_m$, the confidence interval for $\hbtheta(\bpi_{\ham})$ is built to capture the true value of the centroid $\btheta$, which is not necessarily equal to $\bbeta_m$. Nevertheless, in this case, the three confidence intervals cover similar regions.

To show the efficiency gain from borrowing information across studies, we compare the study-specific MLEs to the HAM estimates, $\hbbeta(\bpi_{\ham})$. Table \ref{tab:ht_results} reports the result of hypothesis tests by covariate and, as expected, in all but one case the HAM estimates are more often significant at level $\alpha = 0.05$. In particular, the ``age'' and ``admission from an ED'' effect estimates gain statistical significance using HAM for several hospitals. Figure \ref{fig:da_beta_apache} plots the HAM and MLE confidence intervals for $\beta_{\text{APACHE IV}}$ with the corresponding $\pi_{j, \ham}$. Estimates for hospitals with high $\pi_{j,\ham}$ experience a large shift in their point estimate toward the centroid estimate, as well as a reduction in the width of the corresponding confidence interval. Figure \ref{fig:coeffs95} displays the 95\% confidence intervals for all covariates, with hospitals sorted from smallest to largest $\pi_{j,\ham}$. Changes in significance between the MLE and HAM estimators are more frequent as $\pi_{j,\ham}$ increases, reflecting the decrease in variance with more information sharing. 

Overall, an increasing APACHE IV score, indicative of worse clinical condition at ICU entry, is associated with a longer ICU length of stay for all hospitals. In fact, out of the selected covariates, the APACHE IV score is the only one which is significantly related to the log-transformed length of stay for all hospitals using the HAM estimator. All else held constant, patients admitted from the ED tend to have shorter stays in the ICU than other patients, for most hospitals. Counterintuitively, for most hospitals, older patients also have shorter stays in the ICU after accounting for other covariates. This result is perhaps a by-product of keeping only patients who are discharged alive; filtering disproportionately removes older individuals with longer ICU stays.

\begin{table}[ht!]
\centering
\begin{tabular}{r|rr|rr|rr}  
     & \multicolumn{2}{c|}{Fixed-Effect} & \multicolumn{2}{c|}{Random-Effect} & \multicolumn{2}{c}{$\hbtheta(\bpi_{\ham})$}\\ 
Covariate & Lower & Upper & Lower & Upper  & Lower & Upper  \\ 
  \hline
Intercept & 0.475 & 0.527 & 0.479 & 0.632 & 0.498 & 0.553 \\ 
  Gender & $-$0.016 & 0.040 & $-$0.012 & 0.069 & $-$0.003 & 0.056 \\ 
  Admission from ED & $-$0.083 & $-$0.027 & $-$0.146 & $-$0.040 & $-$0.126 & $-$0.067 \\ 
  APACHE IV score & 0.344 & 0.374 & 0.329 & 0.381 & 0.348 & 0.379 \\ 
  Age & $-$0.091 & $-$0.060 & $-$0.094 & $-$0.055 & $-$0.089 & $-$0.057 \\ 
  Diastolic BP & 0.020 & 0.057 & 0.023 & 0.064 & 0.024 & 0.064 \\ 
  Systolic BP & $-$0.055 & $-$0.017 & $-$0.063 & $-$0.019 & $-$0.057 & $-$0.018 \\ 
\end{tabular}
\caption{Lower and upper bounds of 95\% confidence intervals for meta-estimators.} 
\label{tab:combined_ci}
\end{table}

\begin{figure}[ht!]
    \centering
    \includegraphics[width=\linewidth]{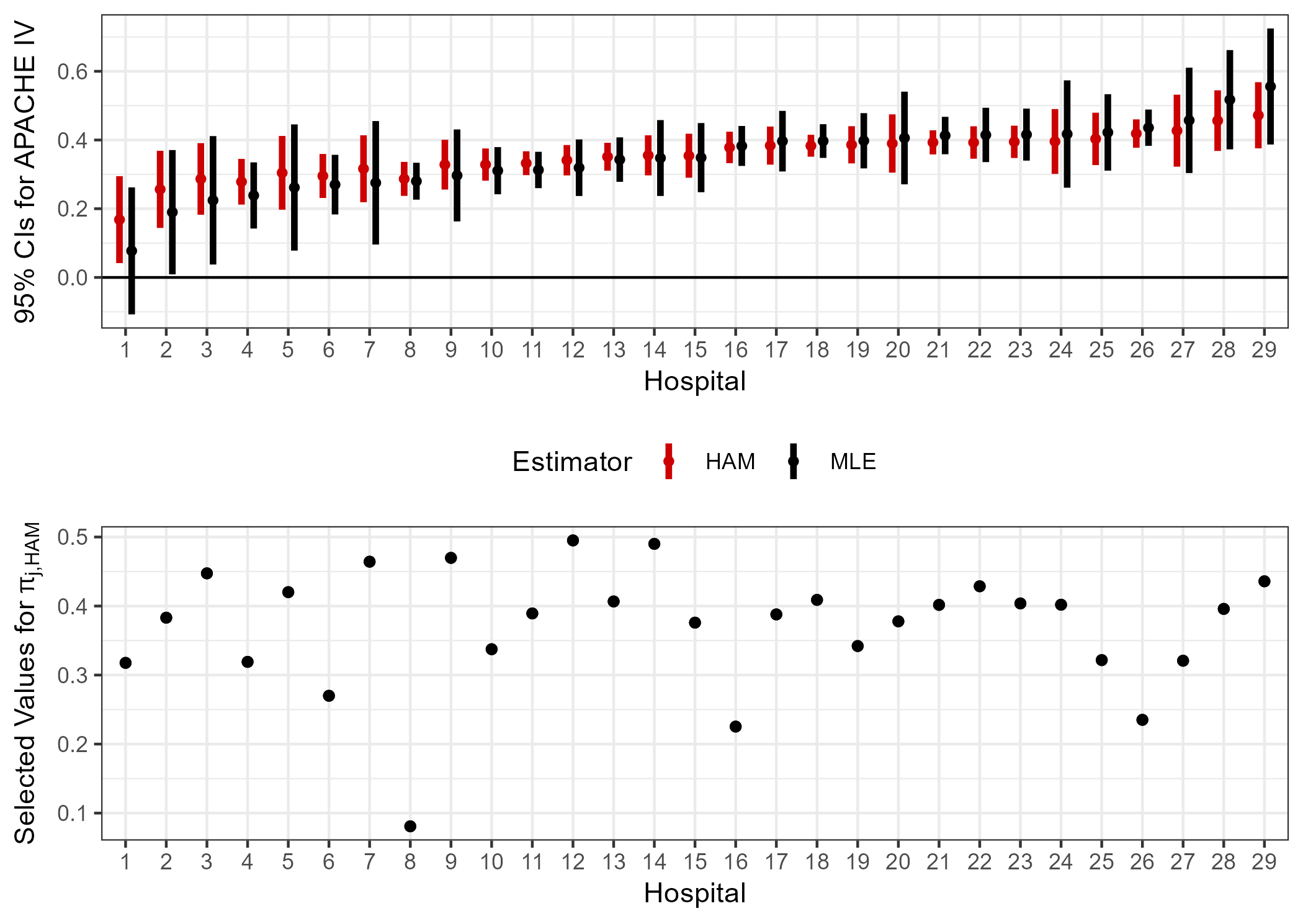}
    \caption{Hospital-level 95\% confidence intervals for $\beta_{\textrm{APACHE IV}}$ using $\hbbeta(\bpi_{\ham})$ and $\tbbeta$, and corresponding $\pi_{j, \ham}$. Hospitals are ordered from smallest to largest MLE.}
    \label{fig:da_beta_apache}
\end{figure}

\begin{figure}[ht!]
    \centering
    \includegraphics[width=0.9\linewidth]{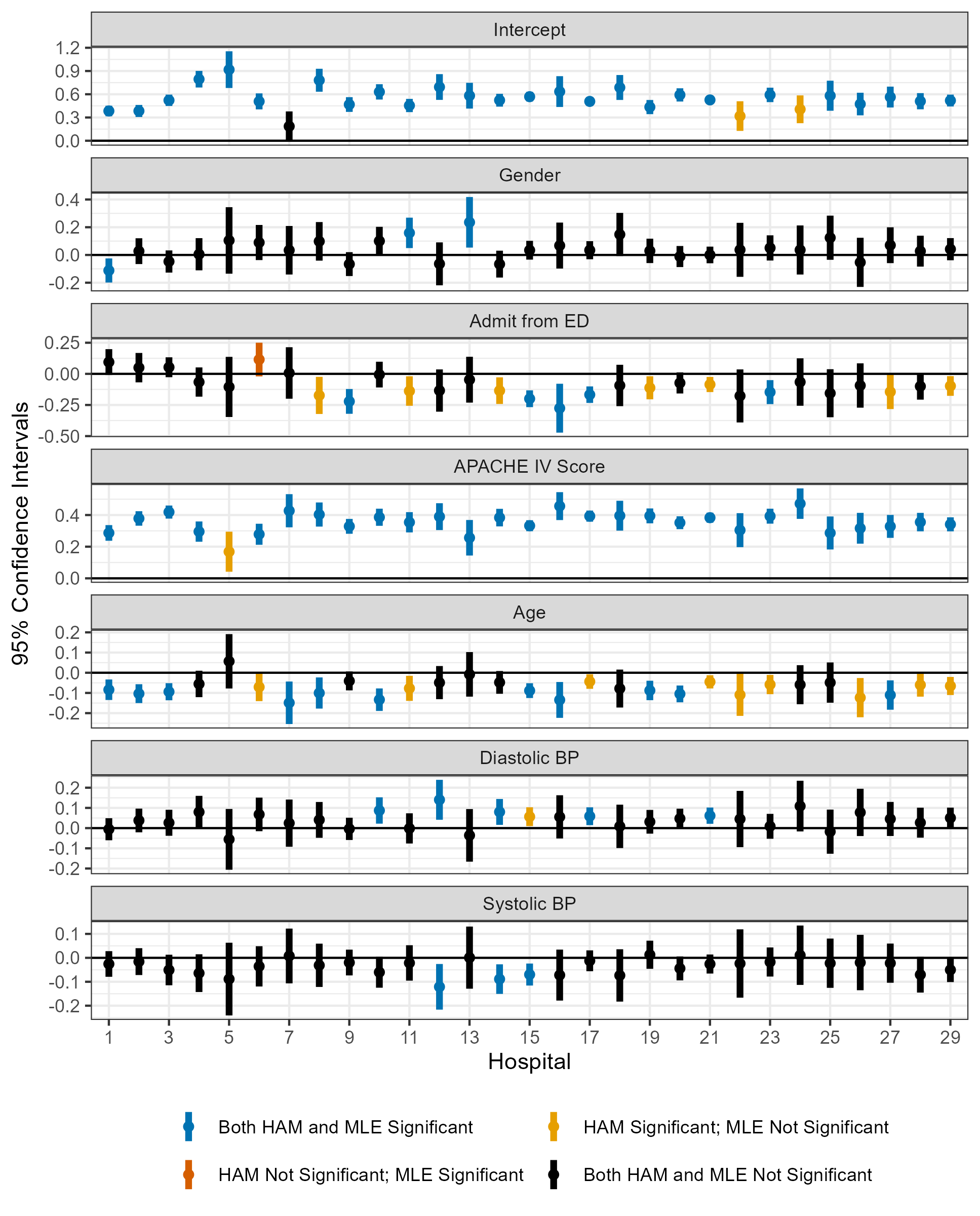}
    \caption{95\% confidence intervals for each coefficient using the HAM estimator with significance at level $0.05$ using both HAM and MLE estimators. Hospitals are ordered from smallest  to largest $\pi_{j,\ham}$.}
    \label{fig:coeffs95}
\end{figure}

\begin{table}[ht]
\centering
\begin{tabular}{l|rr|rr} 
 & \multicolumn{2}{c}{$p_{\ham} \leq 0.05$}& \multicolumn{2}{c}{$p_{\ham} > 0.05$}\\
Covariate & $p_{\mle} \leq 0.05$ & $p_{\mle} > 0.05$ & $p_{\mle} \leq 0.05$ & $p_{\mle} > 0.05$ \\ 
  \hline
Intercept            &   1 &   0 &   2 &  26 \\ 
Gender               &  26 &   0 &   0 &   3 \\ 
Admission from ED    &  16 &   1 &   7 &   5 \\ 
APACHE IV Score      &   0 &   0 &   1 &  28 \\ 
Age                  &   9 &   0 &   9 &  11 \\ 
Diastolic BP         &  23 &   0 &   1 &   5 \\ 
Systolic BP          &  26 &   0 &   0 &   3 \\ 
\end{tabular}
   \caption{Hypothesis test results across $k = 29$ hospitals for each coefficient versus a null hypothesis of $0$. The $p_{\ham}$ and $p_{\mle}$ denote $p$-values for each covariate effect using the HAM and MLE, respectively.}
    \label{tab:ht_results}
\end{table}

\section{Conclusion}

Several unique aspects of our proposal bear emphasizing. First, we introduce a new centroid as the data-sharing mechanism that links estimates from individual studies. This centroid symmetrizes the transfer learning inferential goal to adapt it to the meta-analysis setting, and its estimation introduces important flexibility for data-adaptive borrowing. Study-specific shrinkage parameters further allow our estimator to differentially borrow according to a study's similarity to the centroid. Second, our use of a Kullback-Leibler divergence shrinkage penalty leads to intuitive and interpretable estimators with the familiar fixed-effect meta-estimator as a special case. Third, by carefully choosing the amount of shrinkage, our estimator is both more efficient than the MLE in finite samples \emph{and} has large-sample inferential guarantees critical for practical use. The proof of these properties relies on an important new parameterization that decouples the selection of the centroid from the selection of the amount of shrinkage. Finally, while data integration methods often rely on cross-validation or raw data access to estimate the optimal amount of data sharing \citep[e.g.][]{han_federated_2025, wu_transfer_2023}, our new approach relies on a bias-reduced estimator of shrinkage parameters that balances both finite-sample and asymptotic bias. Put together, these results bestow on our estimator data-adaptive flexibility with statistical efficiency guarantees that preserve practical advantages.

One limitation of the HAM estimator is that it requires the complete covariance matrix for the study-specific MLEs. While it may not be always readily available, it contains no participant-level data and can therefore be shared without violating data privacy. In fact, practitioners have advocated for its automatic inclusion in publications \citep{zientek_matrix_2009}. Empirically, we also found evidence that our estimator's large-sample properties benefit from standardized covariate inputs. These can be obtained without access to the raw data when the standard deviations of covariates are reported, as is usually the case.

Future work should focus on extending the HAM estimator to generalized linear models to enable its implementation in a wider variety of models. Future work may also consider different inferential goals and investigate the theoretical and empirical properties of the HAM estimator when selecting shrinkage parameters using a different loss function than the MSE. For example, one could instead select $\bpi$ to minimize the  mean squared prediction error or the MSE for a subset of study effects. We leave these open questions to the interested reader.

\section*{Acknowledgments}

This work was supported by a grant from the National Science Foundation (DMS~2337943).

\bibliographystyle{apalike}
\bibliography{references3}

\begin{thebibliography}{}

\bibitem[Amari, 2020]{amari_information_2020}
Amari, S.-i. (2020).
\newblock {\em Information geometry and its applications}.
\newblock Springer Japan.

\bibitem[Bai et~al., 2018]{bai_operations_2018}
Bai, J., Fügener, A., Schoenfelder, J., and Brunner, J.~O. (2018).
\newblock Operations research in intensive care unit management: a literature review.
\newblock {\em Health Care Management Science}, 21(1):1--24.

\bibitem[Becker and Wu, 2007]{becker_synthesis_2007}
Becker, B.~J. and Wu, M.-J. (2007).
\newblock The synthesis of regression slopes in meta-analysis.
\newblock {\em Statistical Science}, 22(3):414--429.

\bibitem[Claggett et~al., 2014]{claggett_meta-analysis_2014}
Claggett, B., Xie, M., and Tian, L. (2014).
\newblock Meta-analysis with fixed, unknown, study-specific parameters.
\newblock {\em Journal of the American Statistical Association}, 109(508):1660--1671.

\bibitem[Cochran, 1954]{cochran_combination_1954}
Cochran, W.~G. (1954).
\newblock The combination of estimates from different experiments.
\newblock {\em Biometrics}, 10(1):101--129.

\bibitem[Dersimonian and Laird, 1986]{dersimonian_meta-analysis_1986}
Dersimonian, R. and Laird, N. (1986).
\newblock Meta-analysis in clinical trials.
\newblock {\em Controlled Clinical Trials}, 7:177--188.

\bibitem[Eysenck, 1978]{eysenck_exercise_1978}
Eysenck, H.~J. (1978).
\newblock An exercise in mega-silliness.
\newblock {\em The American Psychologist}, page 517.

\bibitem[Firth, 1993]{firth_bias_1993}
Firth, D. (1993).
\newblock Bias reduction of maximum likelihood estimates.
\newblock {\em Biometrika}, 80(1):27--38.

\bibitem[Frisch and Waugh, 1933]{frisch_partial_1933}
Frisch, R. and Waugh, F.~V. (1933).
\newblock Partial time regressions as compared with individual trends.
\newblock {\em Econometrica}, 1(4):387--401.

\bibitem[Glass, 1976]{glass1976}
Glass, G.~V. (1976).
\newblock Primary, secondary, and meta-analysis of research.
\newblock {\em Educational Researcher}, 5(10):3--8.

\bibitem[Gurevitch et~al., 2018]{gurevitch_meta-analysis_2018}
Gurevitch, J., Koricheva, J., Nakagawa, S., and Stewart, G. (2018).
\newblock Meta-analysis and the science of research synthesis.
\newblock {\em Nature}, 555(7695):175--182.

\bibitem[Han et~al., 2025]{han_federated_2025}
Han, L., Hou, J., Cho, K., Duan, R., and Cai, T. (2025).
\newblock Federated adaptive causal estimation ({FACE}) of target treatment effects.
\newblock {\em Journal of the American Statistical Association}, 120(551):1503--1516.

\bibitem[Hector and Martin, 2024]{hector_turning_2024}
Hector, E.~C. and Martin, R. (2024).
\newblock Turning the information-sharing dial: efficient inference from different data sources.
\newblock {\em Electronic Journal of Statistics}, 18:2974--3020.

\bibitem[Hector et~al., 2024]{hector2024DataIntegration}
Hector, E.~C., Tang, L., Zhou, L., and Song, P.~X. (2024).
\newblock Data integration and model fusion in the bayesian and frequentist frameworks.
\newblock In Berger, J., Meng, X.-L., Reid, N., and Xie, M.-g., editors, {\em Handbook of {Bayesian}, {Fiducial}, and {Frequentist} {Inference}}, pages 238--263. Chapman and Hall/CRC.

\bibitem[Hedges, 1983]{hedges_random_1983}
Hedges, L.~V. (1983).
\newblock A random effects model for effect sizes.
\newblock {\em Psychological Bulletin}, 93(2):388--395.

\bibitem[Horn and Johnson, 1985]{horn_chapter_1985}
Horn, R.~A. and Johnson, C.~R. (1985).
\newblock Chapter 7: {P}ositive definite matrices.
\newblock In {\em Matrix {Analysis}}. Cambridge University Press, Cambridge.

\bibitem[James and Stein, 1961]{james_estimation_1961}
James, W. and Stein, C. (1961).
\newblock Estimation with quadratic loss.
\newblock In {\em Proceedings of the {Fourth} {Berkeley} {Symposium} on {Mathematical} {Statistics} and {Probability}, {Volume} 1: {Contributions} to the {Theory} of {Statistics}}, volume 4.1, pages 361--380. University of California Press.

\bibitem[Kullback and Leibler, 1951]{kullback_information_1951}
Kullback, S. and Leibler, R.~A. (1951).
\newblock On information and sufficiency.
\newblock {\em The Annals of Mathematical Statistics}, 22(1):79--86.

\bibitem[Lee and Thompson, 2008]{lee_flexible_2008}
Lee, K.~J. and Thompson, S.~G. (2008).
\newblock Flexible parametric models for random-effects distributions.
\newblock {\em Statistics in Medicine}, 27(3):418--434.

\bibitem[Lettink et~al., 2023]{lettink_two-dimensional_2023}
Lettink, A., Chinapaw, M., and van Wieringen, W.~N. (2023).
\newblock Two-dimensional fused targeted ridge regression for health indicator prediction from accelerometer data.
\newblock {\em Journal of the Royal Statistical Society, Series C}, 72:1064--1078.

\bibitem[Liu et~al., 2015]{liu_multivariate_2015}
Liu, D., Liu, R.~Y., and Xie, M. (2015).
\newblock Multivariate meta-analysis of heterogeneous studies using only summary statistics: Efficiency and robustness.
\newblock {\em Journal of the American Statistical Association}, 110(509):326--340.

\bibitem[Liu et~al., 2023]{liu_normality_2023}
Liu, Z., Al~Amer, F.~M., Xiao, M., Xu, C., Furuya-Kanamori, L., Hong, H., Siegel, L., and Lin, L. (2023).
\newblock The normality assumption on between-study random effects was questionable in a considerable number of {Cochrane} meta-analyses.
\newblock {\em BMC Medicine}, 21(1):112.

\bibitem[Lovell, 1963]{lovell_seasonal_1963}
Lovell, M.~C. (1963).
\newblock Seasonal adjustment of economic time series and multiple regression analysis.
\newblock {\em Journal of the American Statistical Association}, 58(304):993--1010.

\bibitem[Mehrabani and Ullah, 2020]{mehrabani_improved_2020}
Mehrabani, A. and Ullah, A. (2020).
\newblock Improved average estimation in seemingly unrelated regressions.
\newblock {\em Econometrics}, 8(2).

\bibitem[Moitra et~al., 2016]{moitra_relationship_2016}
Moitra, V.~K., Guerra, C., Linde-Zwirble, W.~T., and Wunsch, H. (2016).
\newblock Relationship between {ICU} length of stay and long-term mortality for elderly {ICU} survivors.
\newblock {\em Critical care medicine}, 44(4):655--662.

\bibitem[Noma et~al., 2022]{noma_meta-analysis_2022}
Noma, H., Nagashima, K., Kato, S., Teramukai, S., and Furukawa, T.~A. (2022).
\newblock Meta-analysis using flexible random-effects distribution models.
\newblock {\em Journal of Epidemiology}, 32(10):441--448.

\bibitem[Noma et~al., 2024]{noma_robust_2024}
Noma, H., Sugasawa, S., and Furukawa, T.~A. (2024).
\newblock Robust inference methods for meta-analysis involving influential outlying studies.
\newblock {\em Statistics in Medicine}, 43(20):3778--3791.

\bibitem[Pigott and Polanin, 2020]{pigott_methodological_2020}
Pigott, T.~D. and Polanin, J.~R. (2020).
\newblock Methodological guidance paper: High-quality meta-analysis in a systematic review.
\newblock {\em Review of Educational Research}, 90(1):24--46.

\bibitem[Pollard et~al., 2019]{eicu2019}
Pollard, T., Johnson, A., Raffa, J., Celi, L.~A., Badawi, O., and Mark, R. (2019).
\newblock {eICU} {Collaborative} {Research} {Database} (version 2.0).
\newblock PhysioNet.

\bibitem[Rapoport et~al., 2003]{rapoport_length_2003}
Rapoport, J., Teres, D., Zhao, Y., and Lemeshow, S. (2003).
\newblock Length of stay data as a guide to hospital economic performance for icu patients.
\newblock {\em Medical Care}, 41(3):386--397.

\bibitem[Rice et~al., 2018]{rice_re-evaluation_2018}
Rice, K., Higgins, J. P.~T., and Lumley, T. (2018).
\newblock A re-evaluation of fixed effect(s) meta-analysis.
\newblock {\em Journal of the Royal Statistical Society, Series A}, 181(1):205--227.

\bibitem[Shadish and Lecy, 2015]{shadish_meta-analytic_2015}
Shadish, W.~R. and Lecy, J.~D. (2015).
\newblock The meta-analytic big bang.
\newblock {\em Research Synthesis Methods}, 6(3):246--264.

\bibitem[Stein, 1981]{stein_estimation_1981}
Stein, C.~M. (1981).
\newblock Estimation of the mean of a multivariate normal distribution.
\newblock {\em The Annals of Statistics}, 9(6).

\bibitem[Verburg et~al., 2014]{verburg_comparison_2014}
Verburg, I. W.~M., Keizer, N. F.~d., Jonge, E.~d., and Peek, N. (2014).
\newblock Comparison of regression methods for modeling intensive care length of stay.
\newblock {\em PLOS ONE}, 9(10):e109684.

\bibitem[Wang et~al., 2023]{wang_robust_2023}
Wang, R., Wang, Q., and Miao, W. (2023).
\newblock A robust fusion-extraction procedure with summary statistics in the presence of biased sources.
\newblock {\em Biometrika}, 110(4):1023--1040.

\bibitem[Williams et~al., 2010]{williams_effect_2010}
Williams, T.~A., Ho, K.~M., Dobb, G.~J., Finn, J.~C., Knuiman, M., Webb, S. A.~R., and {on behalf of the Royal Perth Hospital ICU Data Linkage Group} (2010).
\newblock Effect of length of stay in intensive care unit on hospital and long-term mortality of critically ill adult patients.
\newblock {\em BJA: British Journal of Anaesthesia}, 104(4):459--464.

\bibitem[Wu and Yang, 2023]{wu_transfer_2023}
Wu, L. and Yang, S. (2023).
\newblock Transfer learning of individualized treatment rules from experimental to real-world data.
\newblock {\em Journal of Computational and Graphical Statistics}, 32(3):1036--1045.

\bibitem[Xie et~al., 2011]{xie_confidence_2011}
Xie, M., Singh, K., and Strawderman, W.~E. (2011).
\newblock Confidence distributions and a unifying framework for meta-analysis.
\newblock {\em Journal of the American Statistical Association}, 106(493):320--333.

\bibitem[Zangmo and Khwannimit, 2023]{zangmo_validating_2023}
Zangmo, K. and Khwannimit, B. (2023).
\newblock Validating the {APACHE} {IV} score in predicting length of stay in the intensive care unit among patients with sepsis.
\newblock {\em Scientific Reports}, 13:5899.

\bibitem[Zellner, 1962]{zellner1962}
Zellner, A. (1962).
\newblock An efficient method of estimating seemingly unrelated regressions and tests for aggregation bias.
\newblock {\em Journal of the American Statistical Association}, 57(298):348--368.

\bibitem[Zhan, 2002]{zhan_inequalities_2002}
Zhan, X. (2002).
\newblock Inequalities in the {L}\"{o}wner {P}artial {O}rder.
\newblock In {\em Matrix {I}nequalities}, volume 1790. Springer Berlin Heidelberg, Berlin, Heidelberg.

\bibitem[Zientek and Thompson, 2009]{zientek_matrix_2009}
Zientek, L.~R. and Thompson, B. (2009).
\newblock Matrix summaries improve research reports: Secondary analyses using published literature.
\newblock {\em Educational Researcher}, 38(5):343--352.

\bibitem[Zimmerman et~al., 2006]{zimmerman_acute_2006}
Zimmerman, J.~E., Kramer, A.~A., McNair, D.~S., and Malila, F.~M. (2006).
\newblock Acute {Physiology} and {Chronic} {Health} {Evaluation} ({APACHE}) {IV}: {Hospital} mortality assessment for today's critically ill patients.
\newblock {\em Critical Care Medicine}, 34(5):1297--1310.

\end{thebibliography}

\appendix
\newpage
\section{Derivation of estimators} \label{appsec:derivation}

In this section, we derive the estimators in equations \eqref{eqn:centroid} and \eqref{eqn:hbbeta_j} of the main manuscript, as well as those in Section \ref{subsec:est_subset}. We derive these estimators for the general case where study $j$ has $q_j$ covariates, $p$ of which are common across studies.

 In this context, for study $j$, denote the $n_j \times p$ matrix for the effects of interest as $\bX_j$ and denote the $n_j \times (q_j - p)$ matrix of control covariates as $\bZ_j$. Outcomes for study $j$ are modeled as 
 \begin{align}\label{eqn:partitioned_model2}
     \bY_j = \bX_j\bbeta_j + \bZ_j\bgamma_j + \epsilon_j,\quad \epsilon_j \sim N(\bzero,\sigma_j^2\bI_{n_j}).
 \end{align} 

 Let $\bK=\bone_k \otimes \bI_p \in \mathbb{R}^{pk\times p}$ matrix which stacks the vector to its right, and define $\bPi \text{block-diag} \{ \pi_1\bI_p, \ldots, \pi_k\bI_p \} \in \mathbb{R}^{pk\times pk}$. As in Section \ref{subsec:est_subset}, define $\bM_j = \bI_{n_j} - \bZ_j(\bZ_j^{\top}\bZ_j)^{-1}\bZ_j^{\top}$ when $q_j > p$, and $\bM_j=\bI_{n_j}$ when $q_j=p$. Let $\bY=(\bY_1, \ldots, \bY_k)$, $\bX=\text{block-diag}\{\bX_1, \ldots, \bX_k\}$, $\bM = \text{block-diag}\{\bM_1,\ldots, \bM_k\}$, and $\bSigma = \text{block-diag}\{\sigma_1^2\bI_{n_1},\ldots, \sigma^2_k\bI_{n_k}\}$. 

First, note that we can change the multiplication order for some sets of matrices:
\begin{align*}
\bPi\scaledXTMX &= \scaledXTMX\bPi\\
(\bI_{pk}-\bPi)\scaledXTMX &= \scaledXTMX(\bI_{pk}-\bPi).
\end{align*}
Next, as the objective function in equation \eqref{eqn:obj_func2} contains the generalized inverse for $\bM_j$, we can write
\begin{align*}
    (\bM_j\bY_j - \bM_j\bX_j\bbeta_j)^\top \bM_j^{+} &(\bM_j\bY_j - \bM_j\bX_j\bbeta_j) \\
    &= (\bY_j - \bX_j\bbeta_j)^{\top}\bM_j\bM_j^{\top}\bM_j(\bY_j - \bX_j\bbeta_j)\\
    &= (\bY_j - \bX_j\bbeta_j)^{\top}\bM_j(\bY_j - \bX_j\bbeta_j)\\
    &= (\bM_j\bY_j - \bM_j\bX_j\bbeta_j)^{\top}(\bM_j\bY_j - \bM_j\bX_j\bbeta_j).
\end{align*}
Using this, we maximize the objective function 
\begin{align*}
        O(\bbeta, \btheta; \bPi) =& -\frac{1}{2}(\bM\bY - \bM\bX\bbeta)^\prime\bSigma^{-1}(\bM\bY - \bM\bX\bbeta)\\ 
        &\quad - \frac{1}{2}(\bbeta - \bK\btheta)^\prime\bPi(\bI - \bPi)^{-1}\scaledXTMX(\bbeta - \bK\btheta), 
\end{align*}
with respect to $\btheta$ and $\bbeta$. This is the objective function in equation \eqref{eqn:obj_func} from Section \ref{sec:estimator} in matrix form. We begin by taking the partial derivative of the objective function with respect to $\btheta$:
\begin{align*}
    \nabla_{\btheta} O(\bbeta, \btheta; \blambda) &=
    -\frac{1}{2}\{-2\bK^\prime\bPi(\bI_{pk} - \bPi)^{-1}\scaledXTMX(\bbeta 
        - \bK\btheta)\}.
\end{align*}
The root of this gradient is given by
\begin{align} \label{eqn:theta-complicated}
 \hbtheta(\bpi) &= 
    \{\bK^\prime\bPi(\bI_{pk} - \bPi)^{-1}\scaledXTMX\bK\}^{-1}
    \bK^\prime\bPi(\bI_{pk} - \bPi)^{-1}\scaledXTMX\hbbeta(\bpi) \nonumber \\
 &= 
    \biggl\{\sum_{j=1}^d 
        \frac{\pi_j}  {(1-\pi_j)\sigma_j^2}\bX_{j}^\prime\bM_j\bX_j
    \biggl\}^{-1}
    \sum_{j=1}^d \frac{\pi_j}{(1-\pi_j)\sigma_j^2}\bX_j^\prime\bM_j\bX_j\hbbeta_j(\bpi).
\end{align}
We will show that this can be simplified after obtaining a closed form for $\hbbeta(\bpi)$, discussed next.

The next step is to derive the $\hbbeta(\bpi)$ that maximizes $O(\bbeta, \btheta; \blambda)$.  In the derivation below, we suppress the dependence of $\hbbeta(\bpi)$ and $\hbtheta(\bpi)$ on $\bpi$ for brevity, and write $\bI$ for $\bI_{pk}$.  The gradient of the objective function with respect to $\bbeta$ is
\begin{align*}
    \nabla_{\bbeta} O(\bbeta, \btheta; \bpi) &=
    -\frac{1}{2}\{-2\bX^\prime\bM\bSigma^{-1}(\bM\bY - \bM\bX\bbeta)\}\\&\quad 
    -\frac{1}{2}\{2\bPi(\bI - \bPi)^{-1}\scaledXTMX(\bbeta - \bK\btheta)\}.
\end{align*}
We again find the root of the gradient, denoted by  $\hbbeta$:
\begin{align*}
    \bzero &= \scaledXTMY - \scaledXTMX\hbbeta -\bPi(\bI - \bPi)^{-1}\scaledXTMX\hbbeta \\
    &\quad + \bPi(\bI - \bPi)^{-1}\scaledXTMX\bK\hbtheta .\end{align*}
Rearranging, we obtain
    \begin{align*}
    \scaledXTMY &= (\bI - \bPi)^{-1}\scaledXTMX\hbbeta   - \bPi(\bI - \bPi)^{-1}\scaledXTMX\bK\hbtheta, \end{align*}
or, equivalently,
    \begin{align*}
    \scaledXTMY &= (\bI-\bPi)^{-1}\scaledXTMX(\hbbeta - \bPi\bK\hbtheta).
\end{align*}
Plugging in the form of $\hbtheta$, the above can be rewritten as
\begin{align*}
    &\scaledXTMY \\
    &= (\bI-\bPi)^{-1}\scaledXTMX\\
    &\quad\cdot[\bI - \bPi\bK\{\bK^\prime\bPi(\bI - \bPi)^{-1}\scaledXTMX\bK\}^{-1}
        \bK^\prime\bPi(\bI - \bPi)^{-1}\scaledXTMX]\hbbeta.\end{align*}
Rearranging, the estimator of $\bbeta$ is
\begin{align} \label{eqn:beta-complicated}
    \hbbeta(\bpi) =& [\bI - \bPi\bK\{\bK^\prime\bPi(\bI - \bPi)^{-1}\scaledXTMX\bK\}^{-1}
        \bK^\prime\bPi(\bI - \bPi)^{-1}\scaledXTMX]^{-1} \nonumber \\
        & \cdot(\bI-\bPi)(\scaledXTMX)^{-1}\scaledXTMY.
\end{align}

We now proceed to simplify this expression. First, the Frisch-Waugh-Lovell theorem states that for a partitioned model like the one in equation \eqref{eqn:partitioned_model}, the MLE for $\bbeta_j$ is $\tbbeta_j = (\bX_j'\bM_j\bX_j)^{-1}\bX_j'\bM_j\bY_j$ \citep{frisch_partial_1933, lovell_seasonal_1963}. In our block-diagonalized notation, the MLE for $\bbeta$ is
\begin{align*}
\tbbeta  = (\scaledXTMX)^{-1}\scaledXTMY.
\end{align*}
Substituting this MLE into equation \eqref{eqn:beta-complicated} gives
\begin{align*}
    \hbbeta(\bpi) =& [\bI - \bPi\bK\{\bK^\prime\bPi(\bI - \bPi)^{-1}\scaledXTMX\bK\}^{-1}
        \bK^\prime\bPi(\bI - \bPi)^{-1}\scaledXTMX]^{-1}\\
        &\cdot(\bI-\bPi)\tbbeta.
\end{align*}
Next, we use the Woodbury identity to simplify the first inverse term of $\hbbeta(\bpi)$:
\begin{align*}
    &[\bI - \bPi\bK\{\bK^\prime\bPi(\bI - \bPi)^{-1}\scaledXTMX\bK\}^{-1}
        \bK^\prime\bPi(\bI - \bPi)^{-1}\scaledXTMX]^{-1} \\
    &= \bI + \bPi\bK  \{\bK^\prime\bPi(\bI - \bPi)^{-1}\scaledXTMX\bK - \bK^\prime\bPi^2(\bI - \bPi)^{-1}\scaledXTMX\bK\}^{-1}\\
    &\quad \cdot \bK^\prime\bPi(\bI - \bPi)^{-1}\scaledXTMX \\
    &= \bI + \bPi\bK\{\bK^\prime\bPi(\bI - \bPi)(\bI - \bPi)^{-1}\scaledXTMX\bK\}^{-1} \bK^\prime\bPi(\bI - \bPi)^{-1}\scaledXTMX \\
    &= \bI +\bPi\bK (\bK^\prime\bPi\scaledXTMX\bK)^{-1} \bK^\prime\bPi(\bI - \bPi)^{-1}\scaledXTMX .
\end{align*}
Plugging this into equation \eqref{eqn:beta-complicated} yields a simplified form of our estimator,
\begin{align}\label{eqn:beta-less-complicated}
    \hbbeta(\bpi) &= \{\bI +\bPi\bK (\bK^\prime\bPi\scaledXTMX\bK)^{-1} \bK^\prime\bPi(\bI - \bPi)^{-1}\scaledXTMX\} \nonumber \\
        &\quad \cdot (\bI-\bPi)\tbbeta \nonumber \\
    &= (\bI-\bPi)\tbbeta  + \bPi\bK(\bK^\prime\bPi\scaledXTMX\bK)^{-1} \bK^\prime\bPi\scaledXTMX\tbbeta.
\end{align}

Before simplifying this further, we turn our attention back to our estimator of the centroid $\btheta$. Plugging the form of $\hbbeta(\bpi)$ in equation \eqref{eqn:beta-less-complicated} into equation \eqref{eqn:theta-complicated}, we obtain
\begin{align*}
    \hbtheta(\bpi) &= \{\bK^\prime\bPi(\bI - \bPi)^{-1}\scaledXTMX\bK\}^{-1}
    \bK^\prime\bPi(\bI - \bPi)^{-1}\scaledXTMX\hbbeta(\bpi)\\
    &= \{\bK^\prime\bPi(\bI - \bPi)^{-1}\scaledXTMX\bK\}^{-1}
    \bK^\prime\bPi(\bI - \bPi)^{-1}\scaledXTMX\\
    &\quad \cdot \{(\bI-\bPi) + \bPi\bK(\bK^\prime\bPi\scaledXTMX\bK)^{-1} \bK^\prime\bPi\scaledXTMX\}\tbbeta.
\end{align*}
Denoting $\bPi^2 = \bPi ~ \bPi$, and using the commutative nature of $\bPi$, this can be simplified as
\begin{align*}
    \hbtheta(\bpi) &= \{\bK^\prime\bPi(\bI - \bPi)^{-1}\scaledXTMX\bK\}^{-1}\\
    &\quad \cdot
    \{\bK^\prime\bPi\scaledXTMX + 
    \bK^\prime\bPi^2(\bI - \bPi)^{-1}\scaledXTMX\bK\\
    &\quad \cdot(\bK^\prime\bPi\scaledXTMX\bK)^{-1} \bK^\prime\bPi\scaledXTMX\}\tbbeta \\
    &=  \{\bK^\prime\bPi(\bI - \bPi)^{-1}\scaledXTMX\bK\}^{-1}\\
    &\quad \cdot
    \{\bI + 
    \bK^\prime\bPi^2(\bI - \bPi)^{-1}\scaledXTMX\bK(\bK^\prime\bPi\scaledXTMX\bK)^{-1} \} \\
    &\quad \cdot\bK^\prime\bPi\scaledXTMX\tbbeta .
\end{align*}
We focus next on the term on the second-to-last line. In the first step, write \[
\bI = \bK^\prime\bPi\scaledXTMX\bK(\bK^\prime\bPi\scaledXTMX\bK)^{-1},
\] and then factor, pulling the inverse term out to the right.
\begin{align*}
    &\bI + 
    \bK^\prime\bPi^2(\bI - \bPi)^{-1}\scaledXTMX\bK(\bK^\prime\bPi\scaledXTMX\bK)^{-1} \\
    &= \{\bK^\prime\bPi\scaledXTMX\bK + \bK^\prime\bPi^2(\bI-\bPi)^{-1}\scaledXTMX\bK\} \\
    &\quad \cdot(\bK^\prime\bPi\scaledXTMX\bK)^{-1}
\end{align*}

To further simplify, we multiply by $(\bI-\bPi)(\bI-\bPi)^{-1}$ within the first term. This allows us to group the first and second terms together, and the $\bPi^2$ term cancels out.
\begin{align*}
    &[\bK^\prime\{(\bPi - \bPi^2)(\bI - \bPi)^{-1} + \bPi^2(\bI - \bPi)^{-1}\}\scaledXTMX\bK]\\
    &\quad \cdot(\bK^\prime\bPi\scaledXTMX\bK)^{-1} \\
    &= \{\bK^\prime\bPi(\bI - \bPi)^{-1}\scaledXTMX\bK\}(\bK^\prime\bPi\scaledXTMX\bK)^{-1}.
\end{align*}
Finally, we substitute this term back into the expression for $\hbtheta(\bpi)$. After cancellation of terms, we obtain
\begin{align}\label{eqn:theta-simplified}
    \hbtheta(\bpi)  &=  \{\bK^\prime\bPi(\bI - \bPi)^{-1}\scaledXTMX\bK\}^{-1}
     \{\bK^\prime\bPi(\bI - \bPi)^{-1}\scaledXTMX\bK\} \nonumber\\ 
     &\quad \cdot (\bK^\prime\bPi\scaledXTMX\bK)^{-1} \bK^\prime\bPi\scaledXTMX\tbbeta \nonumber \\
     &= (\bK^\prime\bPi\scaledXTMX\bK)^{-1} \bK^\prime\bPi\scaledXTMX\tbbeta .
\end{align}
Equation \eqref{eqn:theta-simplified} reveals that $\hbbeta(\bpi)$ in equation \eqref{eqn:beta-less-complicated} can be written as a weighted average of the individual study MLEs and the estimated centroid:
\begin{align*}
\hbbeta(\bpi) &= (\bI-\bPi)\tbbeta + \bPi\bK\hbtheta(\bpi).
\end{align*}

\section{Derivation of MSE}  \label{appsec:derivation_mse}
In this section, we derive the mean squared error (MSE) of $\hbbeta(\bpi)$. Recall from equation \eqref{eqn:hbbeta_reexpr} in Section \ref{subsec:est_all} that we may write $\hbbeta(\bpi)$ in terms of the MLE $\tbbeta$ through
\begin{align*}
\hbbeta(\bpi) &= \tbbeta + \bPi\{\bK\bA(\bpi)-\bI_{pk}\}\tbbeta.
\end{align*}
We begin by rewriting the MSE of $\hbbeta(\bpi)$ in terms of the bias and the variance:
\begin{align*}
    \MSE\{\hbbeta(\bpi)\} &= \EV[\{\hbbeta(\bpi) - \bbeta\}^\prime \{\hbbeta(\bpi) - \bbeta\}]\\
    &=  \EV([\hbbeta(\bpi) - \EV\{\hbbeta(\bpi)\} + \EV\{\hbbeta(\bpi)\} - \bbeta]^\prime [\hbbeta(\bpi) -\EV\{\hbbeta(\bpi)\} + \EV\{\hbbeta(\bpi)\} - \bbeta])\\
    &= \EV([\hbbeta(\bpi) - \EV\{\hbbeta(\bpi)\}]^\prime[\hbbeta(\bpi) - \EV\{\hbbeta(\bpi)\}]) \\ 
    &\quad + 
    2\EV([\hbbeta(\bpi) - \EV\{\hbbeta(\bpi)\}]^\prime[\EV\{\hbbeta(\bpi)\} - \bbeta]) \\
    &\quad + \EV([\EV\{\hbbeta(\bpi)\} - \bbeta]^\prime[\EV\{\hbbeta(\bpi)\}-\bbeta]).
\end{align*}
The first term in the MSE expression above is a scalar, and thus is unchanged by the trace operator:
\begin{align} \label{eqn:MSE-first}
     \tr\{\EV([\hbbeta(\bpi) - \EV\{\hbbeta(\bpi)\}]^\prime[\hbbeta(\bpi) - \EV\{\hbbeta(\bpi)\}])\} &= \tr\{\EV([\hbbeta(\bpi) - \EV\{\hbbeta(\bpi)\}][\hbbeta(\bpi) - \EV\{\hbbeta(\bpi)\}]^\prime)\} \nonumber\\
     &= \tr[\Var\{\hbbeta(\bpi)\}].
\end{align} 
The second term in the MSE expression is $0$:
\begin{align} \label{eqn:MSE-second}
    \EV([\hbbeta(\bpi) - \EV\{\hbbeta(\bpi)\}]^\prime[\EV\{\hbbeta(\bpi)\} - \bbeta]) &=\EV\{\hbbeta(\bpi)\}^\prime[\EV\{\hbbeta(\bpi)\} - \bbeta] - \EV\{\hbbeta(\bpi)\}^\prime[\EV\{\hbbeta(\bpi)\} - \bbeta] \nonumber \\
    &= 0.
\end{align}
The third term in the MSE expression is the inner product of the bias of $\hbbeta(\bpi)$. Together with equations \eqref{eqn:MSE-first} and \eqref{eqn:MSE-second}, this allows us to rewrite the MSE of $\hbbeta(\bpi)$ as
\begin{align} \label{eqn:beta-MSE}
    \MSE\{\hbbeta(\bpi)\} &= \tr[\Var\{\hbbeta(\bpi)\}] 
        + \|\Bias\{\hbbeta(\bpi)\}\|^2.
\end{align}
 The variance of $\hbbeta(\bpi)$ is
\begin{align} \label{eqn:beta-var}
    \Var\{\hbbeta(\bpi)\} &= \Var(\tbbeta) + 2\Cov[\bPi\{\bK\bA(\bpi) - \bI_{pk}\}\tbbeta, \tbbeta] + \Var[\bPi\{\bK\bA(\bpi) - \bI_{pk}\}\tbbeta],
\end{align}
and the bias is
\begin{align} \label{eqn:beta-bias}
    \Bias\{\hbbeta(\bpi)\} &= \EV[\tbbeta + \bPi\{\bK\bA(\bpi) - \bI_{pk}\}\tbbeta] - \bbeta \nonumber \\
     &= \bPi\{\bK\bA(\bpi) - \bI_{pk}\}\bbeta.
\end{align}
Plugging equations \eqref{eqn:beta-var} and \eqref{eqn:beta-bias} into equation \eqref{eqn:beta-MSE}, the MSE of $\hbbeta(\bpi)$ and, equivalently, the MSE of $\hbbeta(c, \bpi_r)$, are 
\begin{equation}
\begin{split}
\label{eqn:general_mse}
    & \tr\{\Var(\tbbeta)\} 
    + \tr(\Var[\bPi\{\bK\hbtheta(\bpi) - \tbbeta\}])
    + 2\cdot\tr(\Cov[\tbbeta, \bPi\{\bK\hbtheta(\bpi) - \tbbeta\}])\\
    &\quad + \|\bPi\{\bK\bA(\bpi) - \bI_{pk}\}\bbeta\|^2.
\end{split}
\end{equation}

\section{Proof of Theorem \ref{existence_ma}} \label{appsec:imp_single_tuning}
In this proof, we demonstrate that there is a vector of $\bpi \neq \boldsymbol{0}$ such that $\MSE\{\hbbeta(\bpi)\}<\MSE(\tbbeta)$. We show that such a vector exists by restricting $\bpi$ to the subset where $\pi_j\equiv \pi$.

First, we write the MSE of $\hbbeta(\bpi)$, given in Equation \eqref{eqn:general_mse}, for the case where $\pi_j\equiv\pi$. To do so, first note that when all $\pi_j$ are equal, $\bA(\bpi) = (\bK^\prime\scaledXTX\bK)^{-1}\bK^\prime\scaledXTX$: $\bA(\bpi)$ is no longer a function of $\pi$ and we write $\bA(\bpi)=\bA$. Then, 
\begin{align*}
    \MSE\{\hbbeta(\pi)\} &=  \tr\{(\scaledXTX)^{-1}\} + 2\pi\cdot\tr\{(\bK\bA - \bI_{pk})(\scaledXTX)^{-1}\}\\&\quad + \pi^2\cdot\tr\{(\bK\bA - \bI_{pk})(\scaledXTX)^{-1}(\bK\bA - \bI_{pk})^\prime\} + \pi^2\cdot\|(\bK\bA - \bI_{pk})\bbeta\|^2.
\end{align*}

To select the $\pi$ that minimizes the MSE, we find the root of the derivative of the $\MSE\{\hbbeta(\pi)\}$ with respect to $\pi$:
\begin{align*}
    \frac{\partial}{\partial \pi} \MSE\{\hbbeta(\pi)\} &= 2\pi\cdot\|(\bK\bA - \bI_{pk})\bbeta\|^2 + 2\pi\cdot\tr\{(\bK\bA - \bI_{pk})(\scaledXTX)^{-1}(\bK\bA - \bI_{pk})^\prime\} \\
    &\quad + 2\cdot\tr\{(\bK\bA - \bI_{pk})(\scaledXTX)^{-1}\}.
\end{align*}
    The root $\pi^\star$ of this derivative satisfies
\begin{align*}
    0 &= 2\pi^{\star}\cdot\tr\{(\bK\bA - \bI_{pk})(\scaledXTX)^{-1}(\bK\bA - \bI_{pk})^\prime\} + 2\cdot\tr\{(\bK\bA - \bI_{pk})(\scaledXTX)^{-1}\}  \\
     &\quad +  2\pi^{\star}\cdot\|(\bK\bA - \bI_{pk})\bbeta\|^2.
\end{align*}
Rearranging, $\pi^\star$ is given by
\begin{align*}
    \pi^{\star} &= \frac{-\tr\{(\bK\bA - \bI_{pk})(\scaledXTX)^{-1}\}}{\tr\{(\bK\bA - \bI_{pk})(\scaledXTX)^{-1}(\bK\bA - \bI_{pk})^\prime\} + \|(\bK\bA - \bI_{pk})\bbeta\|^2}. 
\end{align*}
The second derivative of the MSE with respect to $\pi$ is 
\begin{align*}
    2\|(\bK\bA - \bI_{pk})\bbeta\|^2 + 2 \tr\{(\bK\bA - \bI_{pk})(\scaledXTX)^{-1}(\bK\bA - \bI_{pk})^\prime\} \geq 0,
\end{align*}
and so $\pi^{\star}$ minimizes the MSE.

The final step is to show that $\pi^{\star}>0$. The denominator for $\pi^{\star}$ is strictly positive, as it consists of the sum of the trace of a variance matrix and an inner product. Thus, we need only consider the numerator,
\begin{align*}
    -\tr\{(\bK\bA - \bI_{pk})(\scaledXTX)^{-1}\} &= 
    -\tr\{\bK(\bK^\prime \scaledXTX \bK)^{-1}\bK^\prime\} + \tr\{(\scaledXTX)^{-1}\}\\
    &= -\tr\{\bK^\prime\bK(\bK^\prime \scaledXTX \bK)^{-1}\} + \tr\{\bK^\prime(\scaledXTX)^{-1}\bK\} \\
    &= -k\cdot\tr\{(\bK^\prime \scaledXTX \bK)^{-1}\} + \tr\{\bK^\prime(\scaledXTX)^{-1}\bK\},
\end{align*}
where we used the fact that $\bK^\prime\bK = k\bI_p$ to factor out $k$ in the first term. The second term in the numerator is
\begin{align*}
    \tr\{\bK^\prime(\scaledXTX)^{-1}\bK\} &= \tr\{\sum_{j=1}^k (\scaledXTXj)^{-1}\} = \sum_{\ell=1}^p\sum_{j=1}^k (\scaledXTXj)^{-1}_{\ell \ell}\\
    &= \sum_{\ell=1}^{pk}(\scaledXTX)^{-1}_{\ell \ell} = \tr\{(\scaledXTX)^{-1}\}.
\end{align*}

We now show that, for $k \geq 2$, 
$\tr\{\bK^\prime(\scaledXTX)^{-1}\bK - k (\bK^\prime \scaledXTX \bK)^{-1}\}> 0$, which completes the proof.
\begin{proof}
It will suffice to show that the interior of the trace, 
\begin{align*}
    \bK^\prime(\scaledXTX)^{-1}\bK  - k (\bK^\prime \scaledXTX \bK)^{-1},
\end{align*}
is positive definite. Following the proof technique in \citet{zhan_inequalities_2002}, define a block matrix:
\begin{align*}
    \bD &= \begin{bmatrix}
        k^{-1}\bK^\prime\scaledXTX\bK & \bI_p\\
        \bI_p & k^{-1}\bK^\prime(\scaledXTX)^{-1}\bK
    \end{bmatrix}\\
    &=
    \begin{bmatrix}
        k^{-1}\bK^\prime\scaledXTX\bK & k^{-1}\bK^\prime\bK\\
        k^{-1}\bK^\prime\bK & k^{-1}\bK^\prime(\scaledXTX)^{-1}\bK
    \end{bmatrix}
\end{align*}
If $\bD$ is positive semi-definite, then $k^{-1}\bK^\prime(\scaledXTX)^{-1}\bK - (k^{-1}\bK^\prime\scaledXTX\bK)^{-1}$ is also positive semi-definite \citep{horn_chapter_1985}:
\begin{align*}
    \bK^\prime(\scaledXTX)^{-1}\bK &\succeq k^{2}(\bK^\prime\scaledXTX\bK)^{-1}
\end{align*}
This relationship would imply that 
\begin{align*}
\bK^\prime(\scaledXTX)^{-1}\bK \succ k(\bK^\prime\scaledXTX\bK)^{-1},
\end{align*}
for $k \geq 2$. Thus, we need to show that $\bD$ is positive semi-definite.

To see that $\bD$ is positive semi-definite, first note that $\scaledXTX = \sum_{\ell=1}^{pk}\xi_{\ell} \bu_{\ell} \bu_{\ell}^\prime$, where $u_{\ell}$ is the normalized eigenvector for the $\ell$th eigenvalue, $\xi_{\ell}$, of $\scaledXTX$. Further, as $\sum_{\ell=1}^{pk}\bu_{\ell} \bu_{\ell}^\prime = \bI_{pk}$, we can rewrite $\bD$ as
\begin{align*}
    \bD &= \begin{bmatrix}
    k^{-1}\bK^\prime\sum_{\ell=1}^k\xi_{\ell} \bu_{\ell}\bu_{\ell}^\prime \bK   & k^{-1}\bK^\prime\sum_{\ell=1}^k \bu_{\ell}\bu_{\ell}^\prime \bK \\
    k^{-1}\bK^\prime\sum_{\ell=1}^k  \bu_{\ell}\bu_{\ell}^\prime \bK & k^{-1}\bK^\prime\sum_{\ell=1}^k\xi_{\ell}^{-1} \bu_{\ell}\bu_{\ell}^\prime \bK  
    \end{bmatrix}\\
    &=\sum_{\ell = 1}^{pk} \biggl( \begin{bmatrix}
        \xi_{\ell} & 1 \\
        1 & \xi_{\ell}^{-1}
    \end{bmatrix} \otimes k^{-1}\bK^\prime \bu_{\ell} \bu_{\ell}^\prime\bK \biggr).
\end{align*}

We argue that the matrices to the left and right of the Kronecker product are positive semi-definite, and then show that the eigenvalues of a Kronecker product of positive semi-definite matrices are non-negative. The matrix to the left of the Kronecker product has eigenvalue $\xi_{\ell} + \xi_{\ell}^{-1}$, $\xi_{\ell}>0$, and so is positive semi-definite. The matrix to the right of the Kronecker product is diagonal with non-negative values along the diagonal, so it is positive semi-definite: 
\begin{align*}
    \bK^\prime\bu_{\ell}\bu_{\ell}^\prime\bK &= \bK^\prime\begin{bmatrix}
        u_{11}\\
        \vdots\\
        u_{kp}
    \end{bmatrix}\begin{bmatrix}
        u_{11}\\
        \vdots\\
        u_{kp}
    \end{bmatrix}^\prime\bK \\
    &= \begin{bmatrix}
        \sum_{i = 1}^k u_{i1}^2 & \dots & 0 \\
        \vdots & \ddots &\vdots \\
        0 & \dots & \sum_{i = 1}^k u_{ip}^2.
    \end{bmatrix}
\end{align*}
Now we show that the eigenvalues of a Kronecker product of matrices $\bA$ and $\bB$ are the product of the eigenvalues of $\bA$ and $\bB$. Let $\bx$ be an eigenvector for $\bA$ with corresponding eigenvalue $\lambda$ and $\by$ be an eigenvector for $\bB$ with corresponding eigenvalue $\mu$:
\begin{align*}
(\bA\otimes\bB)(\bx\otimes \by) &= (\bA\bx)\otimes(\bB\by)\\
&= \lambda\bx \otimes \mu \by = \lambda\mu(\bx\otimes \by)
\end{align*}
Then, the Kronecker product of two positive semi-definite matrices is also positive semi-definite, so $\bD$ is positive semi-definite and $\bK'(\scaledXTX)^{-1}\bK \succ k\times(\bK^\prime\scaledXTX\bK)^{-1}$. Consequently, 
\begin{align*}
    -k\tr\{(\bK^\prime \scaledXTX \bK)^{-1}\} + \tr\{\bK^\prime(\scaledXTX)^{-1}\bK\} > 0,
\end{align*}
and $\pi^{\star} > 0$. As a result, for any $\bbeta$ and $\scaledXTX$, there exists a $\hbbeta(\bpi)$ that has a smaller MSE than the MLE. When all $\pi_j\equiv\pi$, $\pi^{\star}$ minimizes the MSE.
\end{proof}

When $\pi_j\equiv\pi$, we can further show that $\MSE\{\hbbeta(\pi) \} < \MSE(\tbbeta)$ for $\pi \in (0,2\pi^{\star})$. Indeed, $\MSE\{\hbbeta(\pi)\}$ is a quadratic function of $\pi$ and is thus symmetric about its minimum at $\pi^{\star}$; since $\MSE\{\hbbeta(0)\} = \MSE(\tbbeta)$, we know that $\MSE\{\hbbeta(2\pi^{\star})\} = \MSE(\tbbeta)$. As a consequence, any $\pi \in (0,2\pi^{\star})$ yields an improvement in MSE over the MLE.

\section{Proof of Theorem \ref{optimal_scaling}} \label{appsec:optimal_scaling}
Define $\bK = (\bone_k \otimes \bI_p)$ and let $c\in [0, 1]$. The estimators are 
\begin{align*}
\hbtheta(\bpi_r) &= 
        \Bigl(\sum_{j=1}^k \pi_{r,j} \scaledXTXj\Bigr)^{-1} \sum_{j=1}^k\pi_{r,j}\scaledXTXj\tbbeta_j\\
\hbbeta(c, \bpi_r) &= (\bI_{pk} - c\bPi_r)\tbbeta - c\bPi_r \bK\hbtheta(\bpi),
\end{align*}
where at least one $\pi_{r,j}=1$

The MSE of $\hbbeta(c, \bpi)$ is 
    \begin{align*}
    & \tr\{\Var(\tbbeta)\} 
    + \tr(\Var[\bPi\{\bK\hbtheta(\bpi) - \tbbeta\}])
    + 2~\tr(\Cov[\tbbeta, \bPi\{\bK\hbtheta(\bpi) - \tbbeta\}])\\
    &\quad + \|\bPi\{\bK\bA(\bpi) - \bI_{pk}\}\bbeta\|^2\\
    &= \tr\{\Var(\tbbeta)\}  
    + 2c~\tr(\Cov[\tbbeta, \bPi_r\{\bK\bA(\bpi_r) - \bI_{pk}\}\tbbeta])\\
    &\quad + c^2~\tr(\Var[\bPi_r\{\bK\bA(\bpi_r) - \bI_{pk}\}\tbbeta]) + c^2~\|\bPi_r\{\bK\bA(\bpi_r) - \bI_{pk}\}\bbeta\|^2.
\end{align*}
The root of the derivative of this MSE with respect to $c$ is
\begin{align*}
    c^\star(\bpi_r) &= \frac{-\tr(\Cov[\tbbeta, \bPi_r\{\bK\bA(\bpi_r) - \bI_{pk}\}\tbbeta])}{\tr(\Var[\bPi_r\{\bK\bA(\bpi) - \bI_{pk}\}\tbbeta]) + \|\bPi_r\{\bK\bA(\bpi_r) - \bI_{pk}\}\bbeta\|^2}
\end{align*}
This will be a minimum, as the second derivative of the MSE with respect to $c$ is positive.

The denominator for $c^{\star}(\bpi_r)$ is always positive, as it is the sum of an inner product and the trace of a variance. Then, if $\tr(\Cov[\tbbeta, \bPi_r\{\bK\bA(\bpi_r) - \bI_{pk}\}\tbbeta])< 0$ for all constrained $\bpi_r$, every $\bpi_r$ offers an improvement over the MLE with sufficiently small scaling. To see the conditions under which this term will be strictly negative, we first rewrite it in terms of $\scaledXTX$.
\begin{align*}
    \tr&(\Cov[\tbbeta, \bPi_r\{\bK\bA(\bpi_r) - \bI_{pk}\}\tbbeta]) \\
    &= \tr(\Cov[\tbbeta, \bPi_r\{\bK(\bK^\prime\bPi_r\scaledXTX\bK)^{-1}\bK^\prime\bPi_r\scaledXTX - \bI_{pk}\}\tbbeta])\\
   &= \tr[\Var(\tbbeta)
   \{\scaledXTX\bPi_r\bK(\bK^\prime\bPi_r\scaledXTX\bK)^{-1}\bK^\prime - \bI_{pk}\}\bPi_r]
\end{align*}
The variance of $\tbbeta$ is $(\scaledXTX)^{-1}$. We substitute this term in:
\begin{align*}
   &= \tr[
   \{\bPi_r\bK(\bK^\prime\bPi_r\scaledXTX\bK)^{-1}\bK^\prime - (\scaledXTX)^{-1}\}\bPi_r]\\
   &= \tr\{
   \bK^\prime\bPi_r^2\bK(\bK^\prime\bPi_r\scaledXTX\bK)^{-1}\} - \tr\{\bPi_r(\scaledXTX)^{-1}\}
\end{align*}
The trace of $\bPi_r(\scaledXTX)^{-1}$ is equal to $\bK^\prime\bPi_r(\scaledXTX)^{-1}\bK$. Then, to show that the numerator of $c^{\star}(\bpi_r)$ is negative, we need to show
\[\bK^\prime\bPi_r(\scaledXTX)^{-1}\bK \succ \bK^\prime\bPi_r^2\bK(\bK^\prime\bPi_r\scaledXTX\bK)^{-1}.\]
This term will always be less than or equal to zero, no matter the value of $\bpi_r$ - it follows from the same logic that allowed us to say that $\sum_{j=1}^k (\scaledXTXj)^{-1} \succ k\times \{\sum_{j=1}^k (\scaledXTXj)^{-1}\}$ in Section \ref{appsec:imp_single_tuning}. First, write a matrix $M$:
\begin{align*}
    \bD = \begin{bmatrix}
        \bK^\prime\bPi_r\scaledXTX\bK(\bK^\prime\bPi_r\bK)^{-1} & \bI_p = \bK^\prime\bPi_r\bK(\bK^\prime\bPi_r\bK)^{-1}\\
        \bI_p= \bK^\prime\bPi_r\bK(\bK^\prime\bPi_r\bK)^{-1} & \bK^\prime\bPi_r(\scaledXTX)^{-1}\bK(\bK^\prime\bPi_r\bK)^{-1}
    \end{bmatrix} 
\end{align*}

As in the proof technique for \ref{existence_ma}, if $\bD$ is positive semi-definite, then
\begin{align*}
     \bK^\prime\bPi_r(\scaledXTX)^{-1}\bK(\bK^\prime\bPi_r\bK)^{-1} &
    - (\bK^\prime\bPi_r\bK)(\bK^\prime\bPi_r\scaledXTX\bK)^{-1} \succeq 0\\
     \bK^\prime\bPi_r(\scaledXTX)^{-1} &\succeq (\bK^\prime\bPi_r\bK)(\bK^\prime\bPi_r\scaledXTX\bK)^{-1} (\bK^\prime\bPi_r\bK)\\
     \bK^\prime\bPi_r(\scaledXTX)^{-1} &\succeq (\sum_{j=1}^k \pi_j)^2(\bK^\prime\bPi_r\scaledXTX\bK)^{-1} \\
    & \succ \sum_{j=1}^k \pi_j^2 (\bK^\prime\bPi_r\scaledXTX\bK)^{-1}
\end{align*}
The last ordering holds as long as at least two $\pi_j \notin \{0, 1\}$; that is, as long as $\hbbeta(c, \bpi_r)\neq \tbbeta$. To see that $\bD$ is positive definite, as before write $\scaledXTX = \sum_{k=1}^{pk}\xi_k u_k u_k^\prime$, where $u_k$ is the normalized eigenvector for the $k$th eigenvalue. The sum, $\sum_{k=1}^{pk}u_k u_k^\prime$ is the $pd \times pd$ identity matrix. Then rewrite $\bD$ as the following sum:
\begin{align*}
    \bD &= \sum_{k = 1}^{pk} \biggl( \begin{bmatrix}
        \xi_k & 1 \\
        1 & \xi_k^{-1}
    \end{bmatrix} \otimes \bK^\prime\bPi_r u_k u_k^\prime\bK(\bK^\prime\bPi_r\bK)^{-1} \biggr)
\end{align*}

This is the Kronecker product of two positive semi-definite matrices, so it is positive semi-definite.

\section{Proof of Corollary \ref{improvement_bpi}} \label{improvement_proof}

Consider the MSE of $\hbbeta(c, \bpi_r)$.
We want to find $c$ such that $\MSE(\tbbeta) - \MSE\{\hbbeta(c, \bpi_r)\}\geq 0$, i.e.
\begin{align*}
    0 & \leq \MSE(\tbbeta) - \MSE\{\hbbeta(c, \bpi_r)\} \\
    &=- c^2\times\tr(\Var[\bpi_r\{\bK\hbtheta(\bpi_r) - \tbbeta\}])
    - 2c\times\tr(\Cov[\tbbeta, \bpi_r\{\bK\hbtheta(\bpi_r) - \tbbeta\}])\\
    & \quad ~ - c^2\times\|\bpi_r\{\bK\btheta(\bpi_r) - \bbeta\}\|^2.
\end{align*}
Rearranging, we want to find $c$ such that
\begin{align*}
    0&\geq c\times\tr(\Var[\bpi_r\{\bK\hbtheta(\bpi_r) - \tbbeta\}])
     + 2\times\tr(\Cov[\tbbeta, \bpi_r\{\bK\hbtheta(\bpi_r) - \tbbeta\}])\\
    & \quad ~ + c \times\|\bpi_r\{\bK\btheta(\bpi_r) - \bbeta\}\|^2.
\end{align*}
The solution is the set
\begin{align*}
    &\Bigl\{c: c \leq \frac{-2 \tr(\Cov[\tbbeta, \bpi_r\{\bK\hbtheta(\bpi_r) - \tbbeta\}]) }
    {\tr(\Var[\bPi\{\bK\hbtheta(\bpi_r) - \tbbeta\}]) + \|\bpi_r\{\bK\btheta(\bpi_r) - \bbeta\}\|^2}\Bigr\} = \{c:c \leq 2c^\star\}.
\end{align*}
That is, for a given $\bpi = c(\bpi_r)\bpi_r$, $\MSE\{\hbbeta(c, \bpi_r)\}\leq \MSE(\tbbeta)$ if and only if $c \leq 2 c^\star(\bpi_r)$.

\section{Proof of Theorem \ref{consistency}} \label{appsec:consistency}

In this section, we show that the estimator $\hbbeta(c^\star, \bpi_r)$ is consistent under the assumptions of Theorem \ref{consistency}. The proof proceeds as follows. We first show the exact distribution for $\hbbeta(c, \bpi_r)$ with a general scaling factor $c$ and then demonstrate that the estimator's bias goes to zero as $n\to \infty$ for the optimal scaling $c^{\star}(\bpi_r)$. Finally, we show that the estimator's total variance also goes to zero as $n \to \infty$. We occasionally suppress $c^\star$'s dependence on $\bpi_r$ for notational brevity.

To begin, write $\hbbeta(c, \bpi_r)$ in terms of $\tbbeta$, as in equation \eqref{eqn:hbbeta_cscale}: 
\begin{align*}
    \hbbeta(c, \bpi) = [\bI_{pk} + c\bPi_r\{\bK\bA(\bpi_r) - \bI_{pk}\}]\tbbeta.
\end{align*}

As $\tbbeta\sim \mathcal{N} \{\bbeta, (\scaledXTX)^{-1}\}$, the estimator $\hbbeta(c, \bpi_r)$ is also normally distributed with mean 
\begin{align*}
&\bbeta +  c\bPi_r\{\bK \bA(\bpi_r) - \bI_{pk}\}\bbeta
\end{align*}
and variance
\begin{align*}
    [\bI_{pk} + 
        c\bPi_r\{\bK\bA(\bpi_r)-\bI_{pk}\}](\scaledXTX)^{-1} 
        [\bI_{pk} + 
        c\{\bK\bA(\bpi_r)-\bI_{pk}\}^\prime\bPi_r]
\end{align*}
We now argue that the bias goes to zero asymptotically.
The bias vector for $\hbbeta(c, \bpi_r)$ is $c\bPi_r\{\bK\bA(\bpi_r) - \bI_{pk}\}\bbeta$. We begin by showing that the matrix $\bK\bA(\bpi_r) - \bI_{pk}$ is asymptotically a constant for a fixed $\bpi_r$. Denote the scaled Gram matrices, $\bG_j = n_j^{-1} \scaledXTXj$. As assumed above, $\bG_j^{-1}\to \bQ_j^{-1}$, a positive definite matrix, as $n \to \infty$. The matrix $\bA(\bpi_r)$ can be re-expressed in terms of the scaled Gram matrices, 
\begin{align*}
\bA(\bpi_r) &= (\bK^\prime\bPi_r \scaledXTX\bK)^{-1}\bK^\prime\bPi_r\scaledXTX\\
&= \begin{bmatrix}
    (\sum_{j=1}^k \pi_{r,j}  a_j \bG_j)^{-1}\pi_{r,1} a_1 \bG_1 &\dots &(\sum_{j=1}^k \pi_{r,j} a_j \bG_j)^{-1}\pi_{r,k} a_k \bG_k 
\end{bmatrix}
\end{align*}

To simplify the notation, let $\bB_n(\bpi_r) = \bK\bA(\bpi_r) - \bI_{pk}$. We examine the behavior of $\bB_n(\bpi_r)$ as $n\to\infty$, for an arbitrary and fixed value of $\bpi_r$. By the continuous mapping theorem, $\bG_j \to \bQ_j$ and therefore, again by the continuous mapping theorem, $\sum_j \pi_{r,j} a_j \bG_j \to \sum_j \pi_{r,j} a_j \bQ_j$. 
As we are considering a fixed $\bpi_r$, we may take a weighted sum of the matrices $\bG_j$ and they will converge to the weighted sum of $\bQ_j$. Then, the inverse of the weighted sum of $\bG_j$ converges to the weighted sum of $\bQ_j$:
\begin{align*}
\Bigl(\sum_j \pi_{r,j} a_j \bG_j \Bigr)^{-1} \to \Bigl(\sum_j \pi_{r,j} a_j \bQ_j \Bigr)^{-1}.
\end{align*}
This implies that
\begin{align*}
&\Bigl(\sum_j \pi_{r,j} a_j \bG_j \Bigr)^{-1}\pi_{r, i} a_i \bG_i \to
\Bigl(\sum_j \pi_{r,j} a_j \bQ_j \Bigr)^{-1}\pi_{r, i} a_i \bQ_i,
\end{align*}
which again implies that
\begin{align*}
\Bigl(\sum_j \pi_{r, j} a_j \bG_j \Bigr)^{-1}\pi_{r, i} a_i \bG_i - \bI \to
\Bigl(\sum_j \pi_{r, j} a_j \bQ_j \Bigr)^{-1}\pi_{r, i} a_i \bQ_i  - \bI.
\end{align*}
The left-hand side of the above equation is a single sub-matrix on the diagonal of $\bB_n(\bpi_r)$:
\begin{align*}
    \bB_n(\bpi_r)&=\begin{bmatrix}
        (\sum_j \pi_{r, j} a_j \bG_j)^{-1}\pi_{r, 1} a_1 \bG_1 - \bI  & \dots &(\sum_j \pi_{r, j} a_j \bG_j)^{-1}\pi_{r, k} a_k \bG_k\\
        \vdots & \ddots & \vdots\\
        (\sum_j \pi_{r, j} a_j \bG_j)^{-1}\pi_{r, 1} a_1 \bG_1 & \dots & (\sum_j \pi_{r, j} a_j \bG_j)^{-1}\pi_{r, k} a_k \bG_k - \bI
    \end{bmatrix}.
\end{align*}
Each sub-matrix of $\bB_n(\bpi_r)$ converges to a constant matrix for an arbitrary fixed $\bpi_r$, so $\bB_n(\bpi_r) \to \bB(\bpi_r)$. Denote $\btheta_{n\to\infty} = \lim_{n \to \infty} \btheta_n$. Then, for arbitrary $\bpi_r$, 
\begin{align*}
    \Bias\{\hbbeta(c, \bpi_r)\} &= c\bPi_r\bB_n(\bpi_r)\bbeta  \to c\bPi_r\bB(\bpi_r)\bbeta = \begin{bmatrix}
        c\pi_{r,1}\{\btheta_{n\to\infty}(\bpi_r) - \bbeta_1\}\\
        \vdots \\
        c\pi_{r,k}\{\btheta_{n\to\infty}(\bpi_r) - \bbeta_k\}
    \end{bmatrix}
\end{align*}
We first consider the bias in the case with general scaling, $c$. The value of $\btheta_{n\to\infty}(\bpi_r)$ is finite, as its components are finite. Then, the bias goes to zero asymptotically if (i) $\bbeta_j = \bbeta_{j^\prime}$ for all studies $j$, $j^\prime$, (ii) $\pi_{r,j}\{\btheta_{n\to\infty}(\bpi_r) - \bbeta_j \}\to \bzero$ for every study $j$, or (iii) if $c\to 0$. For case (i), $\bbeta = \bK\bbeta_1$ (where $\bbeta_1$ is selected without loss of generality). Then,
\begin{align*}
    c\bPi_r\{\bK\bA(\bpi_r) - \bI_{pk}\}\bK\bbeta_1 &= c\bPi_r\{\bK\bA(\bpi_r)\bK - \bK\}\bbeta_1 .
\end{align*}
The bracketed term on the right-hand side is zero as $\bA(\bpi_r)\bK = \bI_p$, and so the bias is zero.

Now we consider cases (ii) and (iii), using the optimal scaling, $c^{\star}(\bpi_r)$ at an arbitrary $\bpi_r$. Recall that
\begin{align*}
    c^{\star}(\bpi_r) &= \frac{-\tr[\Cov\{\tbbeta, \bPi_r\bB_n(\bpi_r)\tbbeta\}]}{\tr[\Var\{\bPi_r\bB_n(\bpi_r)\tbbeta)\}] + \|\bPi_r\bB_n(\bpi_r)\bbeta\|^2}
    \\
    &= \frac{-\tr\{\bPi_r\bB_n(\bpi_r)(\scaledXTX)^{-1}\}}{\tr\{\bPi_r\bB_n(\bpi_r)(\scaledXTX)^{-1}\bB_n(\bpi_r)\bPi_r\} + \|\bPi_r\bB_n(\bpi_r)\bbeta\|^2}.
\end{align*}
We rescale trace terms in the numerator and denominator to put them more clearly in terms of $n$:
\begin{align*}
    c^{\star}(\bpi_r) &= \frac{-n^{-1}\tr\{\bPi_r\bB_n(\bpi_r)\bG^{-1}\}}{n^{-1}\tr\{\bPi_r\bB_n(\bpi_r)\bG^{-1}\bB_n(\bpi_r)\bPi_r\} + \|\bPi_r\bB_n(\bpi_r)\bbeta\|^2}.
\end{align*}
The traces contain only matrices that are finite as $n\to \infty$, so that
\begin{align*}
    n^{-1}\tr\{\bPi_r\bB_n(\bpi_r)\bG^{-1}\} &\to 0,\\
    n^{-1}\tr\{\bPi_r\bB_n(\bpi_r)\bG^{-1}\bB_n(\bpi_r)\bPi_r\}  &\to 0.
\end{align*}
Then, in case (iii) where the inner product of the bias does not go to zero, $c^{\star}=o(1)$, and $\hbbeta(c, \bpi_r)$ is asymptotically unbiased. In case (ii), $\|\bPi_r\bB_n(\bpi_r)\bbeta\|^2$ goes to zero, implying $\pi_{r,j}\{\btheta_{n\to\infty}(\bpi_r) - \bbeta_j\} \to \bzero$ for all $j$; this also yields asymptotic unbiasedness of $\hbbeta(c, \bpi_r)$. Because we consider these at an arbitrary $\bpi_r$, $c^\star(\bpi_r)$ is asymptotically unbiased for all values of $\bpi_r$, including the optimal choice $\bpi_r^{\star}$, which may vary in $n$. 

Finally, we show that the total variance goes to zero. The total variance of $\hbbeta(c, \bpi_r)$ is
\begin{align*}
    \tr[\Var\{\hbbeta(c, \bpi_r)\}] &= \tr[\{\bI + c\bPi_r\bB_n(\bpi_r) \}(\scaledXTX)^{-1} \{\bI + c\bB_n(\bpi_r)^\prime\bPi_r\}] \\
    &= \tr\{(\scaledXTX)^{-1}\} 
        + 2c \times\tr\{\bPi_r\bB_n(\bpi_r)(\scaledXTX)^{-1}\}  \\
    &\quad +
        c^2\tr\{\bPi_r\bB_n(\bpi_r)(\scaledXTX)^{-1}\bB_n(\bpi_r)^\prime\bPi_r\}\\
    &=  \frac{1}{n}\tr(\bG^{-1}) + \frac{2c}{n} \tr\{\bPi_r\bB_n(\bpi_r)\bG^{-1}\}  +
        \frac{c^2}{n}\tr\{\bPi_r\bB_n(\bpi_r)\bG^{-1}\bB_n(\bpi_r)^\prime\bPi_r\}.
\end{align*}
Each trace term converges to a finite scalar because $\bB_n(\bpi_r)\to \bB(\bpi_r)$. Then, as long as $c = o(\sqrt{n})$, the total variance converges to zero as $n\to\infty$. This condition is met for general $c$, as the $c$ is at most 1.

\section{Expected value of biased estimator of the MSE} \label{appsec:expected_BMSE}
Plugging in the MLE $\tbbeta$ for $\bbeta$ within the MSE for $\hbbeta(c, \bpi_r)$ yields a biased MSE, 
\begin{align*}
     \BMSE\{\hbbeta(c, \bpi_r)\} &=  c^2  \|\bPi_r\{\bK\bA(\bpi_r) - \bI_{pk}\}\tbbeta\|^2    +  c^2 \tr(\Var [\bPi_r\{\bK\bA(\bpi_r) - \bI_{pk}\}\tbbeta]) \nonumber\\
     &\quad + 2c\times \tr(\Cov[\tbbeta, \bPi_r\{\bK\bA(\bpi_r) - \bI_{pk}\}\tbbeta])  + \tr\{\Var(\tbbeta)\}.
\end{align*}
The expected value of this function is 
\begin{align*}
     \EV[\BMSE\{\hbbeta(c, \bpi_r)\}] &=  c^2  \EV[\|\bPi_r\{\bK\bA(\bpi_r) - \bI_{pk}\}\tbbeta\|^2]    +  c^2 \tr(\Var [\bPi_r\{\bK\bA(\bpi_r) - \bI_{pk}\}\tbbeta]) \nonumber\\
     &\quad + 2c \times\tr(\Cov[\tbbeta, \bPi_r\{\bK\bA(\bpi_r) - \bI_{pk}\}\tbbeta])  + \tr\{\Var(\tbbeta)\}.
\end{align*}

\section{Derivation for Unbiased MSE Estimator}\label{appsec:umse_derivation}
Our derivation follows \cite{stein_estimation_1981}. We first rewrite the MSE in terms of $\tbbeta$.
\begin{align*}
\MSE\{\hbbeta(c,\bpi_r)\} &= \EV\|\hbbeta(c,\bpi_r) - \bbeta\|^2\\
    &= \EV\|\hbbeta(c, \bpi_r) - \tbbeta + \tbbeta - \bbeta\|^2\\
    &= \EV\|\hbbeta(c, \bpi_r) -\tbbeta\|^2 + 2\{\hbbeta(c, \bpi_r) - \tbbeta\}^{\prime}(\tbbeta - \bbeta) + \tr\{\Var(\tbbeta)\}\\
    &= \EV\|\hbbeta(c, \bpi_r) - \tbbeta\|^2 + 2\tr[\Cov\{\tbbeta, \hbbeta(c, \bpi_r) - \tbbeta\}] + \tr\{\Var(\tbbeta)\}.
\end{align*}
The expression $\hbbeta(c, \bpi_r) - \tbbeta$ is equal to $c\bPi_r\{\bK\bA(\bpi_r) - \bI\}\tbbeta$.
\begin{align*}
    c^2\EV\|\bPi_r\{\bK\bA(\bpi_r) - \bI\}\tbbeta\|^2  + 2c~\tr(\Cov[\tbbeta, \bPi_r\{\bK\bA(\bpi_r) - \bI_{pk}\}\tbbeta])
     + \tr\{\Var(\tbbeta)\}.
\end{align*}
which yields the unbiased estimator for the MSE (UMSE),
\begin{align*}
  c^2~ \|\bPi_r\{\bK\bA(\bpi_r) - \bI_{pk}\}\tbbeta\|^2 
     + 2c~ \tr(\Cov[\tbbeta, \bPi_r\{\bK\bA(\bpi_r) - \bI_{pk}\}\tbbeta]) + \tr\{\Var(\tbbeta)\}.
\end{align*}

\section{Pseudo-MSE for Implementation} \label{appsec:pseudo_MSE_derivation}
In this section, we derive the objective function to estimate $\bpi_r$ and $c$, or equivalently, $\bpi$. We start from the adjustment to the derivative of the UMSE with respect to $c$, proposed in Section \ref{sec:implementation}. We refer to this shifted derivative as $f(c, \bpi_r)$, and it takes the form
\begin{align*}
    f(c, \bpi_r) &= \parDeriv{\UMSE\{\hbbeta(c, \bpi_r)\}}{c} + \frac{\partial^2\UMSE\{\hbbeta(c, \bpi_r)\}}{c^2} (\hc^{\star} - \tilde{c}).
\end{align*}

The $c$ that minimizes the UMSE in equation \eqref{eqn:umse} is 
\begin{align*}
    \hc^{\star} &= \frac{-\tr(\Cov[\tbbeta, \bPi_r\{\bK\bA(\bpi_r) - \bI\}\tbbeta])}{ \|\bPi_r\{\bK\bA(\bpi_r) - \bI\}\tbbeta\|^2 },
\end{align*}
and the minimizer of the biased MSE in equation \eqref{eqn:biased_mse} is
\begin{align*}
    \tilde{c} &=  \frac{-\tr(\Cov[\tbbeta, \bPi_r\{\bK\bA(\bpi_r) - \bI\}\tbbeta])}{ \tr[\Var\{\bPi_r\{\bK\bA(\bpi_r) - \bI\}\tbbeta)\}] + \|\bPi_r\{\bK\bA(\bpi_r) - \bI\}\tbbeta\|^2 }.
\end{align*}

To simplify the notation, denote $\bB(\bpi) = \bK\bA(\bpi_r) - \bI_{pk}$. Plugging the form of these minimizers into the function $f(c, \bpi_r)$,
\begin{align*}
     f(c,  \bpi_r)&=  \parDeriv{\UMSE\{\hbbeta(c, \bpi_r)\}}{c} + \frac{\partial^2\UMSE\{\hbbeta(c, \bpi_r)\}}{c^2}\\
     &\qquad \Bigl(\frac{-\tr\{\Cov(\tbbeta, \bPi_r\bB\tbbeta)\}}{ 
     \|\bPi_r \bB\tbbeta\|^2 }  + \frac{\tr\{\Cov(\tbbeta, \bPi_r\bB\tbbeta)\}}{ \tr\{\Var(\bPi_r\bB\tbbeta)\} + \|\bPi_r \bB\tbbeta\|^2 }\Bigr)  \\
     &=   \parDeriv{\UMSE\{\hbbeta(c, \bpi_r)\}}{c} + \frac{\partial^2\UMSE\{\hbbeta(c, \bpi_r)\}}{c^2} \frac{-\tr\{\Var(\bPi_r\bB\tbbeta)\}\tr\{\Cov(\tbbeta, \bPi_r\bB\tbbeta)\}}{  \|\bPi_r\bB\tbbeta\|^2 [\tr\{\Var(\bPi_r\bB\tbbeta)\} + \|\bPi_r\bB\tbbeta\|^2]}.
\end{align*}
We substitute in the first and second derivatives for the UMSE with respect to $c$:
\begin{align*}
    f(c,\bpi) &=   2c \|\bPi_r\bB\tbbeta\|^2 + 2\tr(\Cov[\tbbeta, \bPi_r\bB\tbbeta])  \\ &\qquad +  2\|\bPi_r\bB\tbbeta\|^2  \frac{-\tr\{\Var(\bPi_r\bB\tbbeta)\}\tr\{\Cov(\tbbeta, \bPi_r\bB\tbbeta)\}}{  \|\bPi_r\bB\tbbeta\|^2 [\tr\{\Var(\bPi_r\bB\tbbeta)\} + \|\bPi_r\bB\tbbeta\|^2]}\\
    &=   2c \|\bPi_r\bB\tbbeta\|^2 + 2\tr(\Cov[\tbbeta, \bPi_r\bB\tbbeta]) -  2 \frac{\tr\{\Var(\bPi_r\bB\tbbeta)\}\tr\{\Cov(\tbbeta, \bPi_r\bB\tbbeta)\}}{  \tr\{\Var(\bPi_r\bB\tbbeta)\} + \|\bPi_r\bB\tbbeta\|^2}\\
        &=   2c \|\bPi_r\bB\tbbeta\|^2  +  2 \frac{\|\bPi_r\bB\tbbeta\|^2\tr\{\Cov(\tbbeta, \bPi_r\bB\tbbeta)\}}{  \tr\{\Var(\bPi_r\bB\tbbeta)\} + \|\bPi_r\bB\tbbeta\|^2}.
\end{align*}
Plugging back in the form of $\bB$, the shifted derivative is therefore
\begin{align*}
    f(c, \bpi_r) &=   2c \|\bPi_r\{\bK\bA(\bpi_r) - \bI_{pk}\}\tbbeta\|^2 \\
    &\quad +  2 \frac{\|\bPi_r\{\bK\bA(\bpi_r) - \bI_{pk}\}\tbbeta\|^2\tr\{\Cov(\tbbeta, \bPi_r\{\bK\bA(\bpi_r) - \bI_{pk}\}\tbbeta)\}}{  \tr(\Var[\bPi_r\{\bK\bA(\bpi_r) - \bI_{pk}\}\tbbeta]) + \|\bPi_r\{\bK\bA(\bpi_r) - \bI_{pk}\}\tbbeta\|^2}.
\end{align*}
The $c$ at which this function is $0$ is $\tilde{c}$, the minimizer for the biased MSE in equation \eqref{eqn:biased_mse}, but the vector $\bpi_r$ need not be the same as that which minimizes the biased MSE.

We next construct an objective function whose minimum is given by the root of $f(c, \bpi_r)$. Integrating $f(c, \bpi_r)$ over $c$, an objective function is given by
\begin{align*}
   & c^2  \|\bPi_r\{\bK\bA(\bpi_r) - \bI_{pk}\}\tbbeta\|^2 \\
    &\quad +  2c \frac{\|\bPi_r\{\bK\bA(\bpi_r) - \bI_{pk}\}\tbbeta\|^2\tr\{\Cov(\tbbeta, \bPi_r\{\bK\bA(\bpi_r) - \bI_{pk}\}\tbbeta)\}}{  \tr(\Var[\bPi_r\{\bK\bA(\bpi_r) - \bI_{pk}\}\tbbeta]) + \|\bPi_r\{\bK\bA(\bpi_r) - \bI_{pk}\}\tbbeta\|^2}.
\end{align*}

Finally, in practice, to avoid constraints on the shrinkage parameters, we minimize the objective function
\begin{align*}
    O(\bpi) &= \|\bPi\{\bK\bA(\bpi) - \bI_{pk}\}\tbbeta\|^2 \\
    &\quad +  2 \frac{\|\bPi\{\bK\bA(\bpi) - \bI_{pk}\}\tbbeta\|^2\tr\{\Cov(\tbbeta, \bPi\{\bK\bA(\bpi) - \bI_{pk}\}\tbbeta)\}}{  \tr(\Var[\bPi\{\bK\bA(\bpi) - \bI_{pk}\}\tbbeta]) + \|\bPi\{\bK\bA(\bpi) - \bI_{pk}\}\tbbeta\|^2},
\end{align*}
as a function of the unrestricted $\bpi$.

\section{MSE Estimator Simulation}\label{appsec:mini_sim}
We investigate the finite sample properties of the HAM estimator in Sections \ref{sec:sim} and \ref{sec:data}. Here, we conduct a small simulation study to illustrate the performance of $\hbbeta(\bpi)$ under different shrinkage parameter selection mechanisms. We select $\bpi$ by minimizing the true MSE, the UMSE, the BMSE, and the pseudo-MSE. For the minimization of the true MSE, we assume that $\bSigma$ is known, and for the remaining measures, $\bSigma$ is estimated. We compare the analytical MSE of $\hbbeta(\bpi)$ at each of these $\bpi$ vectors to the analytical MSE of the individual study MLEs, where the analytical MSEs are computed using equation \eqref{eqn:mse_with_c}. 
We consider $k=3$ studies with $p=4$ covariates. We generate the $\bbeta_j$'s from a multivariate normal distribution with shared mean $\bbeta_m$ and covariance matrix $0.1\bI_4$, and round to two significant digits. We consider four combinations of sample sizes $(n_1,n_2,n_3)$: ($100$, $100$, $100$), ($100$, $100$, $500$), ($100$, $500$, $500$) and ($500$, $500$, $500$). Covariates $\bX_j$ are sampled following $\bX_{j1}=\bone$, $\bX_{j2}\sim \mathcal{N}(\bzero, \bI_{n_j}), \bX_{j3}\sim \text{Bin}(n_j, 0.5), \bX_{j4} = \bX_{2j} \bX_{3j}$. The parameters $\bbeta_j$ and covariates $\bX_j$ are held constant across Monte Carlo replicates. Within each Monte Carlo replicate, we generate errors $\bepsilon_{j} \sim \mathcal{N}(\bzero, 0.25\bI_{n_j})$. 

\begin{table}[ht]
\centering
\resizebox{\linewidth}{!}{
\begin{tabular}{l|r|rrr|rrr|rrr|rrr}
  \hline
 &   MLE & \multicolumn{3}{|c|}{MSE} & \multicolumn{3}{|c|}{UMSE} & 
 \multicolumn{3}{|c|}{Biased MSE} &  \multicolumn{3}{|c}{Pseudo-MSE} \\
 $n$ &  MSE & MSE & PL & Ratio & MSE & PL & Ratio & MSE & PL & Ratio & MSE & PL & Ratio \\ 
  \hline
 100, 100, 100 & 9.62 & 8.83 & 0 & 0.92 & 8.96 & 15.7 & 0.93 & 8.93 & 4.8 & 0.93 & 8.92 & 3.2 & 0.93 \\ 
 100, 100, 500 & 6.76 & 5.83 & 0 & 0.86 & 5.92 & 11.4 & 0.88 & 5.91 & 1.8 & 0.87 & 5.90 & 1.3 & 0.87 \\ 
 100, 500, 500 & 4.78 & 4.14 & 0 & 0.87 & 4.19 & 8.6 & 0.88 & 4.19 & 2.2 & 0.88 & 4.19 & 2.8 & 0.88 \\ 
 500, 500, 500 & 1.75 & 1.72 & 0 & 0.98 & 1.72 & 1.7 & 0.98 & 1.72 & 0.7 & 0.98 & 1.72 & 0.7 & 0.98 \\ 
 All & -- & -- & 0 & 0.89 & -- & 9.35 & 0.93 & -- & 2.38 & 0.92 & -- & 2.00 & 0.92 \\ 
   \hline
\end{tabular}}
\caption{``MSE'' columns give the median MSE for $\hbbeta(\bpi)$ multiplied by 100; ``PL'' columns show the percent of replicates in which $\hbbeta(\bpi)$ has a larger MSE than $\tbbeta$; and ``Ratio'' columns give the median for the ratio of $\hbbeta(\bpi)$'s MSE to $\tbbeta$'s MSE.}
\label{tab:sim_study0}
\end{table}

Table \ref{tab:sim_study0} displays the median analytical MSE across $1000$ Monte Carlo replicates at each sample size setting. As the theory guarantees, the optimal $\hbbeta(\bpi^{\star})$ with $\bpi^{\star}$ chosen to minimize the true MSE has smaller MSE than the MLE in all instances. Across all settings, choosing $\bpi$ to minimize the UMSE yields an estimator that has larger MSE than the MLE in 9.35\% of the $1000$ Monte Carlo replicates. In contrast, minimizing BMSE and our proposed pseudo-MSE yields estimators that have larger MSE than the MLE in 2.38\% and 2.0\% of the $1000$ Monte Carlo replicates, respectively. Minimizing the pseudo-MSE is at its most beneficial when all studies have a sample size of 100: in this case, the UMSE-based estimator has a larger MSE than the MLE in 15.7\% of the replicates.

\section{Ridge-like estimator} \label{appsec:ridgelike}

In the simulation studies, we demonstrate the advantages of shrinking towards a centroid distribution by comparing the heterogeneity-adaptive estimator $\hbbeta(\bpi_{\ham})$ to an estimator that uses an L2 penalty to shrink the estimators towards each other. We refer to this as the ``ridge-like'' estimator. Let $\bbeta = (\bbeta_1^\prime, \dots, \bbeta_k^\prime )^\prime \in \mathbb{R}^{pk}$. The ridge-like estimator maximizes 
\[
O(\bbeta, \lambda) = \sum_{j=1}^k \log f_j(\bY_j; \bbeta_j,\gamma_j) - \lambda\sum_{j\neq j^\prime}^k\|\bbeta_j - \bbeta_{j^\prime}\|^2\].

We re-express $\sum_{j\neq j^\prime}\|\bbeta_j - \bbeta_{j^\prime}\|^2$ as follows. Let $\bC$ be a $k\times {k\choose 2}$ matrix containing every combination of contrasts:
\[ \bC = \begin{bmatrix}
1 & 1 &\dots & 0\\
-1 & 0 &\dots & 0\\
0 & -1 & \dots & 0\\
\vdots & \vdots & \vdots & \vdots \\
0 & 0 &\dots & 1\\
0 & 0 & \dots & -1
\end{bmatrix}.\] 
The sum of pairwise parameter differences can be rewritten as
\begin{align*}
  \sum_{j\neq j^\prime}^k\|\bbeta_j - \bbeta_{j^\prime}\|^2 &=
    \bbeta'(\bC\otimes\bI_p)
    (\bC\otimes\bI_p)'\bbeta\\
    &= \bbeta'(\bC\bC' \otimes \bI_p)\bbeta.
\end{align*}
The objective function is therefore
\begin{align*}
O(\bbeta) = -\frac{1}{2}(\bY - \bX\bbeta)'\Sigma^{-1}(\bY - \bX\bbeta) - \lambda\bbeta'(\bC\bC'\otimes \bI_p)\bbeta.
\end{align*}

We maximize the objective function with respect to $\bbeta$ by finding the root of its derivative,
\begin{align*}
\nabla_{\bbeta} O(\bbeta, \lambda)' &= 
    -\frac{1}{2}(-2\bX'\bSigma^{-1}\bY + 2\scaledXTX\bbeta) 
    - 2\lambda (\bC\bC'\otimes \boldsymbol{I}_p) \bbeta.
\end{align*}
The root is the solution to the equation
\begin{align*}
0 &= 
    -\bX'\bSigma^{-1}\bY + \scaledXTX\hbbeta + 2\lambda (\bC\bC'\otimes \boldsymbol{I}_p)\hbbeta.
\end{align*}
Rearranging, we obtain the estimator
\begin{align*}
\hbbeta_{\ridge}(\lambda) &=
    \{\scaledXTX+2\lambda(\bC\bC'\otimes \bI_p)\}^{-1}\bX'\bSigma^{-1}\bY\\
    &=
    \{\scaledXTX+2\lambda(\bC\bC'\otimes \bI_p)\}^{-1}\scaledXTX\tbbeta = \bR(\lambda)\tbbeta.
\end{align*}

This estimator resembles the traditional ridge estimator, but instead of constraining the length of $\bbeta$ it constrains the sum of pairwise distances between combination of $\bbeta_j$s. We define a matrix $\bR(\lambda) =  \{\scaledXTX+2\lambda(\bC\bC'\otimes \bI_p)\}^{-1}\scaledXTX$ for the extent of this discussion, to simplify the notation.

To have a fair comparison to our estimator, we need to find the $\lambda$ that minimizes the MSE. The bias component is
\begin{align*}
    \Bias\{\hbbeta_{\ridge}(\lambda)\} &= \mathbb{E}\{\hbbeta_{\ridge}(\lambda)\} - \bbeta = \{\bR(\lambda)  - \bI_{pk}\}\bbeta.
\end{align*}
The variance component is 
\begin{align*}
    \Var\{\hbbeta_{\ridge}(\lambda)\} &= \bR(\lambda)\times \Var(\tbbeta)\times \bR^\prime(\lambda) = \bR(\lambda)(\scaledXTX)^{-1}\bR'(\lambda).
\end{align*}
The MSE is thus
\begin{align*}
    \MSE\{\hbbeta_{\ridge}(\lambda)\} &= 
        \tr[\Var\{\hbbeta_{\ridge}(\lambda)\}] + \|\Bias\{\hbbeta_{\ridge}(\lambda)\}\|^2\\
     &= \|(\bR(\lambda)  - \bI_{pk})\bbeta\|^2 +\tr[\Var\{\hbbeta_{\ridge}(\lambda)\}]. 
\end{align*}

As we do not know $\bbeta$, we minimize the unbiased estimator of the MSE using the identity
\begin{align*}
    \|\{\bR(\lambda)  - \bI_{pk}\}\bbeta\|^2 &= \EV[\|\{\bR(\lambda)  - \bI_{pk}\}\tbbeta\|^2] -\tr(\Var[\{\bR(\lambda)  - \bI_{pk}\}\tbbeta ]) \\
    &=\EV[\|\{\bR(\lambda)  - \bI_{pk}\}\tbbeta\|^2]  - \tr[\Var\{\hbbeta_{\ridge}(\lambda)\}] + 2\tr[\Cov\{\hbbeta_{\ridge}(\lambda),\tbbeta\}] - \tr\{\Var(\tbbeta)\}\\
      &= \EV[\|\{\bR(\lambda)  - \bI_{pk}\}\tbbeta\|^2]  - \tr[\Var\{\hbbeta_{\ridge}(\lambda)\}] + 2\tr[\Cov\{\bR(\lambda)\tbbeta,\tbbeta\}] - \tr\{\Var(\tbbeta)\}\\
    &= \EV[\|\{\bR(\lambda)  - \bI_{pk}\}\tbbeta\|^2]   - \tr[\Var\{\hbbeta_{\ridge}(\lambda)\}] + \tr[\{2\bR(\lambda)-\bI_{pk}\}\Var(\tbbeta)] .
\end{align*}
We may then rewrite the MSE for the ridge-like estimator as
\begin{align*}
   \EV\{\|(\bR  - \bI_{pk})\tbbeta\|^2\} + \tr[\{2\bR(\lambda)-\bI_{pk}\}(\scaledXTX)^{-1}].
\end{align*}
Thus, an unbiased estimator of the MSE of $\hbbeta_{\ridge}(\lambda)$ is 
\begin{align}
    \|\{\bR(\lambda)  - \bI_{pk}\}\tbbeta\|^2+ \tr[\{2\bR(\lambda)-\bI_{pk}\}(\scaledXTX)^{-1}].
\end{align}
We select $\lambda$ to minimize this estimator. 

In the setting of Section \ref{subsec:est_subset} where we conduct meta-analysis only on a subset of covariates, the ridge-like estimator minimizes a modified objective function,
\begin{align*}
O(\bbeta) = -\frac{1}{2}(\bM\bY - \bM\bX\bbeta)'\Sigma^{-1}(\bM\bY - \bM\bX\bbeta) - \lambda\bbeta'(\bC\bC'\otimes \bI_p)\bbeta,
\end{align*}
where $\bM$ is the block-diagonal matrix of projection matrices $\bM_j = \bI - \bZ(\bZ^\prime\bZ)^{-1}\bZ^\prime$ defined in \ref{subsec:est_subset}. This yields the estimator
\begin{align*}
\hbbeta_{\ridge,\sub}(\lambda) &=
    \{\scaledXTMX+2\lambda(\bC\bC'\otimes \bI_p)\}^{-1}\scaledXTMY\\
    &=
    \{\scaledXTMX+2\lambda(\bC\bC'\otimes \bI_p)\}^{-1}\scaledXTMX\tbbeta
\end{align*}

\section{Additional numerical support}
In simulation setting 1, we hold $\bbeta$ across Monte Carlo replicates while varying the study sample size. Table \ref{tab:setting1_parm_values} displays the parameter values at each level of $p$.
\begin{table}[ht!]
\centering
\begin{tabular}{rrrrr} 
$p$ & Parameter Index & Study 1 & Study 2 & Study 3 \\ 
  \hline
2 & 1 & 3.53 & 3.37 & 3.27 \\ 
  & 2 & 4.50 & 4.49 & 4.52 \\ \hline
  4 & 1 & 3.37 & 3.18 & 3.59 \\ 
   & 2 & 4.49 & 4.53 & 4.30 \\ 
    & 3 & 2.17 & 2.16 & 2.41 \\ 
    & 4 & 0.14 & $-$0.13 & 0.26 \\ \hline
  10 & 1 & 3.26 & 3.24 & 3.42 \\ 
   & 2 & 4.26 & 4.24 & 4.36 \\ 
   & 3 & 2.48 & 2.09 & 2.30 \\ 
   & 4 & $-$0.08 & 0.18 & 0.05 \\ 
   & 5 & $-$0.92 & $-$1.18 & $-$0.85 \\ 
   & 6 & 0.15 & 0.22 & $-$0.01 \\ 
   & 7 & $-$2.97 & $-$2.93 & $-$2.95 \\ 
   & 8 & 0.46 & 0.43 & 0.46 \\ 
   & 9 & $-$4.49 & $-$4.64 & $-$4.52 \\ 
   & 10 & 0.66 & 0.65 & 0.69 \\ \hline
  20 & 1 & 3.24 & 3.23 & 3.17 \\ 
   & 2 & 4.24 & 4.69 & 4.74 \\ 
   & 3 & 2.09 & 2.52 & 2.06 \\ 
   & 4 & 0.18 & 0.08 & 0.10 \\ 
   & 5 & $-$1.18 & $-$1.13 & $-$0.89 \\ 
   & 6 & 0.22 & 0.06 & 0.23 \\ 
   & 7 & $-$2.93 & $-$2.80 & $-$2.63 \\ 
   & 8 & 0.43 & 0.68 & 0.37 \\ 
   & 9 & $-$4.64 & $-$4.61 & $-$4.60 \\ 
   & 10 & 0.65 & 0.77 & 1.06 \\ 
   & 11 & $-$3.07 & $-$3.21 & $-$3.22 \\ 
   & 12 & $-$4.82 & $-$4.55 & $-$4.85 \\ 
   & 13 & 3.44 & 3.30 & 3.40 \\ 
   & 14 & $-$3.87 & $-$3.68 & $-$3.98 \\ 
   & 15 & 2.07 & 1.68 & 2.01 \\ 
   & 16 & 3.13 & 3.09 & 3.37 \\ 
   & 17 & $-$2.40 & $-$2.31 & $-$2.35 \\ 
   & 18 & $-$2.61 & $-$2.49 & $-$2.61 \\ 
   & 19 & 3.12 & 3.12 & 3.07 \\ 
   & 20 & $-$3.70 & $-$3.50 & $-$3.49 \\ 
   \hline
\end{tabular}
\caption{$\bbeta$ values for setting 1, for each number of covariates $p$.}
\label{tab:setting1_parm_values}
\end{table}

\section{eICU data access}\label{appsec:eicu_access}

Obtaining access to the eICU Collaborative Research Database requires credentialing on PhysioNet and completion of the required CITI training. To replicate the results we present here, one should download the patient, ApachePredVar, hospital, and ApachePatientResult tables. The data are structured with each patient allocated a different unit stay ID (patientunitstayid) when they transfer to a new unit, and each unit stay ID is allowed multiple results from the APACHE IV score calculation. Some patients do not receive an APACHE IV score calculation or any APACHE-related predictor variables. While the patient ID is ostensibly unique, we should conceive of each patient as a distinct `patient visit'; that is, an individual may be assigned multiple `uniquepid' if they are not recognized as the same person across visits.

We conduct the following data cleaning:
\begin{itemize}[itemsep=-0.25em, leftmargin=*]
    \item Keep the largest result for each unit stay ID in the Apache patient result table.
    \item Each patient visit in the patient table has multiple rows indicating transfer to a new unit. Keep the first instance of the patient in the hospital.
    \item Merge these de-duplicated patients and patient results with the APACHE predictor variables table and the hospital table.
    \item Keep only observations where the APACHE version is ``IVa'' and the APACHE score is not missing. Treat all APACHE scores of -1 as missing.
    \item Keep only the following variables: hospital id, gender, age, hospital discharge status, diastolic and systolic blood pressure, APACHE score, and actual ICU length of stay. Keep only observations that have data for all of these variables. Individuals aged 89 years and older will be removed from analysis, as these ages are censored.
\end{itemize}

\end{document}